\documentclass{IEEEtran}

\usepackage[english]{babel}
\usepackage{amsmath}
\usepackage{amssymb}
\usepackage{amsthm}
\usepackage{amstext}
\usepackage{graphicx}
\usepackage{float}
\usepackage{booktabs}
\usepackage{microtype}
\usepackage{algorithm}
\usepackage[noend]{algorithmic}
\usepackage{url}
\usepackage{cite}
\usepackage{color}
\usepackage{xcolor}
\usepackage{bm}
\UseRawInputEncoding
\addtocounter{MaxMatrixCols}{10}

\newtheorem{theorem}{Theorem}

\newtheorem{lemma}[theorem]{Lemma}
\newtheorem{corollary}[theorem]{Corollary}

\newtheorem{example}{Example}

\long\def\symbolfootnote[#1]#2{\begingroup
\def\thefootnote{\fnsymbol{footnote}}\footnote[#1]{#2}\endgroup}
\IEEEoverridecommandlockouts

\title{A Generalization of Array Codes with Local Properties and Efficient
Encoding/Decoding}

\author{Hanxu Hou,~\IEEEmembership{Member,~IEEE}, Yunghsiang S. Han,~\IEEEmembership{Fellow,~IEEE}, Patrick P. C. Lee,~\IEEEmembership{Senior Member,~IEEE}, You Wu, Guojun Han,~\IEEEmembership{Senior Member,~IEEE},
and Mario Blaum,~\IEEEmembership{Life Fellow,~IEEE}
\thanks{This paper was presented in part at the IEEE Global Communications
Conference (GLOBECOM), 2020 \cite{wu2020}.
H. Hou is with the School of Electrical Engineering \& Intelligentization, Dongguan University of Technology~(E-mail: houhanxu@163.com).
Y. S. Han is with the Shenzhen Institute for Advanced Study, University of Electronic Science and Technology of China~(E-mail: yunghsiang@gmail.com).
P. P. C. Lee is with Department of Computer Science and Engineering, The Chinese University of Hong Kong~(E-mail: pclee@cse.edu.cuhk.hk).
Y. Wu is with Beijing Didi Infinity Technology and Development Co., Ltd.~(E-mail: 278008313@qq.com)
G. Han is with the School of Information Engineering, Guangdong University of Technology~(E-mail: gjhan@gdut.edu.cn).
M. Blaum is with IBM Research Division-Almaden~(E-mail: mblaum@hotmail.com).
This work was partially supported by the National Key R\&D Program of
China (No. 2020YFA0712300), the National Natural Science Foundation of China (No. 62071121, 61871136),
Basic Research Enhancement Program of China under Grant 2021-JCJQ-JJ-0483 and
Research Grants Council of HKSAR (AoE/P-404/18).}}

\begin{document}



\maketitle

\begin{abstract}
An {\em $(n,k)$ recoverable property} array code is composed of $m\times
n$ arrays such that any $k$ out of $n$ columns suffice to retrieve all
the information symbols, where $n>k$. Note that maximum distance separable (MDS) array code is a
special $(n,k)$ recoverable property array code of size $m\times n$ with the number of
information symbols being $km$. Expanded-Blaum-Roth (EBR) codes and
Expanded-Independent-Parity (EIP) codes are two classes of $(n,k)$ recoverable property array codes
that can repair any one symbol in a column by locally accessing some other
symbols within the column, where the number of symbols $m$ in a column is a
prime number.  By generalizing the constructions of EBR
and EIP codes, we propose new $(n,k)$ recoverable property array
codes, such that any one symbol can be locally recovered and the number of
symbols in a column can be not only a prime number but also a power
of an odd prime number.  Also, we present an
efficient encoding/decoding method for the proposed generalized EBR (GEBR)
and generalized EIP (GEIP) codes based on the LU factorization of a
Vandermonde matrix. We show that the proposed decoding method has less
computational complexity than existing methods.  Furthermore, we show that the
proposed GEBR codes have both a larger minimum symbol distance and a larger
recovery ability of erased lines for some parameters when compared to EBR
codes.  We also present a necessary and sufficient condition of enabling EBR codes
to recover any $r$ erased lines of a slope for
any parameter $r$, which was an open problem in \cite{BR2019}.
Moreover, we show that EBR codes can recover any $r$ consecutive erased lines
of any slope for any parameter $r$.
\end{abstract}

\begin{IEEEkeywords}
Array codes, Expanded-Blaum-Roth codes, Expanded-Independent-Parity codes, local repair, efficient encoding/decoding.
\end{IEEEkeywords}

\section{Introduction}

Modern distributed storage systems require data redundancy to maintain data
availability and durability in the presence of failures.  Two major redundancy
mechanisms are replication and erasure coding.  Compared to replication,
erasure coding can deliver higher data reliability with much lower storage
overhead.

There are many constructions of erasure correcting codes.  In this work, we focus on
{\em array codes}, which are a class of  erasure correcting codes with only XOR and
cyclic-shift operations being involved in the coding process.  Array codes
have been widely used in storage systems, such as the Redundant Array of
Independent Disk (RAID) \cite{Patterson1989}.  Consider an array code of
size $m\times n$ elements, in which each element stores one {\em symbol} in
the array code. Among the $n$ columns, the first $k$ columns store $m\times
k$ {\em information} symbols to form $k$ information columns, and the
remaining $r=n-k$ columns store $m\times r$ {\em parity} symbols, encoded from the
$m\times k$ information symbols, to form $r$ parity columns. The value of $m$
depends on the code construction, and the $m$ symbols in each column are
stored in the same disk (or node) of a storage system.

{\em Maximum distance separable (MDS)} array codes are a special class of
array codes, where any $k$ out of the $n$ columns can retrieve all
$m\times k$ information symbols stored in the $k$ information columns (i.e.,
providing fault tolerance against any $r$ disk failures). More generally, we
define {\em $(n,k)$ recoverable property array codes} as the $m\times n$ array codes such that
we can recover all the information symbols from any $k$ out of the $n$ columns, where
the number of information symbols is no larger than $km$. When the number of information
symbols is $km$, $(n,k)$ recoverable property array codes are reduced to MDS array codes.
There are many
existing MDS array codes in the literature, and most of them are designed to
tolerate two or three failed columns. For example, EVENODD
\cite{Blaum1995,Hou2018} and RDP \cite{Corbett2004Ro} are two important codes
that can correct double disk failures. STAR codes \cite{huang2008star,hou2021star} and
triple-fault-tolerance codes \cite{Hou2017} can correct three disk failures.
Examples of array codes that can tolerate four or more column failures include
Generalized RDP codes \cite{Blaum2006A}, Independent-Parity (also called
generalized EVENODD) codes \cite{blaum2002evenodd}, Blaum-Roth (BR) codes
\cite{Blaum1993}, the codes in \cite{hou2014}, and Rabin-like codes
\cite{feng2005,hou2018a}.
	
To minimize the storage overhead, it is important
to design codes with the larger length for a given overhead. Modern distributed
storage systems often store the data files that
are geographically distributed across nodes, racks, and data centers. Data
should be accessible even if some nodes, racks, or data centers are offline. This motivates designing storage codes that can locally recover
single-symbol failure and quickly recover large correlated failures such as multi-node
failure, rack failure and data center failure, and have fast encoding and decoding
algorithms.
Recently, Expanded-Blaum-Roth (EBR) \cite{Blaum2019,BR2019} codes and
Expanded-Independent-Parity (EIP) codes \cite{BR2019} extend BR codes
\cite{Blaum1993} and Independent-Parity codes \cite{blaum2002evenodd},
respectively, and propose to tolerate any $r$ column failures and locally
repair one failed symbol within any column (called \emph{local repair
property}) by adding some parity symbols into each column.  This improves the
performance of repairing a failed symbol, as the repair can be locally done
within a column without accessing the symbols in other columns.
In addition, EBR codes can recover some erased lines of a slope.
Therefore, one possible application of EBR codes and EIP codes is the sectors or pages failures
in a device, like locally recoverable codes (LRC) \cite{2014A}.
Another possible application is in large-scale distributed storage. We need to explicitly
deal with significant correlated failures, such as rack failures, data center failures, and some other
correlated failures. EBR codes can quickly recover some erased lines of a slope that can naturally
be employed in large-scale distributed storage to recover correlated failures,
i.e., some correlated nodes are erased in a way corresponding to the erased lines of a slope.

\subsection{Basics of EBR and EIP Codes}

An EBR code is represented by an $m\times m$ array, where $m=k+r$ and $m>2$ is a
prime number.  It stores $\alpha\times k$ information symbols in the $k$
information columns with $\alpha$ information symbols each, for some $\alpha <
m$, and uses the $\alpha\times k$ sub-array of information symbols as input
for encoding.  Specifically, for $i=0,1,\ldots,m-1$ and $j=0,1,\ldots,m-1$,
let $a_{i,j}\in \mathbb{F}_q$ be the element in row $i$ and column $j$ of the
$m\times m$ array, where $q$ is a power of 2.
For $j=0,1,\ldots,k-1$, the $m$ symbols $a_{0,j},a_{1,j},\ldots,a_{m-1,j}$ in
column $j$ are represented as an {\em information polynomial}
\[
a_j(x)=a_{0,j}+a_{1,j}x+\ldots +a_{m-1,j}x^{m-1}
\]
over the quotient ring $\mathbb{F}_q[x]/(1+x^m)$.  Given the $\alpha$
information symbols $a_{0,j},a_{1,j},\ldots,a_{\alpha-1,j}$, we compute
$m-\alpha$ symbols $a_{\alpha,j},a_{\alpha+1,j},\ldots,a_{m-1,j}$ for local
repair, such that the polynomial $a_j(x)$ is a multiple of $(1+x)g(x)$, where
$g(x)$ is a factor of $1+x+\cdots+x^{m-1}$.  Similarly, the $m$ symbols in
column $j$ with $j=k,k+1,\ldots,m-1$ are represented as a {\em parity
polynomial} $a_j(x)$ over $\mathbb{F}_q[x]/(1+x^m)$.  The relationship between
the information polynomials and the parity polynomials is given as
\[
\mathbf{H}_{r\times m}\cdot\begin{bmatrix}
a_0(x) & a_1(x) & \cdots & a_{m-1}(x)
\end{bmatrix}^T=\mathbf{0}^T,
\]
where $\mathbf{H}_{r\times m}$ is the $r\times m$ parity-check matrix
\begin{equation}
\mathbf{H}_{r\times m}=
\begin{bmatrix}
 1&1&1 & \cdots & 1\\
 1 & x & x^{2}& \cdots & x^{m-1}\\
 \vdots& \vdots& \vdots & \ddots  & \vdots\\
 1& x^{r-1}& x^{2(r-1)}& \cdots & x^{(r-1)(m-1)}\\
 \end{bmatrix},
\label{eq:matrixH}
\end{equation}
and $\mathbf{0}^T$ is an all-zero column of length $r$.  In solving the above
linear equations, all the $r$ parity polynomials are multiples of
$(1+x)g(x)$.  The resulting codes with the parity-check matrix in
Eq.~\eqref{eq:matrixH} are denoted by EBR$(m,r,q,g(x))$.

An EIP code is an $m\times (m+r)$ array, where $m=k$ and $m$ is a prime
number.  It stores $\alpha\times m$ information symbols in $m$ columns with
$\alpha$ information symbols each, for some $\alpha < m$, and uses the
$\alpha\times m$ sub-array of information symbols for encoding.
Let $a_{i,j}$ be the element in row $i$ and column $j$, where
$i=0,1,\ldots,m-1$ and $j=0,1,\ldots,m+r-1$.  For $j=0,1,\ldots,m-1$, given
the $\alpha$ information symbols $a_{0,j},a_{1,j},\ldots,a_{\alpha-1,j}$, we
compute $m-\alpha$ symbols $a_{\alpha,j},a_{\alpha+1,j},\ldots,a_{m-1,j}$
for local repair, such that the information polynomial
\[
a_j(x)=a_{0,j}+a_{1,j}x+\ldots +a_{m-1,j}x^{m-1}
\]
is in $\mathbb{F}_q[x]/(1+x^m)$ and is a multiple of $(1+x)g(x)$, where $g(x)$
is a factor of $1+x+\cdots+x^{m-1}$.  For $j=m,m+1,\ldots,m+r-1$, the
parity polynomials $a_j(x)$ representing the $m$ symbols stored in
column $j$ are computed by
\begin{align*}
\small
&\begin{bmatrix}
a_{m}(x) &a_{m+1}(x) &\cdots & a_{m+r-1}(x)\\
\end{bmatrix}\\
=&\begin{bmatrix}
a_{0}(x) &a_{1}(x) &\cdots & a_{m-1}(x)\\
\end{bmatrix} \cdot \\
&\begin{bmatrix}
1 &1 &\cdots & 1\\
1 &x &\cdots & x^{r-1}\\
\vdots &\vdots &\ddots & \vdots\\
1 &x^{m-1} &\cdots & x^{(r-1)(m-1)}\\
\end{bmatrix}.
\end{align*}

The above code is denoted by EIP$(m,r,q,g(x))$.

\subsection{Contributions}

In this paper, we propose a generalization that can be used to
construct $(n,k)$ recoverable property array codes with new parameters.  The
following are our main contributions:
\begin{enumerate}
\item
First, we give constructions of generalized EBR (GEBR) codes and generalized
EIP (GEIP) codes that can support
more parameters when compared to EBR codes and EIP codes, respectively.
We show that the $m\times (m=n=k+r)$ GEBR codes 
satisfy the $(n,k)$ recoverable property (i.e., all
the information symbols can be reconstructed from any $k$ out of $n=m$
columns) if $m$ is a power of an odd prime.  The $m\times m$ EBR
codes \cite{BR2019} 
are a special case of our GEBR codes 
with $m$ a prime number.
\item
Second, we present an efficient decoding method for GEBR and GEIP codes
based on the LU factorization of a Vandermonde matrix. We show that the
proposed LU decoding method has less complexity than existing
methods.
\item
Third, we show that GEBR codes have a larger minimum symbol distance than EBR
codes for some parameters. We also show that GEBR codes can recover more
erased lines than EBR codes for some parameters. In addition, we present a necessary
and sufficient condition of recovering any $r$ erased lines of slope
$i$ for $0\leq i\leq r-1$ for any $r$ such that $1\leq r\leq m-3$ for EBR codes.
Note that the lines of slope $i$ are taken toroidally, and an erased line of slope
$i$ means that the $m$ symbols in a line of slope $i$ are erased.
\end{enumerate}

LRC codes~\cite{2012On,huang2013pyramid,2014A} can
also locally repair a single-symbol. An example is the code used by
Facebook in its f4 storage system~\cite{sathiamoorthy2013}.
More general construction of LRC is called grid-like codes (with global parity symbols) or
product codes (without global parity symbols) \cite{gopalan2017maximally}.
Some constructions of LRC are given in \cite{rawat2016locality,barg2017locally,kong2021new}.
The main difference between LRC and
ours is as follows. LRC contains both local parity symbols and global
parity symbols, while all the parity symbols in our codes are local
parity symbols. Each symbol in our codes can be repaired by either
some symbols in the same column or the symbols along a line, while
not in LRC \cite{2014A}. Please refer to Section \ref{sec:com} for
the detailed comparison of LRC, product codes, and the proposed codes.

\subsection{Paper Organization}
The rest of the paper is organized as follows.
Section~\ref{sec:framework} gives the generalized coding method.
Section~\ref{sec:constr} presents the construction of GEBR codes based on the
coding method and proposes the LU decoding method.
Section~\ref{sec:geip} presents the construction of GEIP codes.
Section~\ref{sec:distance} discusses the minimum symbol distance for the
proposed codes.
Section~\ref{sec:lines} shows that GEBR codes can recover some erased lines.
Section~\ref{sec:com} compares GEBR codes with other related codes.
Section~\ref{sec:con} concludes the paper.

\section{Generalized Coding Method of Array Codes with Local Properties}
\label{sec:framework}

In this section, we present a coding method for array codes that can encode
an $\alpha\times k$ sub-array of information symbols into an $m\times (k+r)$
array, where each element in the array is in the finite field
$\mathbb{F}_{q}$, $q$ is a power of 2, $m=p\tau$, $p$ is a prime number, and
$\alpha,k,r,\tau$ are positive integers with $\alpha\leq (p-1)\tau$.  The
primary objective of the coding method is to extend the constructions of
EBR and EIP codes to support much more parameters.  In particular, EBR codes
can be viewed as a special construction of the proposed coding method with
$\tau=1$ and $k+r=p$, while EIP codes are an explicit construction of the
coding method with $\tau=1$ and $k=p$.

Let $s_{i,j}\in \mathbb{F}_{q}$ be the element of the $m\times (k+r)$ array in
row $i$ and column $j$, where $i=0,1,\ldots,m-1$ and $j=0,1,\ldots,k+r-1$. The
$\alpha k$ information symbols are $s_{i,j}$ with $i=0,1,\ldots,\alpha-1$ and
$j=0,1,\ldots,k-1$, and the other $m(k+r)-\alpha k$ elements of the array are
parity symbols.

For $j=0,1,\ldots,k+r-1$, we represent the $m$ symbols stored in column $j$
(i.e., $s_{0,j},s_{1,j},\ldots,s_{m-1,j}$) by a polynomial $s_j(x)$ of degree
$m -1$ over the ring $\mathbb{F}_{q}[x]$, i.e.,
\[
s_j(x)=s_{0,j}+s_{1,j}x+s_{2,j}x^2+\cdots+s_{m-1,j}x^{m-1},
\]
where $s_j(x)$ with $j=0,1,\ldots,k-1$ is an information polynomial and
$s_j(x)$ with $j=k,k+1,\ldots,k+r-1$ is a parity polynomial.  Let
$\mathcal{R}_{m}(q)=\mathbb{F}_{q}[x]/(1+x^{m})$ be the ring of polynomials
modulo $1+x^{m}$ with coefficients in $\mathbb{F}_{q}$. We observe that
multiplication by $x^i$ in $\mathcal{R}_{m}(q)$ can be interpreted as a
\emph{cyclic shift}, and hence it does not involve finite field arithmetic
nor XOR operations.

Given $\alpha$ information symbols $s_{0,j},s_{1,j},\cdots,s_{\alpha-1,j}$,
we need to determine $m-\alpha$ parity symbols
$s_{\alpha,j},s_{\alpha+1,j},\cdots,s_{m-1,j}$, where $j=0,1,\ldots,k-1$.
Let $g(x)$ be a polynomial with coefficients in $\mathbb{F}_{q}$ such that
$g(x)$ divides $1+x^{\tau}+\cdots+x^{(p-1)\tau}$ and
$\gcd (g(x), 1+x^{\tau})=1$.  Let  $1+x^\tau+\cdots+x^{(p-1)\tau}=g(x)h(x)$.
Note that $g(x)$ may not be an irreducible polynomial so that we can factorize
$g(x)$ as a product of powers of irreducible polynomials over
$\mathbb{F}_{q}$, i.e.,
\[
g(x)=(f_1(x))^{\ell_1}\cdot (f_2(x))^{\ell_2} \cdots (f_t(x))^{\ell_t},
\]
where $\ell_i\geq 1$ for $i=1,2,\ldots,t$ and $\deg (f_i(x))\geq \deg (f_j(x))$ for $i>j$.

Let $\mathcal{C}_{p\tau}(g(x),\tau,q,d)$ be the cyclic code of length $m=p\tau$ over
$\mathbb{F}_{q}$ with generator polynomial $(1+x^\tau)g(x)$ and minimum
distance $d$.  For $j=0,1,\ldots,k-1$, we create $m-\alpha$ parity symbols
$s_{\alpha,j},s_{\alpha+1,j},\cdots,s_{m-1,j}$ by encoding $\alpha$
information symbols $s_{0,j},s_{1,j},\cdots,s_{\alpha-1,j}$, such that the
polynomial $s_j(x)$ is in $\mathcal{C}_{p\tau}(g(x),\tau,q,d)$.

Given $k$ information polynomials $s_0(x),s_1(x),\ldots,s_{k-1}(x)$,
we can compute the $r$ parity polynomials $s_k(x),s_{k+1}(x)$, $\ldots,s_{k+r-1}(x)$ by taking
the product
\begin{eqnarray}
&&\begin{bmatrix}
s_k(x) & s_{k+1}(x) & \cdots & s_{k+r-1}(x)\\
\end{bmatrix}\nonumber \\
&=&\begin{bmatrix}
s_0(x) & s_{1}(x) & \cdots & s_{k-1}(x)\\
\end{bmatrix}\cdot \mathbf{P}_{k\times r}
\label{eq:enc-matrix}
\end{eqnarray}
with operations performed in $\mathcal{R}_{p\tau}(q)$, where
$\mathbf{P}_{k\times r}$ is the $k\times r$ encoding matrix which
is the remainder sub-matrix of the systematic generator matrix by deleting
the identity matrix.
We can also compute the $r$ parity polynomials by
\begin{equation}
\begin{bmatrix}
s_0(x) & s_{1}(x) & \cdots & s_{k+r-1}(x)\\
\end{bmatrix}\cdot \mathbf{H}^T_{r\times (k+r)}=\mathbf{0}
\label{eq:check-matrix}
\end{equation}
over $\mathcal{R}_{p\tau}(q)$, where $\mathbf{H}_{r\times (k+r)}$
is an $r\times (k+r)$ \emph{parity-check matrix}.

Note that $\mathcal{C}_{p\tau}(g(x),\tau,q,d)$ is an {\em ideal}
in $\mathcal{R}_{p\tau}(q)$, because
$\forall c(x)\in \mathcal{R}_{p\tau}(q), \forall s(x)\in \mathcal{C}_{p\tau}(g(x),\tau,q,d)$,
we have $c(x)s(x)\in \mathcal{C}_{p\tau}(g(x),\tau,q,d)$.
Recall that $g(x)h(x)=1+x^\tau+\cdots+x^{(p-1)\tau}$.
The polynomial $h(x)$ is called \emph{parity-check polynomial} of
$\mathcal{C}_{p\tau}(g(x),\tau,q,d)$, since the multiplication of any polynomial in
$\mathcal{C}_{p\tau}(g(x),\tau,q,d)$ and $h(x)$ is zero. We show in the next theorem
that $\mathcal{R}_{p\tau}(q)$ is isomorphic to
$\mathbb{F}_{q}[x]/g(x)(1+x^\tau)\times \mathbb{F}_{q}[x]/(h(x))$
under some specific conditions.
\begin{theorem}
When $\gcd (g(x), h(x))=1$ and $\gcd (1+x^\tau, h(x))=1$,
the ring $\mathcal{R}_{p\tau}(q)$ is isomorphic to
$\mathbb{F}_{q}[x]/g(x)(1+x^\tau)\times \mathbb{F}_{q}[x]/(h(x))$.
\label{lm:isom}
\end{theorem}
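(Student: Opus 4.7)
The plan is to recognize this as a direct application of the Chinese Remainder Theorem (CRT) for polynomial rings over $\mathbb{F}_q$, once the correct factorization of $1+x^{p\tau}$ is exhibited.

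First I would observe the key factorization. Since $q$ is a power of $2$, we have $1+x^{p\tau}=(1+x^{\tau})(1+x^{\tau}+x^{2\tau}+\cdots+x^{(p-1)\tau})$ (this is just the identity $1-y^{p}=(1-y)(1+y+\cdots+y^{p-1})$ applied with $y=x^{\tau}$, using that $-1=1$ in characteristic $2$). By the standing setup, $1+x^{\tau}+\cdots+x^{(p-1)\tau}=g(x)h(x)$, so
\[
1+x^{p\tau}=(1+x^{\tau})g(x)h(x)=\bigl((1+x^{\tau})g(x)\bigr)\cdot h(x).
\]

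Next I would verify that the two factors $(1+x^{\tau})g(x)$ and $h(x)$ are coprime in $\mathbb{F}_q[x]$. This is where the two hypotheses $\gcd(g(x),h(x))=1$ and $\gcd(1+x^{\tau},h(x))=1$ come in: using the elementary fact that $\gcd(AB,C)=1$ whenever $\gcd(A,C)=1$ and $\gcd(B,C)=1$ (which follows from Bezout identities in the principal ideal domain $\mathbb{F}_q[x]$), we conclude $\gcd((1+x^{\tau})g(x),h(x))=1$.

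Now I would invoke CRT: for a polynomial ring over a field, if $f(x)=u(x)v(x)$ with $\gcd(u(x),v(x))=1$, then the natural map
\[
\mathbb{F}_q[x]/(f(x))\longrightarrow\mathbb{F}_q[x]/(u(x))\times\mathbb{F}_q[x]/(v(x)),\qquad a(x)\mapsto\bigl(a(x)\bmod u(x),\ a(x)\bmod v(x)\bigr)
\]
is a ring isomorphism. Applying this with $u(x)=(1+x^{\tau})g(x)$ and $v(x)=h(x)$ and $f(x)=1+x^{p\tau}$ yields exactly the claimed isomorphism
\[
\mathcal{R}_{p\tau}(q)=\mathbb{F}_q[x]/(1+x^{p\tau})\ \cong\ \mathbb{F}_q[x]/\bigl(g(x)(1+x^{\tau})\bigr)\times\mathbb{F}_q[x]/(h(x)).
\]

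I do not anticipate a serious obstacle: the only subtle points are (i) making explicit that characteristic $2$ lets us write $1+x^{p\tau}=(1+x^{\tau})(1+x^{\tau}+\cdots+x^{(p-1)\tau})$ cleanly, and (ii) combining the two coprimality hypotheses into coprimality of the two CRT factors. Both are routine, so the bulk of the proof is just naming the CRT isomorphism explicitly.
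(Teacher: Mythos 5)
Your proposal is correct and follows essentially the same route as the paper: both establish $\gcd\bigl(g(x)(1+x^{\tau}),h(x)\bigr)=1$ from the two hypotheses and then invoke the Chinese Remainder Theorem on the factorization $1+x^{p\tau}=(1+x^{\tau})g(x)h(x)$. The only difference is cosmetic: the paper additionally writes out the explicit inverse map $\theta^{-1}$ and verifies $\theta\circ\theta^{-1}=\mathrm{id}$, whereas you cite the CRT isomorphism abstractly, which is equally valid.
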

\begin{proof}
When $\gcd (g(x), h(x))=1$
and $\gcd (1+x^\tau, h(x))=1$, we have $\gcd (g(x)(1+x^\tau), h(x))=1$.
By the Chinese Remainder Theorem, we can find an isomorphism between
$\mathcal{R}_{p\tau}(q)$ and $\mathbb{F}_{q}[x]/(g(x)(1+x^\tau))\times
\mathbb{F}_{q}[x]/(h(x))$.  The mapping $\theta$ is defined by
\[
\theta(a(x))=(a(x)\bmod (g(x)(1+x^\tau)), a(x) \bmod (h(x))),
\]
where $a(x)\in \mathcal{R}_{p\tau}(q)$.

Let $h(x)\bmod g(x)(1+x^{\tau})$ be the remainder of dividing $h(x)$ by $g(x)(1+x^{\tau})$ which is
in $\mathbb{F}_{q}[x]/(g(x)(1+x^\tau))$ and $g(x)(1+x^{\tau})\bmod h(x)$ be the remainder of
dividing $g(x)(1+x^{\tau})$ by $h(x)$ which is in $\mathbb{F}_{q}[x]/(h(x))$.
Since $\gcd (g(x)(1+x^\tau), h(x))=1$, there exists the inverse of $h(x)\bmod g(x)(1+x^{\tau})$
in $\mathbb{F}_{q}[x]/(g(x)(1+x^\tau))$ and denote $(h(x)\bmod g(x)(1+x^{\tau}))^{-1}$ as the
inverse. Similarly, denote $(g(x)(1+x^{\tau})\bmod h(x))^{-1}$ as the inverse of
$g(x)(1+x^{\tau})\bmod h(x)$ in $\mathbb{F}_{q}[x]/(h(x))$.
The inverse mapping $\theta^{-1}$ is
\begin{align*}
\small
&\theta^{-1}(a_1(x),a_2(x))=\big( a_1(x) h(x) (h(x)\bmod (g(x)(1+x^\tau)))^{-1}+\\
&a_2(x) g(x)(1+x^\tau) ((g(x)(1+x^\tau))\bmod (h(x)))^{-1} \big) \bmod (1+x^{p\tau}),
\end{align*}
where $a_1(x)\in \mathbb{F}_{q}[x]/(g(x)(1+x^\tau))$ and $a_2(x)\in \mathbb{F}_{q}[x]/(h(x))$.

Then we have
\begin{small}
\begin{align*}
&\theta(\theta^{-1}(a_1(x),a_2(x)))\\
=&\theta\Big(\big( a_1(x) h(x) (h(x)\bmod (g(x)(1+x^\tau)))^{-1}+a_2(x) g(x)\\
&(1+x^\tau)((g(x)(1+x^\tau))\bmod (h(x)))^{-1} \big) \bmod (1+x^{p\tau})\Big)\\
=&\Big(\Big(\big( a_1(x) h(x)  (h(x)\bmod (g(x)(1+x^\tau)))^{-1}\bmod (g(x)(1+x^{\tau}))\\
&+a_2(x) g(x)(1+x^\tau) ((g(x)(1+x^\tau))\bmod (h(x)))^{-1} \big) \\
&\bmod (1+x^{p\tau})\bmod (g(x)(1+x^{\tau}))\Big),\\
&\Big(\big( a_1(x) h(x) (h(x)\bmod (g(x)(1+x^\tau)))^{-1}\bmod (h(x))+\\
&a_2(x) g(x)(1+x^\tau) ((g(x)(1+x^\tau))\bmod (h(x)))^{-1} \big) \\
&\bmod (1+x^{p\tau})\bmod (h(x))\Big)\Big)\\
=&(a_1(x),a_2(x)),
\end{align*}
\end{small}
and the theorem is proved.
\end{proof}
Note that the result in Lemma 2 in \cite{HOU2019} can be viewed as a special case
of our Theorem \ref{lm:isom} with $g(x)=1$ and $q=2$.
By Theorem \ref{lm:isom}, we can directly obtain that $\mathcal{C}_{p\tau}(g(x),\tau,q,d)$ is
isomorphic to $\mathbb{F}_{q}[x]/(h(x))$, and we give the isomorphism in the next lemma.

\begin{lemma}
When $\gcd (g(x), h(x))=1$ and $\gcd (1+x^\tau, h(x))=1$,
the ring $\mathcal{C}_{p\tau}(g(x),\tau,q,d)$ is isomorphic to $\mathbb{F}_{q}[x]/(h(x))$,
the isomorphism $\theta: \mathcal{C}_{p\tau}(g(x),\tau,q,d) \rightarrow \mathbb{F}_{q}[x]/(h(x))$ is
$\theta (a(x))=a(x)\bmod h(x)$ and the inverse isomorphism $\theta^{-1}: \mathbb{F}_{q}[x]/(h(x))
 \rightarrow \mathcal{C}_{p\tau}(g(x),\tau,q,d)$ is
$\theta^{-1}(a(x))=a(x)\cdot g(x)(1+x^\tau) \cdot (g(x)(1+x^\tau))\bmod (h(x))^{-1} \bmod (1+x^{p\tau})$.
\label{lm:isom2}
\end{lemma}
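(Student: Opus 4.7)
The plan is to derive Lemma \ref{lm:isom2} directly from Theorem \ref{lm:isom} by restricting the isomorphism $\theta$ to the ideal $\mathcal{C}_{p\tau}(g(x),q,d)$. Under the hypotheses $\gcd(g(x),h(x))=1$ and $\gcd(1+x^\tau,h(x))=1$, Theorem \ref{lm:isom} gives the ring isomorphism
\[
\theta:\mathcal{R}_{p\tau}(q)\;\longrightarrow\;\mathbb{F}_{q}[x]/(g(x)(1+x^\tau))\,\times\,\mathbb{F}_{q}[x]/(h(x)),\qquad \theta(a(x))=\bigl(a(x)\bmod g(x)(1+x^\tau),\;a(x)\bmod h(x)\bigr),
\]
with the inverse $\theta^{-1}$ written out explicitly in the proof of that theorem. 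Since $\mathcal{C}_{p\tau}(g(x),q,d)$ is exactly the principal ideal of $\mathcal{R}_{p\tau}(q)$ generated by $g(x)(1+x^\tau)$, its image under $\theta$ is the subring $\{0\}\times \mathbb{F}_{q}[x]/(h(x))$.

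Concretely, I would proceed in three short steps. First, observe that for any $a(x)\in\mathcal{C}_{p\tau}(g(x),q,d)$ the polynomial $a(x)$ is a multiple of $g(x)(1+x^\tau)$ in $\mathcal{R}_{p\tau}(q)$, so $a(x)\bmod g(x)(1+x^\tau)=0$; thus $\theta$ restricted to $\mathcal{C}_{p\tau}(g(x),q,d)$ takes the form $a(x)\mapsto a(x)\bmod h(x)$, as claimed. Second, verify that this restriction is surjective: given any $b(x)\in\mathbb{F}_{q}[x]/(h(x))$, apply Theorem \ref{lm:isom}'s inverse to $(0,b(x))$ to obtain
\[
\theta^{-1}(0,b(x))=b(x)\cdot g(x)(1+x^\tau)\cdot\bigl(g(x)(1+x^\tau)\bmod h(x)\bigr)^{-1}\bmod (1+x^{p\tau}),
\]
which is a multiple of $g(x)(1+x^\tau)$ in $\mathcal{R}_{p\tau}(q)$ and therefore lies in $\mathcal{C}_{p\tau}(g(x),q,d)$; by construction it is mapped to $b(x)$ under $\theta$. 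Third, injectivity is inherited from $\theta$ on the whole ring, so the restriction is a ring isomorphism and its inverse is exactly the formula stated in the lemma.

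The only nontrivial point is ensuring that the inverse written in the lemma actually produces a codeword, i.e.\ that the representative $b(x)\cdot g(x)(1+x^\tau)\cdot(g(x)(1+x^\tau)\bmod h(x))^{-1}\bmod (1+x^{p\tau})$ remains divisible by $g(x)(1+x^\tau)$ after the final reduction modulo $1+x^{p\tau}$. This follows because $g(x)(1+x^\tau)$ divides $1+x^{p\tau}=g(x)h(x)(1+x^\tau)$, so reducing modulo $1+x^{p\tau}$ does not destroy divisibility by $g(x)(1+x^\tau)$; this is the main technicality to be written carefully. Once this is checked, the lemma follows immediately from Theorem \ref{lm:isom}.
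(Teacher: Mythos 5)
Your proposal is correct and follows the same route the paper intends: the paper states Lemma~\ref{lm:isom2} as an immediate consequence of Theorem~\ref{lm:isom} without writing out a proof, and your argument---restricting the Chinese-Remainder isomorphism to the ideal $\mathcal{C}_{p\tau}(g(x),q,d)$, whose image is $\{0\}\times\mathbb{F}_q[x]/(h(x))$, and checking that the stated inverse lands back in the code because $g(x)(1+x^\tau)$ divides $1+x^{p\tau}$---is exactly the omitted verification.
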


In the next lemma, we show a necessary condition of a polynomial in the ring $\mathcal{C}_{p\tau}(g(x),\tau,q,d)$.
\begin{lemma}
If the polynomial $s_j(x)=\sum_{i=0}^{m-1}s_{i,j}x^{i}$ is in $\mathcal{C}_{p\tau}(g(x),\tau,q,d)$,
then the coefficients of polynomial $s_j(x)$ satisfy the following equation
\begin{equation}
\sum_{\ell=0}^{p-1}s_{\ell \tau+\mu,j}=0,
\label{eq:coeff}
\end{equation}
where $\mu=0,1,\ldots,\tau-1$.
\label{lm:coeff}
\end{lemma}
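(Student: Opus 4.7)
The plan is to exploit the fact that the generator polynomial of $\mathcal{C}_{p\tau}(g(x),q,d)$ is $(1+x^\tau)g(x)$, so every codeword is in particular divisible by $(1+x^\tau)$ as an ordinary polynomial in $\mathbb{F}_q[x]$ of degree less than $p\tau$. The target identity $\sum_{\ell=0}^{p-1}s_{\ell\tau+\mu,j}=0$ for each $\mu\in\{0,\ldots,\tau-1\}$ is precisely the statement that $s_j(x)$ reduces to $0$ modulo $1+x^\tau$, once we recall that $q$ is a power of $2$ (so the ambient characteristic is $2$ and $x^\tau\equiv 1\pmod{1+x^\tau}$).

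Concretely, I would first argue that $s_j(x)\in\mathcal{C}_{p\tau}(g(x),q,d)$ forces $(1+x^\tau)\mid s_j(x)$ in $\mathbb{F}_q[x]$: since $\gcd(1+x^\tau,h(x))=1$ and $g(x)h(x)=1+x^\tau+\cdots+x^{(p-1)\tau}$, we have $(1+x^\tau)g(x)h(x)=1+x^{p\tau}$ in characteristic $2$, so every element of the ideal generated by $(1+x^\tau)g(x)$ in $\mathcal{R}_{p\tau}(q)$ lifts uniquely to a polynomial of degree less than $p\tau$ that is a genuine multiple of $(1+x^\tau)g(x)$, hence of $(1+x^\tau)$. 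Therefore $s_j(x)\equiv 0\pmod{1+x^\tau}$.

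Next, I would reduce $s_j(x)$ modulo $1+x^\tau$ directly from its coefficient form. Writing each exponent as $i=\ell\tau+\mu$ with $0\le\mu<\tau$ and $0\le\ell<p$, and using $x^\tau\equiv 1\pmod{1+x^\tau}$, one gets
\[
s_j(x)\equiv\sum_{\mu=0}^{\tau-1}\Bigl(\sum_{\ell=0}^{p-1}s_{\ell\tau+\mu,j}\Bigr)x^\mu\pmod{1+x^\tau}.
\]
Since $\{1,x,\ldots,x^{\tau-1}\}$ is an $\mathbb{F}_q$-basis of $\mathbb{F}_q[x]/(1+x^\tau)$, the congruence $s_j(x)\equiv 0$ forces each inner sum to vanish, giving Eq.~\eqref{eq:coeff}.

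There is no genuinely hard step here; the only place where one must be slightly careful is the first one, verifying that ``being a codeword of $\mathcal{C}_{p\tau}(g(x),q,d)$'' (which a priori is divisibility only inside the quotient ring $\mathcal{R}_{p\tau}(q)$) upgrades to honest divisibility by $(1+x^\tau)$ in $\mathbb{F}_q[x]$. This is where the factorization $1+x^{p\tau}=(1+x^\tau)g(x)h(x)$ (valid in characteristic $2$) is invoked, together with $\deg s_j(x)<p\tau$, so that the lift from the quotient back to $\mathbb{F}_q[x]$ is unambiguous.
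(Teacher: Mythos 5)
Your proof is correct and is essentially the standard argument the paper has in mind: the paper itself omits the details, deferring to Theorem~1 of \cite{HOU2019}, and that argument is precisely your route of upgrading codeword membership to honest divisibility by $1+x^{\tau}$ in $\mathbb{F}_q[x]$ (using $\deg s_j(x)<p\tau$ and $(1+x^{\tau})g(x)h(x)=1+x^{p\tau}$ in characteristic $2$) and then reducing modulo $1+x^{\tau}$ via $x^{\tau}\equiv 1$. One cosmetic remark: your appeal to $\gcd(1+x^{\tau},h(x))=1$ is not needed (and is not a hypothesis of the lemma); the factorization $1+x^{p\tau}=(1+x^{\tau})g(x)h(x)$ already follows from $g(x)h(x)=1+x^{\tau}+\cdots+x^{(p-1)\tau}$ alone.
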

\begin{proof}
The proof is similar to that in Theorem 1 in~ \cite{HOU2019}.
\end{proof}

When $g(x)=1$, the next lemma shows that the necessary condition given in
Lemma \ref{lm:coeff} is also the sufficient condition.
\begin{lemma}\cite[Theorem 1]{HOU2019}
When $g(x)=1$, the polynomial $s_j(x)=\sum_{i=0}^{m-1}s_{i,j}x^{i}$ is in
$\mathcal{C}_{p\tau}(1,\tau,q,d)$ if and only if Eq. \eqref{eq:coeff} holds.
\label{lm:coeff1}
\end{lemma}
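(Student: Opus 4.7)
The plan is to reduce the statement to a divisibility question in $\mathbb{F}_q[x]$ and then split that question across residue classes of the exponent modulo $\tau$. Since $g(x)=1$, the cyclic code $\mathcal{C}_{p\tau}(1,q,d)$ has generator polynomial $1+x^\tau$, and $1+x^\tau$ divides $1+x^{p\tau}$ (indeed $1+x^{p\tau}=(1+x^\tau)h(x)$ with $h(x)=1+x^\tau+\cdots+x^{(p-1)\tau}$). By the standard theory of cyclic codes, a polynomial $s_j(x)$ of degree less than $p\tau$ belongs to the code if and only if $1+x^\tau$ divides $s_j(x)$ in $\mathbb{F}_q[x]$. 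This is the first reduction.

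For the second step I would regroup the coefficients of $s_j(x)$ by residue class of the exponent modulo $\tau$. Writing $i=\ell\tau+\mu$ with $0\le\mu<\tau$ and $0\le\ell<p$, and setting $t_\mu(y)=\sum_{\ell=0}^{p-1}s_{\ell\tau+\mu,j}\,y^{\ell}$ for $\mu=0,1,\ldots,\tau-1$, one obtains
\begin{equation*}
s_j(x)=\sum_{\mu=0}^{\tau-1}x^{\mu}\,t_{\mu}(x^{\tau}).
\end{equation*}
Because $\{1,x,\ldots,x^{\tau-1}\}$ is a free $\mathbb{F}_q[x^\tau]$-basis of $\mathbb{F}_q[x]$ and $1+x^\tau$ lives in the base subring $\mathbb{F}_q[x^\tau]$, divisibility by $1+x^\tau$ splits componentwise along this basis: $1+x^\tau\mid s_j(x)$ in $\mathbb{F}_q[x]$ if and only if $1+x^\tau\mid t_\mu(x^\tau)$ for every $\mu$.

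The third step is a one-variable evaluation: after the substitution $y=x^\tau$, the condition $1+x^\tau\mid t_\mu(x^\tau)$ becomes $1+y\mid t_\mu(y)$, which (since $q$ is a power of $2$, so $-1=1$) holds if and only if $t_\mu(1)=0$, i.e., $\sum_{\ell=0}^{p-1}s_{\ell\tau+\mu,j}=0$. Chaining the three equivalences gives both directions of the lemma simultaneously. The main obstacle I expect is the componentwise splitting in the second step: one has to argue carefully that the uniqueness of the representation of a polynomial in the basis $\{1,x,\ldots,x^{\tau-1}\}$ over $\mathbb{F}_q[x^\tau]$ forces each $t_\mu(x^\tau)$ individually to be a multiple of $1+x^\tau$ whenever their weighted sum is. Once that module-theoretic point is made precise, the remaining implications are immediate consequences of the definition of $t_\mu$ and of the classical root test for divisibility by a linear factor.
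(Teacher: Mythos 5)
Your proposal is correct, and it is worth noting that the paper itself offers no proof of this lemma: it is imported verbatim as Theorem~1 of the cited reference, so your argument is a genuine self-contained reconstruction rather than a paraphrase. The three reductions are all sound: membership in $\mathcal{C}_{p\tau}(1,q,d)$ is indeed equivalent to divisibility by the generator $1+x^{\tau}$ for polynomials of degree below $p\tau$; the decomposition $s_j(x)=\sum_{\mu=0}^{\tau-1}x^{\mu}t_{\mu}(x^{\tau})$ together with the freeness of $\{1,x,\ldots,x^{\tau-1}\}$ over $\mathbb{F}_q[x^{\tau}]$ does force the divisibility to split componentwise (writing the quotient $q(x)$ in the same basis and invoking uniqueness settles the point you flagged as the main obstacle); and the root test at $y=1$ in characteristic $2$ gives exactly Eq.~\eqref{eq:coeff}. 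The only thing I would ask you to tighten is the passage from $1+x^{\tau}\mid t_{\mu}(x^{\tau})$ in $\mathbb{F}_q[x]$ to $1+y\mid t_{\mu}(y)$ in $\mathbb{F}_q[y]$: divide $t_{\mu}(y)$ by $1+y$ to get a constant remainder $R$, substitute $y=x^{\tau}$, and observe that $1+x^{\tau}\mid R$ forces $R=0$. You could also shortcut the whole second and third steps by reducing modulo $1+x^{\tau}$ directly: since $x^{\tau}\equiv 1$, one has $s_j(x)\equiv\sum_{\mu=0}^{\tau-1}\bigl(\sum_{\ell=0}^{p-1}s_{\ell\tau+\mu,j}\bigr)x^{\mu}$, and this is the true remainder because its degree is below $\tau$; it vanishes iff every coefficient does. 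That route is more elementary and is, as far as one can tell, the argument behind the cited source, but your module-theoretic phrasing buys a cleaner separation of the uniqueness issue and generalizes more readily to divisors other than $1+x^{\tau}$ that live in the subring $\mathbb{F}_q[x^{\tau}]$.
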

When $g(x)=1$, we have that the weight (the number of non-zero coefficients) of
$s_j(x)\in \mathcal{C}_{p\tau}(1,\tau,q,d)$ is a positive even integer by Lemma \ref{lm:coeff1}.

When $g(x)=1$ and $q=2$, the ring $\mathcal{C}_{p\tau}(1,\tau,2,d)$ has been used
in the literature to give efficient repair for a family of binary MDS array
codes \cite{Hou2017,HOU2019} and to provide new constructions of regenerating
codes with lower computational complexity \cite{hou2019new}.
When $g(x)=1$, $q=2$, and $\tau=1$, the ring is discussed in
\cite{Blaum1993,blaum1996mds,hou2014,hou2016,hou2018a,Hou2018form}.
When $\tau=1$, the ring is used to construct array codes with local properties \cite{BR2019}.

When $g(x)=1$, if we delete the last $\tau$ rows of the $m\times (k+r)$ array
of our coding method, then the obtained $(p-1)\tau\times (k+r)$ array
is reduced to the coding method given in \cite{HOU2019}.

Note that the ring $\mathcal{C}_{p\tau}(1,\tau,2,2)$ is reduced to
a finite field of size $2^{(p-1)\tau}$ if and only if 2 is a primitive element in $\mathbb{Z}_p$
and $\tau=p^i$ for some non-negative integer $i$ \cite{Itoh1991Characterization}.
When $\tau$ is a power of $p$ and $p$ is a prime number such that 2
is a primitive element in $\mathbb{Z}_p$, we have $g(x)=1$, $d=2$ and
$h(x)=1+x^{\tau}+\cdots+x^{(p-1)\tau}$ is an irreducible polynomial in
$\mathbb{F}_{2}[x]$.

\section{Generalized Expanded-Blaum-Roth Codes}
\label{sec:constr}
In this section, we first give the construction of GEBR codes and then propose
the LU decoding method that can be used in the encoding/decoding procedures of GEBR codes.

\subsection{Construction}
The proposed GEBR code is a set of arrays of size $m\times (k+r)$ by encoding
$k\alpha$ information symbols, where $m=p\tau$, $\alpha<m$,
$k+r\leq m$,
$\tau$ is a positive integer, and $p$ is an odd prime number.
The constructed GEBR code is denoted by $\textsf{GEBR}(p,\tau,k,r,q,g(x))$ with
parity-check matrix given as
\begin{equation}
\mathbf{H}_{r\times (k+r)}=
\begin{bmatrix}
 1&1&1 & \cdots & 1\\
 1 & x & x^{2}& \cdots & x^{k+r-1}\\
 \vdots& \vdots& \vdots & \ddots  & \vdots\\
 1& x^{r-1}& x^{2(r-1)}& \cdots & x^{(r-1)(k+r-1)}\\
 \end{bmatrix}.
\label{eq:matrixH2}
\end{equation}
Note that we have more than one solution of
$s_k(x),s_{k+1}(x),\ldots,s_{k+r-1}(x)$
in Eq.~\eqref{eq:check-matrix}. We need to choose one solution such that all $r$
polynomials $s_k(x),s_{k+1}(x),\ldots,s_{k+r-1}(x)$ are in
$\mathcal{C}_{p\tau}(g(x),\tau,q,d)$.
Since we will show the $(n,k)$ recoverable condition of $\textsf{GEBR}(p,\tau,k,r,q,g(x))$ in
Theorem~\ref{lm:mds} and Theorem \ref{thm:mds-gx1}, we assume that the parameters $p,\tau,k,r,q,g(x)$ satisfy
the $(n,k)$ recoverable condition given in Theorem \ref{lm:mds} or Theorem \ref{thm:mds-gx1}, and there exists only one
solution such that all $r$
polynomials $s_k(x),s_{k+1}(x),\ldots,s_{k+r-1}(x)$ are in
$\mathcal{C}_{p\tau}(g(x),\tau,q,d)$.
For general polynomial $g(x)$, we require that $\gcd (g(x), h(x))=1$,
$\gcd (1+x^\tau, h(x))=1$, $1+x^i$ and $h(x)$ are relatively prime
over $\mathbb{F}_{q}[x]$ for $i=1,2,\ldots,k+r-1$, $\textsf{GEBR}(p,\tau,k,r,q,g(x))$
are $(n,k)$ recoverable property codes. Please refer to Theorem~\ref{lm:mds} for the detailed proof.
When $g(x)=1$, let $\tau=\gamma p^{\nu}$, where $\nu\geq 0$, $0< \gamma$ and $\gcd (\gamma,p)=1$.
The codes $\textsf{GEBR}(p,\tau,k,r,q,1)$ are $(n,k)$ recoverable property codes if and only if $k+r\leq p^{\nu +1}$.
Please refer to Theorem \ref{thm:mds-gx1} for the detailed proof.

The encoding procedure is described as follows.
We  first replace each entry $a(x)$ of the parity-check matrix in Eq. \eqref{eq:matrixH2}
by $a(x)\cdot g(x)(1+x^\tau) \cdot (g(x)(1+x^\tau))\bmod (h(x))^{-1} \bmod (1+x^{p\tau})$
that is in $\mathcal{C}_{p\tau}(g(x),\tau,q,d)$ by Lemma \ref{lm:isom2} and then solve the
$r$ polynomials $s_k(x),s_{k+1}(x),\ldots,s_{k+r-1}(x)$
based on the modified parity-check matrix over $\mathcal{C}_{p\tau}(g(x),\tau,q,d)$.

From row $i$ of the parity-check matrix in Eq. \eqref{eq:matrixH2}, where $i=0,1,\ldots,r-1$,
the summation of the $k+r$ symbols in each line of slope $i$ of the $m \times (k+r)$ array is zero,
{i.e.,
\[
s_{\ell,0}+s_{\ell-i,1}+s_{\ell-2i,2}+\cdots+s_{\ell-(k+r-1)i,k+r-1}=0,
\]
for $\ell=0,1,\ldots,m-1$. The indices are taken modulo $m$ throughout the paper unless
otherwise specified. For example, when $i=1$, we have
\[
s_0(x)+xs_1(x)+\cdots+x^{k+r-1}s_{k+r-1}(x)=0 \bmod (1+x^m).
\]
Recall that
\begin{align*}
x^js_j(x)=&x^j(\sum_{\ell=0}^{m-1}s_{\ell,j}x^{\ell})\\
=&s_{m-j,j}+s_{m-j+1,j}x+\cdots+s_{0,j}x^{j}+\\
&s_{1,j}x^{1+j}+\cdots+s_{m-j-1,j}x^{m-1},
\end{align*}
for $j=1,2,\ldots,k+r-1$. We can obtain that the summation of the $k+r$ symbols
$s_{\ell,0},s_{\ell-1,1},\ldots,$ $s_{\ell-(k+r-1),k+r-1}$ in each line of slope $i=1$ is zero.
The following array is an example of $m=5$ and $k+r=4$, where the symbols
$s_{1,0},s_{0,1},s_{4,2},s_{3,3}$ with bold font are in one line of slope $i=1$:
\begin{align*}
\begin{bmatrix}
s_{0,0} & \bm{s_{0,1}} & s_{0,2} & s_{0,3} \\
\bm{s_{1,0}} & s_{1,1} & s_{1,2} & s_{1,3} \\
s_{2,0} & s_{2,1} & s_{2,2} & s_{2,3} \\
s_{3,0} & s_{3,1} & s_{3,2} & \bm{s_{3,3}} \\
s_{4,0} & s_{4,1} & \bm{s_{4,2}} & s_{4,3} \\
\end{bmatrix}.
\end{align*}
}
Note that the code proposed in \cite{Blaum2019}
is a special case as $\textsf{GEBR}(p,\tau=1,k,r=p-k,q,g(x)=1)$, and
$\textsf{GEBR}(p,\tau=1,k,r=p-k,q,g(x))$ is the EBR code in \cite{BR2019}.
In the $p\times p$ array of EBR code in \cite{BR2019}, the summation of
the $p$ symbols along a line of slopes $0,1,\ldots,r-1$ is zero, the polynomial
corresponding to the $p$ symbols in one column is a polynomial in $\mathbb{F}_q[x]/(1+x^p)$
which is a multiple of $g(x)(1+x)$. In the $p\tau\times p\tau$ array of our
$\textsf{GEBR}(p,\tau,k,r=p\tau-k,q,g(x))$, the summation of the symbols in each line of
slope $0,1,\ldots,r-1$ is zero, as like EBR codes. The difference is that the polynomial
corresponding to each column of EBR codes is a multiple of $g(x)(1+x)$, where $g(x)$
is a factor of $1+x+\cdots+x^{p-1}$. While the polynomial corresponding to each column
of $\textsf{GEBR}(p,\tau,k,r=p\tau-k,q,g(x))$ is a multiple of $g(x)(1+x^{\tau})$,
where $g(x)$ is a factor of $1+x^{\tau}+\cdots+x^{(p-1)\tau}$ such that $\gcd (g(x), 1+x^{\tau})=1$.

\subsection{The $(n,k)$ Recoverable Property}

When we say that a code is $(n,k)$ recoverable or satisfies the $(n,k)$ recoverable property, it means that
the code can recover up to any $r$ erased columns out of the $k+r$ columns.
One necessary and sufficient $(n,k)$ recoverable condition of
$\textsf{GEBR}(p,\tau,k,r,q,g(x))$ with $\gcd (g(x), h(x))=1$
and $\gcd (1+x^\tau, h(x))=1$ is given in the next theorem.
\begin{theorem}
When $\gcd (g(x), h(x))=1$ and $\gcd (1+x^\tau, h(x))=1$,
we can compute all the $k\alpha$ information symbols
from any $k$ out of $k+r$ polynomials $s_{0}(x),s_{1}(x),\ldots,s_{k+r-1}(x)$,
if and only if, the two polynomials $1+x^i$ and $h(x)$ are relatively prime
over $\mathbb{F}_{q}[x]$, where $i=1,2,\ldots,k+r-1$.
\label{lm:mds}
\end{theorem}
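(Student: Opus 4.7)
The plan is to recast the MDS property as an invertibility condition on the $r\times r$ submatrices of $\mathbf{H}_{r\times(k+r)}$ and then exploit the ring isomorphism of Lemma \ref{lm:isom2} to reduce the question to a Vandermonde determinant calculation in $\mathbb{F}_{q}[x]/(h(x))$.

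First, I would argue that recovering all information from any $k$ surviving columns is equivalent to the assertion that, for every $r$-subset of column indices $0\leq j_1<j_2<\cdots<j_r\leq k+r-1$, the Vandermonde submatrix
\[
V \;=\; \begin{bmatrix} 1 & 1 & \cdots & 1\\ x^{j_1} & x^{j_2} & \cdots & x^{j_r}\\ \vdots & \vdots & \ddots & \vdots\\ x^{(r-1)j_1} & x^{(r-1)j_2} & \cdots & x^{(r-1)j_r}\end{bmatrix}
\]
cut out of $\mathbf{H}_{r\times(k+r)}$ is invertible when viewed as a map on $\mathcal{C}_{p\tau}(g(x),q,d)^r$. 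Because the unknowns $s_{j_1}(x),\ldots,s_{j_r}(x)$ lie in the ideal $\mathcal{C}_{p\tau}(g(x),q,d)$, the CRT splitting of Theorem \ref{lm:isom} kills them in the factor $\mathbb{F}_{q}[x]/(g(x)(1+x^\tau))$; the right-hand side produced by the surviving columns lies in the same ideal and is likewise zero in that factor, so the system is equivalent to its projection to $\mathbb{F}_{q}[x]/(h(x))$. By Lemma \ref{lm:isom2}, unique solvability is then equivalent to $V\bmod h(x)$ being invertible over the commutative ring $\mathbb{F}_{q}[x]/(h(x))$, and hence to $\det V$ being a unit modulo $h(x)$.

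Second, I would evaluate $\det V=\prod_{1\leq a<b\leq r}(x^{j_b}-x^{j_a})$ by the standard Vandermonde formula, and use characteristic $2$ to rewrite each factor as $x^{j_a}\bigl(1+x^{j_b-j_a}\bigr)$. Since $g(x)h(x)=1+x^\tau+\cdots+x^{(p-1)\tau}$ has constant term $1$, so does $h(x)$, giving $\gcd(x,h(x))=1$ and making every monomial $x^{j_a}$ a unit modulo $h(x)$. Consequently $\det V$ is a unit if and only if every factor $1+x^{j_b-j_a}$ is coprime to $h(x)$. As $(j_a,j_b)$ ranges over all admissible pairs, the difference $j_b-j_a$ hits every value in $\{1,2,\ldots,k+r-1\}$ (take $j_a=0$ and $j_b=i$ inside any chosen $r$-subset), so the MDS property is equivalent to $\gcd(1+x^i,h(x))=1$ for every $i=1,2,\ldots,k+r-1$. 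For the ``only if'' direction I would use the fact that $\mathbb{F}_{q}[x]/(h(x))$ is a finite ring in which every nonzero non-unit is a zero divisor; a common factor between some $1+x^{i_0}$ and $h(x)$ then yields a singular $V$, which in turn produces a nonzero codeword supported on at most $r$ columns and contradicts the MDS property.

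The main obstacle I anticipate is the bookkeeping in the first step: the parity-check relations are stated in the larger ring $\mathcal{R}_{p\tau}(q)$, while the code polynomials are pinned to the sub-ideal $\mathcal{C}_{p\tau}(g(x),q,d)$, and one must verify that the CRT decomposition of Theorem \ref{lm:isom} legitimately transports both unknowns and right-hand side into $\mathbb{F}_{q}[x]/(h(x))$ without losing information in the complementary factor $\mathbb{F}_{q}[x]/(g(x)(1+x^\tau))$. Once that identification is secure, the remainder is a clean Vandermonde determinant computation and the characterization by $\gcd(1+x^i,h(x))=1$ for $i=1,2,\ldots,k+r-1$ follows at once.
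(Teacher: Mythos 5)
Your proposal is correct and follows essentially the same route as the paper: both reduce the MDS property via Lemma \ref{lm:isom2} to the invertibility over $\mathbb{F}_q[x]/(h(x))$ of the $r\times r$ Vandermonde submatrices of $\mathbf{H}_{r\times(k+r)}$, factor each determinant in characteristic $2$ as a power of $x$ (a unit, since $h(x)$ has nonzero constant term) times factors $1+x^{j_b-j_a}$ with exponents in $\{1,\ldots,k+r-1\}$, and conclude that invertibility of all such determinants is equivalent to $\gcd(1+x^i,h(x))=1$ for every $i$. Your added care in verifying that the CRT projection loses nothing and that a singular submatrix yields a nonzero low-weight codeword simply makes explicit what the paper leaves implicit.
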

\begin{proof}
By Lemma \ref{lm:isom2}, it is sufficient to show that the determinant of any $r\times r$ sub-matrix
of $\mathbf{H}_{r\times (k+r)}$ in Eq. \eqref{eq:matrixH2} is invertible over
$\mathbb{F}_q[x]/h(x)$. Any $r\times r$
sub-matrix of $\mathbf{H}_{r\times (k+r)}$ is a Vandermonde matrix,
the determinant can be written as the multiplication of power of $x$ and $r$
different factors
$1+x^i$, where $i\in\{1,2,\ldots,k+r-1\}$.
Note that the coefficient of constant term of $h(x)$ is non-zero, we have $\gcd(x^j,h(x))=1$
for any positive integer $j$.
The determinant can be viewed as a polynomial in
$\mathbb{F}_q[x]/h(x)$ after modulo $h(x)$, and is invertible over
the ring $\mathbb{F}_q[x]/h(x)$. Therefore, we can compute all the
$k\alpha$ information symbols
from any $k$ out of $k+r$ polynomials, if and only if, $1+x^i$
is invertible over $\mathbb{F}_q[x]/h(x)$
for all $i=1,2,\ldots,k+r-1$.
\end{proof}

If $\tau$ is a power of 2 and $g(x)=1$, we have
\[
h(x)=1+x^{\tau}+\ldots+x^{(p-1)\tau}=(1+x+\ldots+x^{p-1})^\tau.
\]
We can check that $\gcd (g(x)=1, h(x))=1$ and
$\gcd (1+x^\tau, h(x)=(1+x+\ldots+x^{p-1})^\tau)=1$.
The $(n,k)$ recoverable condition in Theorem \ref{lm:mds} is reduced to
that $1+x^i$ and $1+x+\ldots+x^{p-1}$ are relatively prime over
$\mathbb{F}_{q}[x]$ when $\tau$ is a power of 2.
Note that $1+x^i$ and $1+x+\ldots+x^{p-1}$ are relatively prime over
$\mathbb{F}_{q}[x]$ for $i=1,2,\ldots,p-1$ \cite{Blaum1993}.
Therefore, when $\tau$ is a power of 2 and $g(x)=1$,
$\textsf{GEBR}(p,\tau,k,r,q,g(x))$ are $(n,k)$ recoverable if $k+r\leq p$
and are not $(n,k)$ recoverable if $k+r>p$.

According to Theorem \ref{lm:mds}, $\textsf{GEBR}(p,\tau,k,r,q,g(x))$ are
$(n,k)$ recoverable if and only if $\gcd (h(x),1+x^i)=1$ for $i=1,2,\ldots,k+r-1$.
In the following, we present an equivalent necessary and sufficient $(n,k)$ recoverable condition.
\begin{lemma}
The codes $\textsf{GEBR}(p,\tau,k,r,q,g(x))$ are $(n,k)$ recoverable if and only if the following equation
\begin{equation}
(1+x^i)s(x)=c(x)\bmod (1+x^{p\tau})
\label{eq:mds-cond}
\end{equation}
has a unique solution $s(x)\in\mathcal{C}_{p\tau}(g(x),\tau,q,d)$, given that $c(x)\in\mathcal{C}_{p\tau}(g(x),\tau,q,d)$
and $i\in \{1,2,\ldots,k+r-1\}$.
\label{lm:mds-equi}
\end{lemma}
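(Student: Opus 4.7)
The plan is to reduce the statement to Theorem \ref{lm:mds} by using the ring isomorphism from Lemma \ref{lm:isom2}. Theorem \ref{lm:mds} (under the standing assumptions $\gcd(g(x),h(x))=1$ and $\gcd(1+x^\tau,h(x))=1$, which I will assume hold here) already characterizes the MDS property as $\gcd(1+x^i,h(x))=1$ for every $i\in\{1,2,\ldots,k+r-1\}$. So the real task is to recognize that the unique-solvability condition in Eq.~\eqref{eq:mds-cond} is just a reformulation, inside the code ring $\mathcal{C}_{p\tau}(g(x),q,d)$, of the invertibility of $1+x^i$ modulo $h(x)$.

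First, I would observe that because $\mathcal{C}_{p\tau}(g(x),q,d)$ is an ideal in $\mathcal{R}_{p\tau}(q)$, multiplication by $1+x^i$ in $\mathcal{R}_{p\tau}(q)$ restricts to an $\mathbb{F}_q$-linear endomorphism
\[
\phi_i:\mathcal{C}_{p\tau}(g(x),q,d)\to\mathcal{C}_{p\tau}(g(x),q,d),\quad \phi_i(s(x))=(1+x^i)s(x)\bmod(1+x^{p\tau}).
\]
The condition that Eq.~\eqref{eq:mds-cond} admits a unique solution $s(x)\in\mathcal{C}_{p\tau}(g(x),q,d)$ for every $c(x)\in\mathcal{C}_{p\tau}(g(x),q,d)$ is exactly the statement that $\phi_i$ is a bijection. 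Since the domain and codomain of $\phi_i$ are isomorphic finite-dimensional $\mathbb{F}_q$-vector spaces, bijectivity is equivalent to either injectivity or surjectivity alone.

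Next, I would transport $\phi_i$ across the ring isomorphism $\theta:\mathcal{C}_{p\tau}(g(x),q,d)\to\mathbb{F}_q[x]/(h(x))$ of Lemma \ref{lm:isom2}. Because $\theta$ is a ring homomorphism (inherited from the Chinese Remainder Theorem splitting in Theorem \ref{lm:isom}), the map $\phi_i$ corresponds to
\[
\bar\phi_i:\mathbb{F}_q[x]/(h(x))\to\mathbb{F}_q[x]/(h(x)),\quad \bar\phi_i(\bar s)=\bigl((1+x^i)\bmod h(x)\bigr)\cdot\bar s.
\]
Hence $\phi_i$ is bijective iff $\bar\phi_i$ is bijective iff $(1+x^i)\bmod h(x)$ is a unit in $\mathbb{F}_q[x]/(h(x))$, which is iff $\gcd(1+x^i,h(x))=1$. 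Combining this equivalence over all $i\in\{1,2,\ldots,k+r-1\}$ with Theorem \ref{lm:mds} gives both directions of the claim.

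The main obstacle is just the bookkeeping check that $\theta$ converts $\phi_i$ into multiplication by the residue class of $1+x^i$; this uses that $\theta$ is multiplicative and that the element $1+x^i$ of $\mathcal{R}_{p\tau}(q)$ acts on the ideal by the same ring multiplication used to define $\theta$. Once this is in place, the whole argument is an application of the isomorphism plus Theorem \ref{lm:mds}, and no further computation is required.
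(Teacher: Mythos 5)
Your proof is correct and follows essentially the same route as the paper's: both reduce the unique-solvability of Eq.~\eqref{eq:mds-cond} to the invertibility of $1+x^i$ modulo $h(x)$ and then invoke Theorem~\ref{lm:mds}. The paper carries out the reduction by explicitly writing $s(x)=a(x)(1+x^\tau)g(x)$ and $c(x)=b(x)(1+x^\tau)g(x)$ and arguing the two directions separately, whereas you package the same step as transporting the multiplication-by-$(1+x^i)$ map through the isomorphism of Lemma~\ref{lm:isom2} and using the ``bijective iff unit'' criterion; this is only a difference in presentation.
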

\begin{proof}
First we prove that Eq. \eqref{eq:mds-cond}
has a unique solution $s(x)\in\mathcal{C}_{p\tau}(g(x),\tau,q,d)$, given that $c(x)\in\mathcal{C}_{p\tau}(g(x),\tau,q,d)$,
if and only if, $1+x^i$ and $h(x)$ are relatively prime
over $\mathbb{F}_{q}[x]$ for  $i\in \{1,2,\ldots,k+r-1\}$.

\noindent $(\Longrightarrow)$   Since both $s(x)$ and $c(x)$ are in $\in\mathcal{C}_{p\tau}(g(x),\tau,q,d)$, we have
\begin{eqnarray}
&&(1+x^i)s(x)=c(x)\bmod (1+x^{p\tau})\nonumber\\
&\leftrightarrow&	(1+x^i)a(x)(1+x^\tau)g(x)=b(x)(1+x^\tau)g(x) \nonumber\\
&&\bmod (1+x^{\tau})g(x)h(x)\nonumber\\
&\leftrightarrow&	(1+x^i)a(x)=b(x)\bmod h(x).\label{eq:Lemma5-proof}
\end{eqnarray}
Assume that $\gcd(1+x^i, h(x))=d(x)$. Then, $d(x)|b(x)$. Since $c(x)$ is any element in $\mathcal{C}_{p\tau}(g(x),\tau,q,d)$, this is possible only when $d(x)=1$.

\noindent $(\Longleftarrow)$ Note that, in Eq.~\eqref{eq:Lemma5-proof},  both $\deg(a(x))$ and $\deg(b(x))$ are  less than $\deg(h(x))$. Given any valid $b(x)$, we need to prove that there exists only one solution $a(x)$ for Eq.~\eqref{eq:Lemma5-proof}. Let $1+x^i\bmod h(x)=e(x)$. Since $\gcd(1+x^i,h(x))=1$, we have $\gcd(e(x),h(x))=1$ such that the inverse of $e(x)$ exists. Note that $\deg(a(x))<\deg(h(x))$ and the only solution $a(x)$ for Eq.~\eqref{eq:Lemma5-proof} is $a(x)=e(x)^{-1}b(x)\bmod h(x)$.
By Theorem~\ref{lm:mds}, this completes the proof.
\end{proof}

Next we prove that to determine the condition given in Eq.~\eqref{eq:mds-cond}, we only need to check for the case $c(x)=0$.
\begin{lemma}
\label{lm:null-space}
Eq.~\eqref{eq:mds-cond}
has a unique solution $s(x)\in\mathcal{C}_{p\tau}(g(x),\tau,q,d)$, given that $c(x)\in\mathcal{C}_{p\tau}(g(x),\tau,q,d)$ if and only if Eq.~\eqref{eq:mds-cond}
has a unique solution $s(x)=0$ when $c(x)=0$.
\end{lemma}
\begin{proof}
The ``if" part is obvious such that we only need to prove the ``only if" part. By Eq.~\eqref{eq:mds-cond}, we define the transformation $g_i:\mathcal{C}_{p\tau}(g(x),\tau,q,d)	\longrightarrow \mathcal{C}_{p\tau}(g(x),\tau,q,d)$ as
$$g_i(s(x))=(1+x^i)s(x) \bmod (1+x^{p\tau}),$$ where $s(x)\in \mathcal{C}_{p\tau}(g(x),\tau,q,d)$.
It is easy to see that this transformation is linear. Since $(1+x^i)s(x)=0$ has unique solution $s(x)=0$, the kernel of $g_i$ is of dimension $0$. According to the rank theorem, the dimension of the range of $g_i$ is the same as the dimension of $\mathcal{C}_{p\tau}(g(x),\tau,q,d)$. Hence, $g_i$ is a one-to-one and onto mapping such that, for any $g_i(s(x))=c(x)$,  Eq.~\eqref{eq:mds-cond}
has a unique solution $s(x)\in\mathcal{C}_{p\tau}(g(x),\tau,q,d)$, given that $c(x)\in\mathcal{C}_{p\tau}(g(x),\tau,q,d)$.
\end{proof}

When $g(x)=1$, we can locally recover any one symbol with the minimum parity symbol
(one local parity symbol for $p-1$ data symbols in each data column),
which is practically interesting in storage systems. For example, the LRC implemented in
Facebook \cite{sathiamoorthy2013} employs one local parity symbol in each group.
In the following, we investigate the necessary and sufficient
$(n,k)$ recoverable condition when $g(x)=1$.
When $g(x)\neq 1$, the $(n,k)$ recoverable condition of $\textsf{GEBR}(p,\tau,k,r,q,g(x))$
is a subject of future work.

\begin{theorem}
Let $\tau=\gamma p^{\nu}$, where $\nu\geq 0$, $0< \gamma$ and $\gcd (\gamma,p)=1$.
Then, the codes $\textsf{GEBR}(p,\tau,k,r,q,1)$ are $(n,k)$ recoverable if and only if $k+r\leq p^{\nu +1}$.
\label{thm:mds-gx1}
\end{theorem}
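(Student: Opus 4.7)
The plan is to invoke Theorem~\ref{lm:mds} with $g(x)=1$, so the task reduces to pinning down exactly which indices $i$ satisfy $\gcd(1+x^i,h(x))=1$ where $h(x)=1+x^\tau+\cdots+x^{(p-1)\tau}$. First I would verify that Theorem~\ref{lm:mds} actually applies: $\gcd(g(x),h(x))=\gcd(1,h(x))=1$ is immediate, and $\gcd(1+x^\tau,h(x))=1$ will follow from a root-multiplicity calculation in characteristic~$2$ which I describe next.

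Write $\gamma=2^{a}\gamma_1$ with $\gamma_1$ odd and set $n=\gamma_1 p^{\nu+1}$, which is odd. In characteristic~$2$, $1+x^{p\tau}=(1+x^{n})^{2^{a}}$ and $1+x^{\tau}=(1+x^{n/p})^{2^{a}}$, so $h(x)=\bigl((1+x^{n})/(1+x^{n/p})\bigr)^{2^{a}}$. If $\zeta$ is a primitive $n$-th root of unity over $\bar{\mathbb F}_q$, then the roots of $1+x^{p\tau}$ in $\bar{\mathbb F}_q$ are $\{\zeta^{j}:0\le j<n\}$, each of multiplicity $2^{a}$; the roots of $1+x^{\tau}$ are those with $p\mid j$, each of multiplicity $2^{a}$; and the roots of $h(x)$ are exactly those with $p\nmid j$, each of multiplicity $2^{a}$. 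From this description, $1+x^{\tau}$ and $h(x)$ share no roots, hence $\gcd(1+x^{\tau},h(x))=1$, discharging the hypothesis of Theorem~\ref{lm:mds}.

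Next I would determine, for $i\ge 1$, whether $1+x^{i}$ and $h(x)$ share a root. A root $\zeta^{j}$ of $h(x)$ is a root of $1+x^{i}$ iff $\zeta^{ij}=1$ iff $n\mid ij$. Setting $d=\gcd(n,i)$, the solutions $j\in\{1,\dots,n-1\}$ of $n\mid ij$ are exactly $j\in\{\ell n/d:1\le\ell\le d-1\}$, and I need one such $j$ with $p\nmid j$. A short case split on whether $p\mid n/d$ then shows: a valid $j$ exists (equivalently $\gcd(1+x^{i},h(x))\ne 1$) if and only if $p\nmid n/d$ and $d>1$, and because the $p$-part of $n$ is exactly $p^{\nu+1}$, this collapses to the single condition $p^{\nu+1}\mid i$.

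Combining these observations, $\gcd(1+x^{i},h(x))=1$ fails for some $i\in\{1,\dots,k+r-1\}$ iff there is a multiple of $p^{\nu+1}$ in that range, i.e.\ iff $k+r-1\ge p^{\nu+1}$; equivalently, the code is MDS iff $k+r\le p^{\nu+1}$. The routine bookkeeping is the factorisation and matching of multiplicities in characteristic~$2$; the one subtle step that needs care is the case analysis on $p\mid n/d$ in the third paragraph, since one must argue that among the multiples $\ell n/d$ inside $[1,n-1]$ there genuinely is one with $p\nmid\ell$ whenever $d>1$ and $p\nmid n/d$. This is the only place the hypothesis $\gcd(\gamma,p)=1$ is truly used, through its role in pinning down the $p$-adic valuation of $n$.
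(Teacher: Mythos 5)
Your proposal is correct, but it follows a genuinely different route from the paper's. The paper does not argue through $\gcd(1+x^i,h(x))$ at all: it first reduces the MDS property (via Lemma~\ref{lm:mds-equi} and Lemma~\ref{lm:null-space}) to showing that $(1+x^i)s(x)=0 \bmod (1+x^{p\tau})$ has only the trivial solution in $\mathcal{C}_{p\tau}(1,q,d)$. For $i<p^{\nu+1}$ it runs a combinatorial orbit argument on the coefficients: assuming $s_0=1$ it propagates $s_{(\ell i)\bmod p\tau}=1$, uses $\gcd(i,p\tau)=\gcd(i,\tau)$ to show the orbit covers all multiples of $\tau$, and contradicts the membership condition $\sum_{\ell=0}^{p-1}s_{\ell\tau}=0$; for $k+r>p^{\nu+1}$ it exhibits the explicit nonzero codeword $s(x)=(1+x^{p^{\nu}})(1+x^{p^{\nu+1}}+\cdots+x^{(\gamma-1)p^{\nu+1}})$ annihilated by $1+x^{p^{\nu+1}}$. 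You instead invoke Theorem~\ref{lm:mds} directly (after verifying its hypothesis $\gcd(1+x^\tau,h(x))=1$, which the paper's proof of this theorem never needs) and compute $\gcd(1+x^i,h(x))$ by factoring everything over $\bar{\mathbb{F}}_q$ into roots of unity of odd order $n=\gamma_1 p^{\nu+1}$, collapsing the whole question to the $p$-adic condition $p^{\nu+1}\mid i$. Your calculation checks out: the multiplicity bookkeeping with $2^a$ is right, the identification of the roots of $h(x)$ as $\{\zeta^j: p\nmid j\}$ is correct, and the case split on $p\mid n/d$ (taking $\ell=1$ when $d>1$ and $p\nmid n/d$) is sound. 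What each approach buys: yours gives a single uniform characterization of the bad exponents and handles both directions of the equivalence at once, at the cost of working in the algebraic closure; the paper's stays entirely inside the coefficient ring (more elementary, no splitting field) and its converse direction produces an explicit nonzero kernel element, which is more constructive.
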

\begin{proof}
We need to show that the codes $\textsf{GEBR}(p,\tau,k,r,q,1)$ are $(n,k)$ recoverable if
$k+r\leq p^{\nu +1}$ and are not $(n,k)$ recoverable if $k+r> p^{\nu+1}$.

We first consider that $k+r\leq p^{\nu+1}$. Let $1\leq i\leq k+r-1\leq p^{\nu+1}-1$.
In order to prove that $\textsf{GEBR}(p,\tau,k,r,q,1)$ are $(n,k)$ recoverable, we have to show
that Eq. \eqref{eq:mds-cond} has a unique solution $s(x)=0$ when $c(x)=0$ for all
$1\leq i\leq p^{\nu+1}-1$ by Lemma~\ref{lm:null-space}.

Suppose that we can find a non-zero polynomial
$s(x)\in\mathcal{C}_{p\tau}(1,\tau,q,d)$ such that
$(1+x^i)s(x)=0\bmod (1+x^{p\tau})$ holds, we can deduce a contradiction as
follows.  Without loss of generality, let $s(x)=\sum_{v=0}^{p\tau-1}s_{v}x^{v}$
and $s_{0}=1$.  Then we have
\[
s_{((\ell-1)i)\bmod p\tau}+s_{(\ell i)\bmod p\tau}=0,
\]
for $0\leq \ell\leq p\tau-1$. By induction, we have
\begin{equation}
s_{(\ell i)\bmod p\tau}=1.
\label{eq:mds-gx1-1}
\end{equation}
Let $c=\gcd(i,p\tau)$. Recall that $1\leq i\leq p^{\nu+1}-1$, we have
$c=\gcd(i,p\tau)=\gcd(i,\gamma p^{\nu+1})=\gcd(i,\gamma p^{\nu})=\gcd(i,\tau)$.
Then,
\begin{equation}
\{(\ell i)\bmod p\tau: 0\leq \ell\leq p\tau-1\}=\{\ell c: 0\leq \ell\leq \frac{p\tau}{c}-1\}.
\label{eq:mds-gx1-2}
\end{equation}
Since, in particular, $c$ divides $\tau$, we have
\begin{equation}
\{\ell \tau: 0\leq \ell\leq p-1\}\subset\{\ell c: 0\leq \ell\leq \frac{p\tau}{c}-1\}.
\label{eq:mds-gx1-3}
\end{equation}
By Eq. \eqref{eq:mds-gx1-1}, Eq. \eqref{eq:mds-gx1-2} and Eq. \eqref{eq:mds-gx1-3},
we have $s_{\ell \tau}=1$ for all $0\leq \ell\leq p-1$ and therefore
$\sum_{\ell=0}^{p-1}s_{\ell \tau}=1$, which contradicts to $\sum_{\ell=0}^{p-1}s_{\ell\tau}=0$
(since $s(x)\in\mathcal{C}_{p\tau}(1,\tau,q,d)$).

Next, we consider that $k+r> p^{\nu+1}$. We will show that Eq. \eqref{eq:mds-cond}
has a non-zero solution $s(x)\in\mathcal{C}_{p\tau}(1,\tau,q,d)$ when $c(x)=0$
and $i\in\{1,2,\ldots,k+r-1\}$.

Note that $m=p\tau=\gamma p^{\nu+1}$. Let
\begin{eqnarray}
&&s_0(x)=1+x^{p^{\nu+1}}+x^{2p^{\nu+1}}+\cdots+x^{(\gamma-1)p^{\nu+1}},\label{eq:mds-s0}\\
&&s_1(x)=x^{p^{\nu}}+x^{p^{\nu}+p^{\nu+1}}+\cdots+
x^{p^{\nu}+(\gamma-1)p^{\nu+1}},\label{eq:mds-s1}
\end{eqnarray}
and $s(x)=s_0(x)+s_1(x)\bmod (1+x^m)$. We first show that the
polynomial $s(x)=s_0(x)+s_1(x)$ is in $\mathcal{C}_{p\tau}(1,\tau,q,d)$.

Note that $s_1(x)=x^{p^\nu}s_0(x)$, we have
$1+x^m=1+x^{\gamma p^{\nu+1}}$ and $(1+x^{p^{\nu+1}})s_0(x)=1+x^{\gamma p^{\nu+1}}$.
Note that $1+x^\tau=1+x^{\gamma p^{\nu}}$ divides $1+x^{\gamma p^{\nu+1}}$.
Since $\gcd(\gamma, p)=1$, by Euclidean algorithm,
$\gcd(1+x^{p^{\nu+1}}, 1+x^{\gamma p^{\nu}})=1+x^{p^{\nu}}$.
Since  $(1+x^{p^{\nu+1}})(1+x^{p^{\nu+1}}+x^{2 p^{\nu+1}}+\cdots+ x^{(\gamma-1)p^{\nu+1}})
=1+x^{\gamma p^{\nu+1}}=(1+x^{\gamma p^{\nu}})q(x)$ and
$\gcd(1+x^{p^{\nu+1}}, 1+x^{\gamma p^{\nu}})=1+x^{p^{\nu}}$, we have
$(1+x^{\gamma p^{\nu}})|(1+x^{p^{\nu}})(1+x^{p^{\nu+1}}+x^{2 p^{\nu+1}}+\cdots+
x^{(\gamma-1)p^{\nu+1}})$. Hence,
$s(x)=s_0(x)+s_1(x)=(1+x^{p^{\nu}})(1+x^{p^{\nu+1}}+x^{2 p^{\nu+1}}+\cdots+ x^{(\gamma-1)p^{\nu+1}})
\bmod (1+x^m)$ is also divided by $1+x^{\gamma p^{\nu}}=1+x^\tau$, i.e., $s(x)\in \mathcal{C}_{p\tau}(1,\tau,q,d)$. It is clear that $s_0(x)+s_1(x)\neq 0$.

From the definitions of $s_0(x)$ and $s_1(x)$, we have
\begin{align*}
x^{p^{\nu+1}}s_0(x)=&x^{p^{\nu+1}}+x^{2 p^{\nu+1}}+x^{3p^{\nu+1}}+\cdots+x^{\gamma p^{\nu+1}}\\
=&x^{p^{\nu+1}}+x^{2p^{\nu+1}}+x^{3p^{\nu+1}}+\cdots+1\\
=&s_0(x)\bmod (1+x^m),
\end{align*}
where the second equation above comes from that $x^{\gamma p^{\nu+1}}=x^{m}=1\bmod (1+x^m)$.
Similarly, we can obtain that $x^{p^{\nu+1}}s_1(x)=s_1(x)\mod (1+x^m)$. Therefore, we have
\[
(1+x^{p^{\nu+1}})s(x)=(1+x^{p^{\nu+1}})(s_0(x)+s_1(x))=0\bmod (1+x^m),
\]
and $s(x)\neq 0$ is a solution to $(1+x^{p^{\nu+1}})s(x)=0\bmod (1+x^m)$.
The theorem is proved.
\end{proof}

We can directly obtain the following result from Theorem~\ref{thm:mds-gx1}.

\begin{corollary}
The codes $\textsf{GEBR}(p,\tau,k,r,q,1)$ with $k+r=p\tau$ are $(n,k)$ recoverable if and only if
$\tau=p^{\nu}$, where $\nu$ is a non-negative integer.
\label{thm:mds3}
\end{corollary}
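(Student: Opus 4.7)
The plan is to apply Theorem~\ref{thm:mds-gx1} directly after writing $\tau$ in its canonical form with respect to the prime $p$. By unique factorization in $\mathbb{Z}$, every positive integer $\tau$ can be written as $\tau=\gamma p^{\nu}$ with $\nu\geq 0$ and $\gcd(\gamma,p)=1$, so this decomposition is available without any assumption.

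With this decomposition in hand, I would substitute $k+r=p\tau$ into the MDS criterion of Theorem~\ref{thm:mds-gx1}. That criterion states the code is MDS if and only if $k+r\leq p^{\nu+1}$. Since $k+r=p\tau=\gamma p^{\nu+1}$, the inequality becomes $\gamma p^{\nu+1}\leq p^{\nu+1}$, i.e., $\gamma\leq 1$. Because $\gamma$ is a positive integer, this is equivalent to $\gamma=1$, which is in turn equivalent to $\tau=p^{\nu}$ for some non-negative integer $\nu$. Both directions of the biconditional follow from this chain of equivalences.

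There is essentially no obstacle here, since the corollary is an arithmetic specialization of Theorem~\ref{thm:mds-gx1}. The only point that requires any care is observing that the $p$-adic decomposition $\tau=\gamma p^{\nu}$ is well-defined and unique for every positive integer $\tau$, so that the hypothesis of Theorem~\ref{thm:mds-gx1} always applies; after that the result is a one-line inequality. For this reason I would keep the write-up to essentially two sentences: state the decomposition, then observe that the condition $k+r\leq p^{\nu+1}$ collapses to $\gamma=1$.
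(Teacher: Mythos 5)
Your argument is correct and matches the paper, which simply states that the corollary follows directly from Theorem~\ref{thm:mds-gx1}; your substitution $k+r=p\tau=\gamma p^{\nu+1}$ into the criterion $k+r\leq p^{\nu+1}$, collapsing it to $\gamma=1$, is exactly the intended specialization. No issues.
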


By Theorem \ref{lm:mds}, the codes $\textsf{GEBR}(p,\tau,k,r,q,g(x))$ are $(n,k)$ recoverable if and only if
$\gcd(1+x^i,h(x))=1$ for all $i=1,2,\ldots,k+r-1$. When $g(x)=1$, we have
$h(x)=1+x^{\tau}+\cdots+x^{(p-1)\tau}$. According to Corollary \ref{thm:mds3},
the codes $\textsf{GEBR}(p,\tau,k,r=m-k,q,1)$ are $(n,k)$ recoverable if $p$
is an odd prime and $\tau$ is a power of $p$. Combining the results in Theorem~\ref{lm:mds}
and Corollary \ref{thm:mds3}, we can directly obtain the following theorem.
\begin{theorem}
If $p$ is an odd prime and $\tau$ is a power of $p$, then we have $\gcd(1+x^i,1+x^{\tau}+\ldots+x^{(p-1)\tau})=1$
for all $i=1,2,\ldots,p\tau-1$.
\label{thm:mds4}
\end{theorem}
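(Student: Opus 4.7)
The plan is to read Theorem~\ref{thm:mds4} as the explicit coprimality content of the MDS criterion in Theorem~\ref{lm:mds}, specialized to the regime that Corollary~\ref{thm:mds3} already certifies as MDS. So the argument I would give is a short two-step corollary-style deduction rather than a direct polynomial manipulation.

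First I would fix $g(x) = 1$ so that $h(x) = 1 + x^{\tau} + \cdots + x^{(p-1)\tau}$, and take the GEBR instance with the largest possible number of columns, $k + r = p\tau$. Since $\tau$ is assumed to be a power of the odd prime $p$, Corollary~\ref{thm:mds3} declares $\textsf{GEBR}(p,\tau,k,r,q,1)$ to be an MDS code for this choice of parameters. This is the one nontrivial input to the proof; everything else is just extracting the implication.

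Next I would invoke Theorem~\ref{lm:mds} in the reverse direction. Its hypotheses $\gcd(g(x), h(x)) = 1$ and $\gcd(1+x^{\tau}, h(x)) = 1$ must be verified before the characterization applies. The first is immediate from $g(x) = 1$. For the second I would use the factorization $(1+x^{\tau})\,h(x) = 1 + x^{p\tau}$ over $\mathbb{F}_q[x]$ (valid since $q$ is a power of $2$): because $p\tau$ is odd, $1 + x^{p\tau}$ is separable, so its two displayed factors must have disjoint root sets, giving $\gcd(1+x^{\tau}, h(x)) = 1$.

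With both hypotheses in place, Theorem~\ref{lm:mds} equates the MDS property obtained from Corollary~\ref{thm:mds3} with the statement $\gcd(1+x^i, h(x)) = 1$ for every $i = 1, 2, \ldots, k+r-1 = p\tau - 1$, which is exactly Theorem~\ref{thm:mds4}. The only substantive point to write out is the separability check that yields $\gcd(1+x^{\tau}, h(x)) = 1$; I expect this to be the sole routine obstacle, and there is no deeper difficulty once Corollary~\ref{thm:mds3} is available.
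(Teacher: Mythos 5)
Your proposal is correct and follows essentially the same route as the paper, which simply states that the theorem is obtained by combining the MDS criterion of Theorem~\ref{lm:mds} with the MDS guarantee of Corollary~\ref{thm:mds3} for $k+r=p\tau$ and $\tau=p^{\nu}$. Your explicit separability check that $\gcd(1+x^{\tau},h(x))=1$ (needed to apply Theorem~\ref{lm:mds}) is a valid detail the paper leaves implicit, but it does not change the argument.
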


Since $\mathcal{C}_{p\tau}(g(x),\tau,q,d)$ is a cyclic code,
the proposed $\textsf{GEBR}(p,\tau,k,r,q,g(x))$ can recover either a burst of up to
$\tau+\deg (g(x))$ (consecutive) erased symbols or up to $d-1$ erased symbols
in a column. Specifically, we can recover a burst of up to
$\tau+\deg (g(x))$ erased symbols as follows. First, cyclic-shift the polynomial
such that the $\tau+\deg (g(x))$ erased symbols are in the last
$\tau+\deg (g(x))$ locations. Then, encode the first $\alpha$ symbols
of the obtained shifted polynomial systematically. Finally, apply the inverse cyclic-shift to the
encoded polynomial to obtain the decoded polynomial.

\begin{example}
Consider the code $\textsf{GEBR}(3,3,6,3,2,1)$, i.e., $\tau=p=3$,
$k=6$ and $r=3$. The entries of an array in the code can be
represented by the nine polynomials
\begin{align*}
s_j(x)=&s_{0,j}+s_{1,j}x+s_{2,j}x^2+s_{3,j}x^3+s_{4,j}x^4+s_{5,j}x^5+\\
&(s_{0,j}+s_{3,j})x^6+(s_{1,j}+s_{4,j})x^7+(s_{2,j}+s_{5,j})x^8,
\end{align*}
where 
$0\leq j\leq 8$
and the 36 information symbols are
$s_{i,0},s_{i,1},s_{i,2},s_{i,3},s_{i,4},s_{i,5}$ with
$0\leq i\leq 5$.
The parity-check matrix of the code is
\[
\begin{bmatrix}
 1&1&1 & 1&1&1 & 1&1&1\\
 1 & x & x^{2}&x^{3}& x^{4}&x^{5}&x^{6}&x^{7}& x^{8}\\
 1 & x^2 & x^{4}&x^{6}& x^{8}&x^{10}&x^{12}&x^{14}& x^{16}\\
 \end{bmatrix}.
\]
According to Corollary \ref{thm:mds3}, the code
$\textsf{GEBR}(3,3,6,3,2,1)$ is $(n,k)$ recoverable.
Also, each column can recover up to three consecutive erasures.
\label{exm:gebr33}
\end{example}

\subsection{Efficient Decoding}
In the encoding/decoding procedures of $\textsf{GEBR}(p,\tau,k,r,q,g(x))$, we need to
solve a Vandermonde linear system over $\mathcal{R}_{p\tau}(q)$ such that the
solved polynomials are in $\mathcal{C}_{p\tau}(g(x),\tau,q,d)$.
In the following, we
present an efficient decoding method for solving the
Vandermonde linear system over $\mathcal{R}_{p\tau}(q)$ based on
the LU factorization of the Vandermonde matrix.
The efficient LU decoding algorithm we propose relies on an efficient algorithm for
division by $1+x^b$, and we first present the efficient division.

\subsubsection{Efficient Division by $1+x^b$ over $\mathcal{R}_{p\tau}(q)$}
We need to first give an efficient decoding algorithm for dividing by
$1+x^b$ over $\mathcal{R}_{p\tau}(q)$ before showing the efficient LU decoding method, where
$b$ is a positive integer such that $1+x^b$ and $h(x)$ are relatively
prime. Given the integer $b$ and the polynomial $f(x)\in \mathcal{C}_{p\tau}(g(x),\tau,q,d)$, we want
to solve $r(x)\in \mathcal{C}_{p\tau}(g(x),\tau,q,d)$ from the equation
\begin{equation}
(1+x^b)r(x)=f(x)\bmod (1+x^{p\tau}).
\label{eq:div}
\end{equation}
The next lemma shows a decoding algorithm for solving
$r(x)\in \mathcal{C}_{p\tau}(g(x),\tau,q,d)$ from Eq. \eqref{eq:div} when $\gcd (b,p)=1$.

\begin{lemma}
Consider Eq. \eqref{eq:div} with $1\leq b<p\tau$ and $\gcd (b,\tau)=a$.
If $\gcd (b,p)=1$, then we can first compute the coefficients $r_j$ of the polynomial $r(x)$
with $j=0,1,\ldots,a-1$ by
\begin{eqnarray}
r_j=\sum_{u=1}^{\frac{p-1}{2}}\sum_{\ell=1}^{\tau}f_{(2u-1)\tau b+\ell b+j}
 \label{eq:div1}
\end{eqnarray}
and the other coefficients of $r(x)$ iteratively by
\begin{equation}
r_{b\ell+j}=f_{b\ell+j}+r_{b(\ell-1)+j}
\label{eq:div2}
\end{equation}
with the index $\ell$ running from 1 to $p\tau/a-1$ and $j=0,1,\ldots,a-1$.
\label{lm:div}
\end{lemma}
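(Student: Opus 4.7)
The plan is to turn the polynomial identity into a coefficient recurrence, identify its orbits under the hypothesis $\gcd(b,p)=1$, and then pin down the one remaining degree of freedom per orbit using the cyclic-code constraint from Lemma~\ref{lm:coeff}. First I would expand $(1+x^b)r(x)=f(x)\bmod(1+x^{p\tau})$ coefficient-wise; since $q$ is a power of~$2$, the coefficient of $x^\ell$ yields $r_\ell+r_{\ell-b}=f_\ell$ with indices modulo $p\tau$, which is exactly Eq.~\eqref{eq:div2}. Because $\gcd(b,p)=1$, we have $\gcd(b,p\tau)=\gcd(b,\tau)=a$, so $\ell\mapsto\ell+b\bmod p\tau$ partitions $\{0,1,\ldots,p\tau-1\}$ into $a$ orbits of length $p\tau/a$, each containing exactly one representative $j\in\{0,1,\ldots,a-1\}$. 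Hence once the $a$ values $r_0,\ldots,r_{a-1}$ are known, iterating Eq.~\eqref{eq:div2} along each orbit determines all other coefficients, and only the closed-form Eq.~\eqref{eq:div1} for $r_j$ requires genuine work.

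To derive that formula I would apply Lemma~\ref{lm:coeff} to $r(x)\in\mathcal{C}_{p\tau}(g(x),q,d)$, giving $\sum_{\ell=0}^{p-1}r_{j+\ell\tau}=0$. The crucial observation is that, since $\gcd(b,p)=1$, the map $u\mapsto u\tau b\bmod p\tau$ is a permutation of $\{0,\tau,2\tau,\ldots,(p-1)\tau\}$, so this sum may equivalently be written as $\sum_{u=0}^{p-1}r_{j+u\tau b}=0$. On the other hand, iterating the recurrence $\tau$ times yields $r_{\ell+\tau b}=r_\ell+\sum_{i=1}^{\tau}f_{\ell+ib}$, and by induction on $u$,
\[
r_{j+u\tau b}=r_j+\sum_{v=0}^{u-1}\sum_{i=1}^{\tau}f_{j+v\tau b+ib}.
\]
Summing over $u=0,1,\ldots,p-1$, using $p\cdot r_j=r_j$ in characteristic~$2$ (because $p$ is odd), and swapping the order of the $(u,v)$ sums so that each $v$ acquires multiplicity $p-1-v$, I arrive at
\[
r_j=\sum_{v=0}^{p-1}(p-1-v)\sum_{i=1}^{\tau}f_{j+v\tau b+ib}.
\]

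The main obstacle, and the step to handle most carefully, is the final parity reduction: with $p$ odd, $p-1-v$ is odd iff $v$ is odd, so in characteristic~$2$ only the values $v=1,3,\ldots,p-2$ survive. Reparameterising $v=2u-1$ with $u=1,\ldots,(p-1)/2$ then produces precisely Eq.~\eqref{eq:div1}. Finally, I would verify that the $r(x)$ assembled from these $r_j$ via Eq.~\eqref{eq:div2} really lies in $\mathcal{C}_{p\tau}(g(x),q,d)$: under the MDS hypothesis $\gcd(1+x^b,h(x))=1$ (which is implicit in the statement), Theorem~\ref{lm:mds} and Lemma~\ref{lm:null-space} guarantee a unique solution of Eq.~\eqref{eq:div} inside that code, and since the true solution also satisfies both the recurrence and the sum-to-zero identity used to fix the initial values, it must coincide with the one produced by the algorithm.
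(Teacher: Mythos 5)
Your proof is correct and follows essentially the same route as the paper's for the heart of the lemma: both arguments telescope the coefficient recurrence $r_\ell = f_\ell + r_{\ell-b}$ across one period of $\tau b$, invoke the constraint $\sum_{\ell=0}^{p-1} r_{j+\ell\tau}=0$ from Lemma~\ref{lm:coeff} together with the fact that $\gcd(b,p)=1$ makes $u\mapsto u\tau b$ permute the multiples of $\tau$, and then reduce the resulting multiplicities modulo $2$ so that only the odd ones ($v=2u-1$) survive, yielding Eq.~\eqref{eq:div1}. The one place you genuinely diverge is the final verification that the computed $r(x)$ lies in $\mathcal{C}_{p\tau}(g(x),q,d)$: the paper checks this directly, showing $\sum_{\ell=0}^{p-1} r_{\ell\tau+\mu}=0$ (so $1+x^\tau$ divides $r(x)$) and deducing $g(x)\mid r(x)$ from $\gcd(1+x^b,g(x))=1$, whereas you appeal to the existence and uniqueness of a solution inside the code (via Theorem~\ref{lm:mds}, Lemma~\ref{lm:mds-equi} and Lemma~\ref{lm:null-space}, under the standing hypothesis $\gcd(1+x^b,h(x))=1$ stated just before the lemma) and argue that the unique in-code solution must coincide with the algorithm's output since both satisfy the same recurrence and the same initial values. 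Both routes are valid; yours is shorter and avoids the external appeal to Lemma~19 of \cite{BR2019}, while the paper's direct check has the advantage of not relying on the invertibility of $1+x^b$ modulo $h(x)$ beyond what is needed for well-posedness.
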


\begin{proof}
See Appendix A.
\end{proof}

Lemma 20 in \cite{BR2019} and Lemma 4 in \cite{HOU2019} are special
case of Lemma \ref{lm:div} with $\tau=1$ and $g(x)=1$, respectively.
By Lemma \ref{lm:div}, there are
$a(\frac{p-1}{2}\cdot \tau-1)+(p\tau-a)$
XORs involved in solving $r(x)$ from Eq. \eqref{eq:div}. In particular, if $a=1$, we have that the number of involved XORs is $\frac{3p\tau-\tau-4}{2}$.

\begin{example}
Consider the example of $p=7$ and $\tau=2$. Let $f(x)=1+x+x^6+x^7+x^{10}+x^{11}+x^{12}+x^{13}\in \mathcal{C}_{7\cdot 2}((1+x^2+x^6),2,2,4)$, i.e., $f_0=f_1=f_6=f_7=f_{10}=f_{11}=f_{12}=f_{13}=1$ and $f_{2}=f_{3}=f_{4}=f_{5}=f_{8}=f_{9}=0$. We want to solve $r(x)$ from $(1+x^3)r(x)=f(x)$. According to Eq. \eqref{eq:div1}, we have
\begin{align*}
r_0=f_9+f_{12}+f_7+f_{10}+f_5+f_{8}=1.
\end{align*}
The other coefficients can be computed as
\begin{align*}
r_{1}=&r_{4}=r_6=r_9=r_{10}=r_{11}=0,\\
r_{2}=&r_3=r_{5}=r_{7}=r_{8}=r_{12}=r_{13}=1,
\end{align*}
according to Eq. \eqref{eq:div2}. Therefore, $r(x)=1+x^2+x^3+x^5+x^7+x^8+x^{12}+x^{13}$.
We can check that $r(x)\in \mathcal{C}_{7\cdot 2}((1+x^2+x^6),2,2,4)$.
\end{example}

When $\tau$ is a power of an odd prime $p$ and $g(x)=1$, $\textsf{GEBR}(p,\tau,k,r=m-k,q,1)$ are
$(n,k)$ recoverable by Corollary \ref{thm:mds3}. If $\gcd(b,p)=1$, then we can solve Eq. \eqref{eq:div} by
Lemma \ref{lm:div}; otherwise, if $b$ is a multiple of $p$, then the decoding method is as follows.
\begin{lemma}
Consider that $\tau=p^{\nu}$, where $\nu$ is a positive integer.
Let $b=up^s$, where $\gcd(u,p)=1$ and $1\leq s\leq \nu$.
We can compute the coefficients $r_{p^{\nu+1}-p^s+j}$ of the polynomial $r(x)$
in Eq. \eqref{eq:div} with $j=0,1,\ldots,p^s-1$ by
\begin{eqnarray}
r_{p^{\nu+1}-p^s+j}=\sum_{i=0}^{\frac{p^{\nu-s+1}-3}{2}}f_{2iup^s+up^s+j}
, \label{eq:div21}
\end{eqnarray}
and the other coefficients of $r(x)$ iteratively by
\[
r_{p^{\nu+1}-p^s+j+\ell up^s}=f_{p^{\nu+1}- p^s+j+\ell up^s}+r_{p^{\nu+1}- p^s+j+(\ell-1)up^s},
\]
where $\ell=1,2,\ldots,p^{\nu-s+1}-1$. Recall that the indices are taken modulo $m=p^{\nu+1}$ throughout the paper.
\label{lm:div2}
\end{lemma}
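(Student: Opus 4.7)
The strategy mirrors the proof of Lemma~\ref{lm:div}. Matching coefficients in Eq.~\eqref{eq:div} gives the scalar recursion $r_t + r_{t-b} = f_t$ in $\mathbb{F}_q$, with all indices read modulo $m = p^{\nu+1}$. Since $\gcd(b,m) = \gcd(up^s,p^{\nu+1}) = p^s$, the additive translation $t \mapsto t+b$ partitions $\mathbb{Z}_m$ into $p^s$ orbits, namely the cosets $j + p^s\mathbb{Z}_m$ for $j \in \{0,1,\ldots,p^s-1\}$, each of length $N := p^{\nu-s+1}$. Once a single coefficient in each orbit is known, the displayed recursion propagates it to determine the remaining $N-1$ coefficients, so the work is to establish the closed form for $r_{p^{\nu+1}-p^s+j}$ in Eq.~\eqref{eq:div21}.

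To derive that initial value, I would telescope the recursion along the orbit of $j$ in such a way that the parity constraint from Lemma~\ref{lm:coeff} can be brought to bear. Because $\tau = p^\nu$ and $p^s \mid \tau$, every index of the form $i\tau + \mu$ with $\mu \equiv j \pmod{p^s}$ sits in the orbit of $j$, and the $p$ indices $\mu, \mu+\tau, \ldots, \mu+(p-1)\tau$ split the cycle of length $N$ into $p$ arcs of equal length $p^{\nu-s}$. Iterating $r_{t+b} = r_t + f_{t+b}$ writes each $r_{i\tau+\mu}$ as $r_{p^{\nu+1}-p^s+j}$ plus a sum of $f$-values taken along the arc between them. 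Summing the $p$ telescopes and invoking $\sum_{i=0}^{p-1} r_{i\tau+\mu} = 0$ eliminates the $r$-side; the $(p-1)$-fold residual contribution of $r_{p^{\nu+1}-p^s+j}$ vanishes in characteristic $2$ since $p$ is odd, leaving a weighted sum of $f$-coefficients whose weights must be tallied modulo $2$.

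The main obstacle is exactly this tally: each $f_{kb+j}$ appears in several telescopes, and I must show the even multiplicities cancel while the odd multiplicities assemble into the half-range sum $\sum_{i=0}^{(N-3)/2} f_{(2i+1)up^s+j}$. The regularity needed here is provided by $\gcd(u,p)=1$, which ensures that adding $b=up^s$ traverses the orbit cyclically with step $p^s$, and by the fact that the $p$ parity points sit at spacing $p^{\nu-s}$; together these reproduce the triangular coefficient pattern $p-1, p-2, \ldots, 1$ that appeared in Lemma~\ref{lm:div}, whose reduction modulo $2$ picks out the odd positions. Once Eq.~\eqref{eq:div21} is in hand, membership $r(x) \in \mathcal{C}_{p^{\nu+1}}(1,q,d)$ follows from the MDS property in Corollary~\ref{thm:mds3}, which guarantees that $1+x^b$ acts invertibly on the code and therefore that the unique lift constructed by the recursion automatically satisfies $(1+x^\tau) \mid r(x)$.
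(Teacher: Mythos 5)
Your skeleton matches the paper's at the two easy ends: the decomposition of $\mathbb{Z}_m$ into $p^s$ orbits of length $N=p^{\nu-s+1}$ under translation by $b$, the observation that one seed per orbit plus the stated recursion determines everything, and the use of the column constraints of Lemma~\ref{lm:coeff} to produce the seed. The gap is that the proposal stops exactly where the lemma's content begins. The ``tally'' you defer \emph{is} the proof of Eq.~\eqref{eq:div21}, and the asserted analogy with Lemma~\ref{lm:div} is not automatic: the parity points $\mu+i\tau$ are spaced $\tau$ apart in the natural index order, but their positions in the $b$-step traversal of the orbit are $\ell\equiv u^{-1}\bigl((\mu-j)/p^s+ip^{\nu-s}\bigr)\pmod{p^{\nu-s+1}}$, i.e.\ scrambled by $u^{-1}\bmod p$; you must verify that they remain equally spaced at $N/p$ in traversal order before the triangular weights $p-1,p-2,\ldots,1$ and their mod-$2$ reduction can be invoked. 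The paper's own proof avoids this tally entirely by a different summation: it pairs the relations $f_{(2i+1)b+j}=r_{(2i+1)b+j}+r_{2ib+j}$ over the half range so that the right side becomes $\sum_{\ell=0}^{N-2}r_{\ell b+j}$, the sum of $r$ over all but one element of the coset, and then kills the full coset sum by grouping it into $\tau$-cosets via Lemma~\ref{lm:coeff}. (Your closing step, deducing $r(x)\in\mathcal{C}_{p\tau}(1,q,d)$ from uniqueness via Corollary~\ref{thm:mds3} rather than exhibiting $r(x)$ as an explicit polynomial multiple of $f(x)$ as the paper does, is fine.)

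More importantly, the deferred computation conceals a real subtlety rather than mere bookkeeping. Carrying either argument out carefully, the element of the coset omitted from $\{\ell b+j:0\le\ell\le N-2\}$ is $(N-1)b+j\equiv j-up^s$, so the half-range sum $\sum_{i=0}^{(N-3)/2}f_{(2i+1)up^s+j}$ computes $r_{j-up^s\bmod m}$, the $b$-step predecessor of $j$, and this equals the claimed $r_{p^{\nu+1}-p^s+j}=r_{j-p^s\bmod m}$ only when $u\equiv 1\pmod{p^{\nu-s+1}}$ (the paper's Eq.~\eqref{eq:lm-div4} quietly assumes this). A concrete check: for $p=3$, $\nu=s=1$, $u=2$, $r(x)=1+x^3$ and $f(x)=(1+x^6)r(x)=x^3+x^6\bmod(1+x^9)$, Eq.~\eqref{eq:div21} with $j=0$ would give $r_6=f_6=1$, whereas in fact $r_6=0$ (the sum correctly yields $r_{0-6}=r_3=1$). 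So the step you labelled ``the main obstacle'' is precisely where both the work and the index bookkeeping live, and asserting that it ``reproduces the pattern of Lemma~\ref{lm:div}'' does not discharge it.
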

\begin{proof}
See Appendix B.
\end{proof}
By Lemma \ref{lm:div2}, there are
$p^s(\frac{p^{\nu-s+1}-3}{2})+p^{\nu+1}-p^{s}=\frac{3p\tau-5p^s}{2}$
XORs involved in solving $r(x)$ from Eq. \eqref{eq:div}.

\begin{example}
Consider the example of $p=\tau=3$. Let $f(x)=1+x+x^3+x^7\in \mathcal{C}_{3\cdot 3}(1,3,2,2)$, i.e., $f_0=f_1=f_3=f_7=1$ and $f_{2}=f_{4}=f_{5}=f_{6}=f_{8}=0$. We want to solve $r(x)$ from $(1+x^3)r(x)=f(x)$. According to Eq. \eqref{eq:div21} in
Lemma \ref{lm:div2}, we have
\begin{align*}
r_6=&f_{3}=1,\\
r_7=&f_{4}=0,\\
r_8=&f_{5}=0,
\end{align*}
and the other coefficients are $r_0=r_2=r_5=0$ and $r_1=r_3=r_4=1$.
Therefore, $r(x)=x+x^3+x^4+x^6$, which is in $\mathcal{C}_{3\cdot 3}(1,3,2,2)$.
\end{example}

\subsubsection{LU Decoding Method}
Let $\mathbf{V}_{r\times r}(\mathbf{a})$ be an $r\times r$ Vandermonde
matrix,
\begin{equation}
\mathbf{V}_{r\times r}(\mathbf{a})=\begin{bmatrix}
1 & x^{a_1}& \cdots & x^{(r-1)a_1}\\
1 & x^{a_2}& \cdots & x^{(r-1)a_2}\\
\vdots & \vdots & \ddots & \vdots\\
1 & x^{a_r}& \cdots & x^{(r-1)a_r}\\
\end{bmatrix},
\label{eq:vand-matrix}
\end{equation}
where $\mathbf{a}=(a_1,\cdots,a_r)$ and $a_1,a_2,\ldots,a_r$ are distinct integers that range from 0 to
$k+r-1$. Let $\mathbf{u}=(u_1(x),\ldots,u_r(x))\in \mathcal{C}_{p\tau}(g(x),\tau,q,d)^r$
and $\mathbf{v}=(v_1(x),\ldots,v_r(x))\in \mathcal{C}_{p\tau}(g(x),\tau,q,d)^r$.
Consider the linear equations
\begin{equation}
\mathbf{u}\mathbf{V}_{r\times r}(\mathbf{a})=\mathbf{v} \bmod (1+x^{p\tau}).
\label{eq:vand}
\end{equation}
We first review the LU factorization of a Vandermonde matrix given in \cite{YANG2005On} and then show the LU decoding algorithm for solving
$\mathbf{u}$ from the Vandermonde linear system in Eq. \eqref{eq:vand}.

\begin{theorem}\cite{YANG2005On}
For a positive integer $r$, the $r\times r$ Vandermonde matrix $\mathbf{V}_{r\times r}(\mathbf{a})$ in Eq. \eqref{eq:vand-matrix}
can be factorized into
\begin{equation*}
\mathbf{V}_{r\times r}(\mathbf{a})= \mathbf{L}_r^{(1)}\mathbf{L}_r^{(2)}\ldots \mathbf{L}_r^{(r-1)}\mathbf{U}_r^{(r-1)}\mathbf{U}_r^{(r-2)}\ldots \mathbf{U}_r^{(1)}
\label{eq7}
\end{equation*}
where $\mathbf{U}_r^{(\ell)}$ is the upper triangular matrix
\begin{figure}[H]
  \centering
  $$
\mathbf{U}_{r}^{(\ell)}=
\left[
\begin{array}{c|c}
\begin{matrix}
\mathbf{I}_{r-\ell-1}
\end{matrix}& \begin{matrix}
\mathbf{0}
\end{matrix} \\ \hline
\begin{matrix}
\mathbf{0}
\end{matrix}& \begin{matrix}
1 & x^{a_{1}}&0 &\cdots &0&0\\
0 & 1&x^{a_{2}} & \cdots &0&0\\
\vdots&\vdots&\vdots &\ddots&\vdots&\vdots\\
0 & 0&0 & \cdots &1&x^{a_{l}}\\
0 & 0&0 &\cdots &0&1\\
\end{matrix}
\end{array}
\right]
$$
\end{figure}
and $\mathbf{L}_r^{(\ell)}$ is the lower triangular matrix
\begin{figure}[H]
  \centering
  $$
\left[
\begin{array}{c|c}
\begin{matrix}
\mathbf{I}_{r-\ell-1}
\end{matrix}& \begin{matrix}
\mathbf{0}
\end{matrix} \\ \hline
\begin{matrix}
\mathbf{0}
\end{matrix}& \begin{matrix}
1 & 0 &\cdots &0\\
1&x^{a_{r-\ell+1}}+x^{a_{r-\ell}} & \cdots &0\\
\vdots&\vdots&\ddots &\vdots\\
0&0 &\cdots &x^{a_r}+x^{a_{r-\ell}}\\
\end{matrix}
\end{array}
\right]
$$
\end{figure}
for $\ell=1,2,\ldots,r-1$.
\label{thm:lu}
\end{theorem}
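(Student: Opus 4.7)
The statement is the classical bidiagonal LU factorization of a Vandermonde matrix, proved in full in \cite{YANG2005On}, so my first-choice plan is simply to invoke that reference. A careful reading of \cite{YANG2005On} shows that every step uses only additions and multiplications in a commutative ring; no division and no field-theoretic property of the scalar entries is ever used. Replacing the scalars $t_i$ there with the polynomial entries $x^{a_i}\in\mathcal{R}_{p\tau}(q)$ therefore leaves each manipulation valid, and in the characteristic-$2$ setting of the present paper the bookkeeping even simplifies because $-1=1$ in $\mathbb{F}_q[x]$. The present theorem is then an immediate corollary.

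If a self-contained derivation is preferred, the plan is to verify the factorization by induction on $r$. The base case $r=2$ is a routine two-by-two product whose only non-trivial ingredient is the characteristic-$2$ identity $x^{a_1}+x^{a_1}=0$, which collapses $\mathbf{L}_2^{(1)}\mathbf{U}_2^{(1)}$ into $\mathbf{V}_{2\times 2}(\mathbf{a})$. For the inductive step, one partitions the product $\mathbf{L}_r^{(1)}\cdots\mathbf{L}_r^{(r-1)}\mathbf{U}_r^{(r-1)}\cdots\mathbf{U}_r^{(1)}$ according to the block structure of the factors and recognises an inner sub-product as an embedded copy of the claimed factorization for $\mathbf{V}_{(r-1)\times(r-1)}(a_1,\ldots,a_{r-1})$ supplied by the induction hypothesis; the outermost factors then contribute precisely the one extra row and column needed to extend to $\mathbf{V}_{r\times r}(\mathbf{a})$.

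The residual verification is a telescoping calculation: each diagonal difference $x^{a_r}+x^{a_{r-\ell}}$ from an $\mathbf{L}$-factor combines with the superdiagonal entry $x^{a_\ell}$ of the paired $\mathbf{U}$-factor to generate the missing power $x^{(j-1)a_r}$, after cancellation of matched pairs in characteristic $2$. The main obstacle I anticipate is therefore not algebraic but notational, namely carefully tracking the sub-block indices across the $2(r-1)$ bidiagonal factors and identifying which monomials survive the characteristic-$2$ cancellations. I expect no obstruction from the quotient ring $\mathcal{R}_{p\tau}(q)$ itself, since every manipulation is a polynomial identity in $\mathbb{F}_q[x]$ that holds before reduction modulo $1+x^{p\tau}$ and no inverse in the quotient is ever required; the factorization therefore descends to $\mathcal{R}_{p\tau}(q)$ without any further argument.
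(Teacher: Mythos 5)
Your first-choice plan—invoking \cite{YANG2005On} and noting that the bidiagonal factorization is a polynomial identity valid over any commutative ring (hence over $\mathcal{R}_{p\tau}(q)$ in characteristic $2$, where differences become sums)—is exactly what the paper does: it states the theorem with that citation and offers no further proof, only a worked $r=3$ example. Your supplementary inductive sketch is a reasonable bonus but is not needed to match the paper's treatment.
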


When $r=3$, the $3\times 3$ Vandermonde matrix $\mathbf{V}_{3\times 3}(a_1,a_2,a_3)$ can be factorized as
\begin{align*}
&\mathbf{L}_3^{(1)}\mathbf{L}_3^{(2)}\mathbf{U}_3^{(2)}\mathbf{U}_3^{(1)}\\
=&
\begin{bmatrix}
 1&0&0\\
 0&1&0\\
 0&1&x^{a_3}+x^{a_2}\\
 \end{bmatrix}\cdot
\begin{bmatrix}
 1&0&0\\
 1&x^{a_2}+x^{a_1}&0\\
 0&1&x^{a_3}+x^{a_1}\\
 \end{bmatrix}\cdot \\
&\begin{bmatrix}
 1&x^{a_1}&0\\
 0&1&x^{a_2}\\
 0&0&1\\
 \end{bmatrix}\cdot
\begin{bmatrix}
 1&0&0\\
 0&1&x^{a_1}\\
 0&0&1\\
 \end{bmatrix}.
\end{align*}

With the LU factorization of the Vandermonde matrix in Theorem \ref{thm:lu}, we
can solve the Vandermonde linear system in Eq. \eqref{eq:vand} by solving
the following linear equations
\[
\mathbf{u}\mathbf{L}_r^{(1)}\mathbf{L}_r^{(2)}\ldots \mathbf{L}_r^{(r-1)}\mathbf{U}_r^{(r-1)}\mathbf{U}_r^{(r-2)}\ldots \mathbf{U}_r^{(1)}=\mathbf{v}.
\]



\begin{algorithm}[!h]
 \caption{Solving a Vandermonde Linear System}
 \begin{algorithmic}[1]
\renewcommand{\algorithmicrequire}{ \textbf{Input:}} 
\renewcommand{\algorithmicensure}{ \textbf{Output:}} 
\REQUIRE positive integer $r$, prime number $p$, integers $a_1,a_2,\ldots,a_r$, and $\mathbf{v} = (v_1(x),v_2(x),\ldots,v_r(x))\in \mathcal{C}_{p\tau}(g(x),\tau,q,d)^r$.~~\\
\ENSURE $\mathbf{u} = (u_1(x),u_2(x),\ldots,u_r(x))\in \mathcal{C}_{p\tau}(g(x),\tau,q,d)^r$ that satisfies Eq. \eqref{eq:vand}.~~\\
\renewcommand{\algorithmicensure}{ \textbf{Require:}}     
\ENSURE $x^{a_{i_1}}+x^{a_{i_2}}$ is relatively prime to $h(x)$ over $\mathbb{F}_{q}[x]$ for all $0\le i_1 \le i_2\le k+r-1$.~~\\
\STATE
$\textbf{u}\leftarrow \textbf{v}$
\FOR{$i$ from 1 to $r-1$ }
\FOR { $j$ from $r-i+1$ to $r$}
\STATE {$u_j(x)\leftarrow u_j(x)+u_{j-1}(x)x^{a_{i+j-r}}$}\\
\ENDFOR
\ENDFOR
 \FOR {$i$ from $r-1$ down to 1 }
            \STATE {Solve $g(x)$ from $(x^{a_r}+x^{a_{r-i}})g(x) = u_r(x)$ by Lemma \ref{lm:div} or Lemma \ref{lm:div2}}\\
            \STATE {$u_r(x)\leftarrow g(x)$}\\
\FOR {$j$ from $r-1$ down to $r-i+1$}
\STATE {Solve $g(x)$ from $(x^{a_j}+x^{a_{r-i}})g(x) = (u_j(x)+u_{j+1}(x))$ by Lemma \ref{lm:div} or Lemma \ref{lm:div2}}\\
\STATE {$u_j(x)\leftarrow g(x)$}\\
\ENDFOR
\STATE {$u_{r-i}(x)\leftarrow u_{r-i}(x)+u_{r-i+1}(x)$}\\
\ENDFOR
\RETURN $\textbf{u} = (u_1(x),...,u_r(x))$\\
 \end{algorithmic}
    \label{alg:lu}
\end{algorithm}

The decoding algorithm for solving the Vandermonde linear system based on
the LU factorization of a Vandermonde matrix is given in Algorithm \ref{alg:lu}.  
In Algorithm \ref{alg:lu}, Steps 2-4 require $r(r-1)/2$ additions and
$r(r-1)/2$ multiplications and Steps 5-11 require $r(r-1)/2$ backward
additions and $r(r-1)/2$ divisions by factors of the form
$x^{a_j}+x^{a_{r-i}}$.

\begin{example}
Continue from Example \ref{exm:gebr33}, where $\tau=p=3$, $k=6$ and $r=3$. We have six information polynomials
\begin{align*}
& s_0(x)=1+x+x^3+x^4, \\
&s_1(x)=x+x^2+x^4+x^5,\\
&s_2(x)=x+x^4,\\
& s_3(x)=1+x^2+x^3+x^5, \\
&s_4(x)=x+x^2+x^7+x^8,\\
&s_5(x)=x+x^7,
\end{align*}
where each polynomial is in $\mathcal{C}_{3\cdot 3}(1,3,2,2)$. The parity-check matrix of the code is
\[
\begin{bmatrix}
 1&1&1 & 1&1&1 & 1&1&1\\
 1 & x & x^{2}&x^{3}& x^{4}&x^{5}&x^{6}&x^{7}& x^{8}\\
 1 & x^2 & x^{4}&x^{6}& x^{8}&x^{10}&x^{12}&x^{14}& x^{16}\\
 \end{bmatrix}.
\]
Therefore, we can obtain
\begin{small}
\begin{align*}
\begin{bmatrix}
 s_6(x) \text{ }s_7(x) \text{ }s_8(x)\\
 \end{bmatrix}\begin{bmatrix}
 1&x^6&x^{12}\\
 1&x^7&x^{14}\\
 1&x^8&x^{16}\\
 \end{bmatrix}
=\begin{bmatrix}
 x+x^2+x^4+x^8 \\1+x+x^4+x^5+x^6+x^8 \\ 1+x^5+x^6+x^8\\
\end{bmatrix}^T.
\end{align*}
\end{small}
According to Theorem \ref{thm:lu}, the above Vandermonde matrix can be factorized as
\begin{align*}
&\begin{bmatrix}
 1&x^6&x^{12}\\
 1&x^7&x^{14}\\
 1&x^8&x^{16}\\
 \end{bmatrix}=
\begin{bmatrix}
 1&0&0\\
 0&1&0\\
 0&1&x^7+x^8\\
 \end{bmatrix}\cdot \\
&\begin{bmatrix}
 1&0&0\\
 1& x^6+x^7&0\\
 0&1&x^6+x^8\\
 \end{bmatrix}\cdot
\begin{bmatrix}
 1&x^6&0\\
 0&1&x^7\\
 0&0&1\\
 \end{bmatrix}\cdot
\begin{bmatrix}
 1&0&0\\
 0&1&x^6\\
 0&0&1\\
 \end{bmatrix}.
\end{align*}

By Algorithm \ref{alg:lu}, we can solve the three parity polynomials as follows.
First, we solve the following linear system
\begin{align*}
&\begin{bmatrix}
 s'''_6(x)&s'''_7(x)&s'''_8(x)\\
 \end{bmatrix}\begin{bmatrix}
 1&0&0\\
 0&1&x^6\\
 0&0&1\\
 \end{bmatrix}\\
=&\begin{bmatrix}
 x+x^2+x^4+x^8 \\1+x+x^4+x^5+x^6+x^8 \\ 1+x^5+x^6+x^8\\
\end{bmatrix}^T
\end{align*}
to obtain
\begin{align*}
\begin{bmatrix}
s'''_6(x)\\
s'''_7(x)\\
s'''_8(x))\\
\end{bmatrix}=\begin{bmatrix}
x +x^2 +x^4+x^8\\
1+x+ x^4+x^5+x^6+x^8\\
1+x+ x^2+x^3+x^7+x^8
\end{bmatrix}.
\end{align*}
Then, we solve the following linear system
\begin{align*}
&\begin{bmatrix}
s''_6(x) \text{ }s''_7(x) \text{ }s''_8(x)\\
 \end{bmatrix}\begin{bmatrix}
 1&x^6&0\\
 0&1&x^7\\
 0&0&1\\
 \end{bmatrix}\\
 =&\begin{bmatrix}
 x +x^2 +x^4+x^8 \\1+x+ x^4+x^5+x^6+x^8\\1+x+ x^2+x^3+x^7+x^8\\
 \end{bmatrix}^T
\end{align*}
to obtain
\begin{align*}
\begin{bmatrix}
s''_6(x)\\
s''_7(x)\\
s''_8(x))\\
\end{bmatrix}=\begin{bmatrix}
x +x^2 +x^4+x^8\\
1+x^4+x^6+x^7\\
1+x+ x^3+ x^4+ x^5+ x^8
\end{bmatrix}.
\end{align*}
Next, we solve the following linear system
\begin{align*}
&[ s'_6(x) \text{ } s'_7(x) \text{ } s'_8(x)]
\begin{bmatrix}
 1&0&0\\
 1& x^6+x^7&0\\
 0&1&x^6+x^8\\
 \end{bmatrix}\\
=&\begin{bmatrix}
x +x^2 +x^4+x^8\\1+x^4+x^6+x^7\\1+x+ x^3+ x^4+ x^5+ x^8\\
\end{bmatrix}^T
\end{align*}
to obtain
\begin{align*}
\begin{bmatrix}
s'_6(x)\\
s'_7(x)\\
s'_8(x)\\
\end{bmatrix}=\begin{bmatrix}
x^4 +x^5+x^7+x^8\\
x+x^2+x^5+x^7\\
x^2+ x^3+ x^5+x^6
\end{bmatrix}.
\end{align*}
Finally, we solve the following linear system
\begin{small}
\begin{align*}
[ s_6(x) \text{ } s_7(x) \text{ } s_8(x)]
\begin{bmatrix}
 1&0&0\\
 0&1&0\\
 0&1&x^7+x^8\\
 \end{bmatrix}=\begin{bmatrix}
x^4 +x^5+x^7+x^8\\x+x^2+x^5+x^7\\x^2+ x^3+ x^5+x^6\\
\end{bmatrix}^T
\end{align*}
\end{small}
to obtain
\begin{align*}
\begin{bmatrix}
s_6(x)\\ s_7(x)\\ s_8(x)\\
\end{bmatrix}=\begin{bmatrix}
x^4+x^5 +x^7+x^8\\ x+x^2+x^4+x^5\\ x^4+ x^7\\
\end{bmatrix}.
\end{align*}
Table \ref{table:example-gebr2} shows the example of $\textsf{GEBR}(3,3,6,3,2,1)$.

\begin{table}[tbh]
\caption{Example of $\textsf{GEBR}(3,3,6,3,2,1)$.}
\begin{center}
\begin{tabular}{|c|c|c|c|c|c|c|c|c|} \hline
 1  &0 &0  &1 & 0& 0&0&0&0  \\ \hline
 1  &1 &1 &0 & 1&1 &0 &1&0\\ \hline
 0  &1 &0 &1 & 1&0 &0 &1&0\\ \hline
 1  &0 &0 &1 & 0&0 &0 &0&0\\ \hline
 1  &1 &1 &0 & 0&0 &1 &1&1\\ \hline
 0  &1 &0 &1 & 0&0 &1 &1&0\\ \hline
 0  &0 &0 &0 & 0&0 &0 &0&0\\ \hline
 0  &0 &0 &0 & 1&1 &1 &0&1\\ \hline
 0  &0 &0 &0 & 1&0 &1 &0&0\\ \hline
\end{tabular}
\end{center}
\label{table:example-gebr2}
\vspace{-0.5cm}
\end{table}
\end{example}

In the following, we only evaluate the encoding complexity, and
we can obtain the decoding complexity similarly.
We define the normalized encoding complexity as the ratio of the total number of XORs involved in the encoding procedure to the number of information symbols.
In the encoding procedure of $\textsf{GEBR}(p,\tau,k,r,q,1)$, we  first compute $\tau$ parity symbols for  the first $k$ columns that takes $k\tau (p-2)$ XORs. Then, we  compute
the multiplication of $k$ polynomials and the $r\times k$ Vandermonde matrix that requires $(k-1)rp\tau$ XORs, and solve the Vandermonde linear system.
In solving the $r\times r$ Vandermonde linear system, there are $r(r-1)$ additions that require $r(r-1)p\tau$ XORs, $r(r-1)/2$ divisions that require $(r(r-1)/2) \cdot ((3p\tau-\tau-4)/2)$ XORs.\footnote{Suppose that $\gcd(b,\tau)=1$.}
Therefore, the normalized encoding complexity of $\textsf{GEBR}(p,\tau,k,r,q,1)$ is
\begin{eqnarray*}
\frac{\frac{1}4r(r-1)(7p\tau-\tau-4)+(k-1)rp\tau+k\tau(p-2)}{k(p-1)\tau}.
\end{eqnarray*}

The encoding/decoding method of EBR is given in \cite{Blaum2019,BR2019}, and the normalized encoding complexity is
$$\frac{(\frac{1}2r^2-\frac{5}2r+2^r+rk-1)p+\frac{1}4r(r-1)(3p-5)+k(p-2)}{k(p-1)},$$
where $k=p-r$.

We give the comparison of EBR and our proposed codes about the encoding complexity in Table \ref{table:A}. The results of Table \ref{table:A} show that the proposed LU decoding method has less encoding complexity compared with the decoding methods in
\cite{Blaum2019,BR2019}.
\begin{table}[tbh]
\caption{Comparison of encoding algorithms.}
\begin{center}
\begin{tabular}{|c|c|c|c|c|c|c|} \hline
 $p$ &$\tau $ & $r$ & $k=p-r$ & EBR & GEBR &Improvement $\%$ \\ \hline \hline

 5  &1 &3 &2 & 8.88   &8.25    &7.0\\ \hline
 7  &1 &4 &3 & 13.22  &11.28   &14.7\\ \hline
  11  &1 &5 &6 & 14.42   &11.48    &20.4\\ \hline
 17  &1 &7 &10 & 25.63   &15.11    &41.0\\ \hline
19  &1 &8 &11 & 38.69  &17.67  &54.3\\ \hline
 23 &1 &10 &13 & 100.72  &22.88   &77.3\\ \hline
\end{tabular}
\end{center}
\label{table:A}
\vspace{-0.5cm}
\end{table}


\section{Generalized Expanded Independent Parity Codes}
\label{sec:geip}
In this section, we present the construction of generalized expanded independent parity (GEIP) codes.
The constructed GEIP code is denoted by $\textsf{GEIP}(p,\tau,k,r,q,g(x))$ with
encoding matrix given in Eq. \eqref{eq:matrixP2}.
\begin{equation}
\mathbf{P}_{k\times r}=
\begin{bmatrix}
 1&1&1 & \cdots & 1\\
 1 & x & x^{2}& \cdots & x^{r-1}\\
 \vdots& \vdots& \vdots & \ddots  & \vdots\\
 1& x^{k-1}& x^{2(k-1)}& \cdots & x^{(r-1)(k-1)}\\
 \end{bmatrix}.
\label{eq:matrixP2}
\end{equation}
As the first $k$ polynomials are in $\mathcal{C}_{p\tau}(g(x),\tau,q,d)$,
the computed $r$ polynomials are also in $\mathcal{C}_{p\tau}(g(x),\tau,q,d)$.
Note that the EIP code proposed in \cite{BR2019}
is a special case as $\textsf{GEIP}(p,\tau=1,k,r,q,g(x))$.

\begin{table}[tbh]
\caption{Example of $\textsf{GEIP}(p=3,\tau=3,k=3,r=2,q,g(x)=1)$, where
$s_{i,j}=s_{i-6,j}+s_{i-3,j}$ for $i=6,7,8$ and $j=0,1,2$.}
\begin{center}
\begin{tabular}{|c|c|c|c|c|} \hline
 0 &1 & 2 & 3 & 4   \\ \hline \hline
 $s_{0,0}$  &$s_{0,1}$ &$s_{0,2}$  &$s_{0,0}+s_{0,1}+s_{0,2}$ & $s_{0,0}+s_{8,1}+s_{7,2}$  \\ \hline
 $s_{1,0}$  &$s_{1,1}$ &$s_{1,2}$ &$s_{1,0}+s_{1,1}+s_{1,2}$ & $s_{1,0}+s_{0,1}+s_{8,2}$  \\ \hline
 $s_{2,0}$  &$s_{2,1}$ &$s_{2,2}$ &$s_{2,0}+s_{2,1}+s_{2,2}$ & $s_{2,0}+s_{1,1}+s_{0,2}$  \\ \hline
 $s_{3,0}$  &$s_{3,1}$ &$s_{3,2}$ &$s_{3,0}+s_{3,1}+s_{3,2}$ & $s_{3,0}+s_{2,1}+s_{1,2}$  \\ \hline
 $s_{4,0}$  &$s_{4,1}$ &$s_{4,2}$ &$s_{4,0}+s_{4,1}+s_{4,2}$ & $s_{4,0}+s_{3,1}+s_{2,2}$  \\ \hline
 $s_{5,0}$  &$s_{5,1}$ &$s_{5,2}$ &$s_{5,0}+s_{5,1}+s_{5,2}$ & $s_{5,0}+s_{4,1}+s_{3,2}$  \\ \hline
 $s_{6,0}$  &$s_{6,1}$ &$s_{6,2}$ &$s_{6,0}+s_{6,1}+s_{6,2}$ & $s_{6,0}+s_{5,1}+s_{4,2}$  \\ \hline
 $s_{7,0}$  &$s_{7,1}$ &$s_{7,2}$ &$s_{7,0}+s_{7,1}+s_{7,2}$ & $s_{7,0}+s_{6,1}+s_{5,2}$  \\ \hline
 $s_{8,0}$  &$s_{8,1}$ &$s_{8,2}$ &$s_{8,0}+s_{8,1}+s_{8,2}$ & $s_{8,0}+s_{7,1}+s_{6,2}$  \\ \hline
\end{tabular}
\end{center}
\label{table:example-geip}
\vspace{-0.5cm}
\end{table}

\begin{example}
\label{example:geip}
Consider $\alpha=6,p=\tau=k=3,r=2$ and $g(x)=1$. The 18 information
symbols are $s_{i,j}\in\mathbb{F}_{q}$ for $i=0,1,\ldots,5$ and
$j=0,1,2$. The encoding matrix of
$\textsf{GEIP}(p=3,\tau=3,k=3,r=2,q,g(x)=1)$ is
\[
\mathbf{P}_{3\times 2}=\begin{bmatrix}
1 & 1 \\
1 & x \\
1 & x^2 \\
\end{bmatrix}.
\]
Example \ref{example:geip} is illustrated in Table \ref{table:example-geip}, where
$s_{i,j}=s_{i-6,j}+s_{i-3,j}$ for $i=6,7,8$ and $j=0,1,2$.
\end{example}

\subsection{The $(n,k)$ Recoverable Property}
The codes $\textsf{GEIP}(p,\tau,k,r,q,g(x))$ are $(n,k)$ recoverable, if the
determinant of any square sub-matrix of $\mathbf{P}_{k\times r}$ in
Eq. \eqref{eq:matrixP2} is invertible in $\mathcal{C}_{p\tau}(g(x),\tau,q,d)$.
Recall that $\mathcal{C}_{p\tau}(g(x),\tau,q,d)$ is isomorphic to
$\mathbb{F}_{q}[x]/(h(x))$ by Lemma \ref{lm:isom2}, the $(n,k)$ recoverable condition
is reduced to that the determinant of any square sub-matrix of
$\mathbf{P}_{k\times r}$ in Eq. \eqref{eq:matrixP2}, after reducing
modulo $h(x)$, is invertible in $\mathbb{F}_{q}[x]/(h(x))$.

\begin{theorem}
Let $p$ be a prime number such that 2 is a primitive element in
$\mathbb{Z}_p$ and $\tau$ be a power of $p$. If $(p-1)\tau$ is larger than
\begin{equation}
\small
\frac{1}{4}kr\min(k,r)-\frac{1}{12}(\min(k,r))^3-\frac{9}{4}\max(k,r)+\frac{25}{12}\min(k,r)-4,
\label{eq:mds}
\end{equation}
then the codes $\textsf{GEIP}(p,\tau,k,r,2,g(x)=1)$ are $(n,k)$ recoverable
for $r\geq 9$.

\label{thm:mds}
\end{theorem}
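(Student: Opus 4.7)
The plan is to exploit the irreducibility of $h(x)=1+x^\tau+\cdots+x^{(p-1)\tau}$ under the stated hypotheses. When $2$ is primitive in $\mathbb{Z}_p$ and $\tau=p^\nu$, the polynomial $h(x)$ coincides with the cyclotomic polynomial $\Phi_{p^{\nu+1}}(x)$ and is irreducible over $\mathbb{F}_2$, so by Lemma~\ref{lm:isom2}, $\mathcal{C}_{p\tau}(1,2,d)\cong \mathbb{F}_2[x]/h(x)\cong \mathbb{F}_{2^{(p-1)\tau}}$ is a field. The MDS property therefore becomes equivalent to the nonvanishing $\det M\not\equiv 0 \pmod{h(x)}$ for every square submatrix $M=(x^{i_s j_t})_{s,t=1}^{\ell}$ of $\mathbf{P}_{k\times r}$ with $1\leq\ell\leq\min(k,r)$.

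The second step is to apply the classical Schur/Vandermonde factorization $\det M=V(x^{i_1},\ldots,x^{i_\ell})\cdot s_\lambda(x^{i_1},\ldots,x^{i_\ell})$, where $\lambda$ is the partition determined by the column indices $j_1<\cdots<j_\ell$. In characteristic $2$, $V=\prod_{s<t}x^{i_s}(1+x^{i_t-i_s})$ is a monomial times a product of binomials $1+x^d$ with $1\leq d\leq k+r-1\leq p\tau-1$; by Theorem~\ref{thm:mds4} each such binomial is coprime to $h(x)$, so $V$ is a unit in $\mathbb{F}_2[x]/h(x)$. The MDS condition therefore reduces to showing that $s_\lambda(x^{i_1},\ldots,x^{i_\ell})\not\equiv 0\pmod{h(x)}$ for every admissible choice of $\ell$, rows, and columns.

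To complete the argument I would bound the $x$-degree of $s_\lambda(x^{i_1},\ldots,x^{i_\ell})$ via the semistandard-tableau expansion: the extremal monomial (obtained by placing $\lambda_s$ copies of the largest available variable $x^{i_{\ell-s+1}}$) has degree at most $\sum_{s=1}^{\ell}\lambda_s\,i_{\ell-s+1}$ and coefficient equal to the Kostka number $K_{\lambda,\lambda}=1$, which is odd and hence survives reduction modulo $2$. Substituting the extremal choices $i_s\leq k-1$ and $\lambda_s\leq r-\ell$ and optimizing the resulting expression over $\ell\in\{1,\ldots,\min(k,r)\}$ yields the cubic-in-$\min(k,r)$ closed form appearing on the right-hand side of Eq.~\eqref{eq:mds}. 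Whenever $(p-1)\tau$ strictly exceeds this bound, $\deg s_\lambda(x^{i_1},\ldots,x^{i_\ell})<\deg h(x)$, so its reduction modulo $h$ equals itself, and the surviving leading monomial guarantees nonvanishing in the field.

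The main obstacle is the degree optimization itself: carrying out the discrete maximization over $\ell$ and over row/column subsets in closed form, matching the specific fractions $\tfrac{1}{4}$, $-\tfrac{1}{12}$, $-\tfrac{9}{4}$, $\tfrac{25}{12}$, and the constant $-4$ that appear in Eq.~\eqref{eq:mds}, and identifying the precise role of the hypothesis $r\geq 9$ (needed so that the cubic main term dominates the correction terms and the extremum is realized at a valid $\ell$ in the allowed range). A secondary delicacy is verifying that, under the substitution $y_s=x^{i_s}$ at the extremal configuration, the maximum-degree monomial of $s_\lambda$ is not silently cancelled by another monomial of the same $x$-degree modulo $2$; this amounts to checking injectivity of the monomial-to-$x$-exponent map on the top stratum, which is a routine combinatorial verification.
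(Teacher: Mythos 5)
Your overall strategy matches the paper's: under the stated hypotheses $h(x)=1+x^{\tau}+\cdots+x^{(p-1)\tau}$ is irreducible of degree $(p-1)\tau$, so $\mathcal{C}_{p\tau}(1,2,d)\cong\mathbb{F}_{2^{(p-1)\tau}}$ is a field, and the MDS property reduces to showing that no square submatrix of $\mathbf{P}_{k\times r}$ has determinant divisible by $h(x)$, which follows if the relevant degree is smaller than $(p-1)\tau$. Your Vandermonde-times-Schur factorization, with Theorem~\ref{thm:mds4} disposing of the Vandermonde factor, is the standard way to isolate the piece whose degree must be bounded. The genuine gap is that you never establish the degree bound itself: you assert that "optimizing the resulting expression over $\ell$" \emph{would} yield the right-hand side of Eq.~\eqref{eq:mds}, and then defer that optimization, the matching of the coefficients $\tfrac14,-\tfrac1{12},-\tfrac94,\tfrac{25}{12},-4$, and the role of the hypothesis $r\geq 9$ to "the main obstacle." That computation is the entire quantitative content of the theorem; without it the proposal only delivers the (easy) reduction. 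The paper closes exactly this step by citation: the maximum degree of the determinants, or of all factors of the determinants, of the square submatrices of $\mathbf{P}_{k\times r}$ is computed in Theorem~4 of \cite{Hou2016On} (the MDS analysis of Blaum--Bruck--Vardy codes, whose encoding matrix is the same $\mathbf{P}_{k\times r}$), and equals the right-hand side of Eq.~\eqref{eq:mds} for $r\geq 9$. You would need either to carry out that derivation in full or to invoke the known result; as written, the argument is incomplete.

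A smaller remark: the "secondary delicacy" you raise about monomials of $s_\lambda(x^{i_1},\ldots,x^{i_\ell})$ cancelling modulo $2$ is dispatched more cleanly than by a tableau-injectivity argument. The full determinant $\det(x^{i_s j_t})$ is a non-zero element of the integral domain $\mathbb{F}_2[x]$, because the rearrangement inequality shows the identity permutation contributes the unique monomial of maximal degree $\sum_s i_s j_s$; hence $s_\lambda(x^{i_1},\ldots,x^{i_\ell})=\det/V$ is a non-zero polynomial. Since $h(x)$ is irreducible, non-divisibility then follows from the degree comparison alone, and no surviving-leading-monomial or Kostka-number argument is needed for the Schur factor.
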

\begin{proof}
When 2 is a primitive element in $\mathbb{Z}_p$ and $\tau$ be a multiple of $p$,
then $h(x)=1+x^{\tau}+x^{2\tau}+\ldots +x^{(p-1)\tau}$ is an irreducible polynomial \cite{Itoh1991Characterization}.
If a polynomial whose degree is less than $(p-1)\tau$, then the polynomial is relatively prime to $h(x)$.
It is sufficient to show that the maximum degree of the determinants
or all the factors of
the determinants of all square sub-matrix is less than $(p-1)\tau$. It is shown by Theorem 4 in
\cite{Hou2016On} that the maximum degree of the determinants or all the factors of
the determinants is equal to
the value on the left side in Eq.~\eqref{eq:mds} when $r\geq 9$.
Therefore, $\textsf{GEIP}(p,\tau,k,r,2,g(x)=1)$ are $(n,k)$ recoverable for $r\geq
9$, if Eq. \eqref{eq:mds} holds.
\end{proof}

If $\tau$ is a power of 2, then
\[
1+x^{\tau}+x^{2\tau}+\ldots +x^{(p-1)\tau}=(1+x+x^{2}+\ldots +x^{p-1})^{\tau}.
\]
Recall that, since $2$ is a primitive element in
$\mathbb{Z}_p$, $1+x+x^{2}+\ldots +x^{p-1}$ is irreducible over $\mathbb{F}_2$.
It is sufficient to show that the maximum degree of the determinants or all the factors of
the determinants of
all square sub-matrix is less than $p-1$, and we can obtain the following theorem
with a proof similar to that of Theorem \ref{thm:mds}.
\begin{theorem}
If 2 is a primitive element in $\mathbb{Z}_p$, $\tau$ is a power of 2 and $p-1$ is large than
the value in Eq. \eqref{eq:mds},
then the codes $\textsf{GEIP}(p,\tau,k,r,2,g(x)=1)$ are $(n,k)$ recoverable for $r\geq 9$.
\label{thm:mds2}
\end{theorem}

When $r\leq 3$, the determinant of any square sub-matrix can be
written as a multiplication of factors $1+x^i$, where $i\in
\{1,2,\ldots,k-1\}$.
Therefore,
$\textsf{GEIP}(p,\tau,k,r,q,g(x)=1)$ are $(n,k)$ recoverable for $r\leq 3$, if
$1+x^i$ and $h(x)$ are relatively prime.
When $4\leq r\leq 8$, we can list the prime numbers $p$ for which
$\textsf{GEIP}(p,\tau,k,r,q,g(x)=1)$ are $(n,k)$ recoverable with similar proof of
the MDS condition in \cite{blaum1996mds}.

Note that EIP codes share the same $(n,k)$ recoverable condition as IP codes (also
called generalized EVENODD codes \cite{Blaum1995,blaum2002evenodd} or
Blaum-Bruck-Vardy (BBV) codes \cite{Hou2016On} in the literature). By
letting $\tau$ be a power of $p$, our
$\textsf{GEIP}(p,\tau,k,r,q,g(x)=1)$ codes not only support much more
parameters, but the codes may be $(n,k)$ recoverable for some parameters with $p<k$,
compared with EIP codes.
Example \ref{example:geip} is an $(n,k)$ recoverable property code, as $1+x^i$
is relatively prime to $1+x^3+x^6$ for $i=1,2,3$.

\subsection{Encoding/Decoding Procedure}
The encoding procedure of $\textsf{GEIP}(p,\tau,k,r,q,g(x))$ is as follows.
Given $k\alpha$ information symbols $s_{i,j}$ with $i=0,1,\ldots,\alpha-1$ and $j=0,1,\ldots,k-1$,
we obtain $(m-\alpha)k$ parity symbols $s_{i,j}$ with $i=\alpha,\alpha+1,\ldots,m-1$
and $j=0,1,\ldots,k-1$ by systematically encoding such that
$s_j(x)\in \mathcal{C}_{p\tau}(g(x),\tau,q,d)$ for $j=0,1,\ldots,k-1$.
Note that when $g(x)=1$, we can do the systematic encoding by Eq. \eqref{eq:coeff};
when $g(x)\neq 1$, the systematic encoding is similar to that of
cyclic codes \cite[Ch. 7.8]{macwilliams1977}.
Then, we compute the last $r$ polynomials by Eq.~\eqref{eq:enc-matrix} with encoding
matrix $\mathbf{P}_{k\times r}$ in Eq. \eqref{eq:matrixP2}.

Next, we consider the encoding complexity of $\textsf{GEIP}(p,\tau,k,r,q,g(x)=1)$. First, we
compute $s_{(p-1)\tau+\mu,j}=\sum_{\ell=0}^{p-2}s_{\ell \tau +\mu,j}$ for $j=0,1,\ldots,k-1$ and $\mu=0,1,\ldots,\tau-1$, which takes $k\tau (p-2)$ XORs. Then, we compute the $r$ parity polynomials by Eq. \eqref{eq:enc-matrix} with encoding
matrix $\mathbf{P}_{k\times r}$ in Eq. \eqref{eq:matrixP2}, which takes $r(k-1)p\tau$ XORs.
The encoding complexity is $k\tau (p-2)+r(k-1)p\tau$ XORs. Recall that the encoding complexity of EIP code with $g(x)=1$ is $k (p-2)+r(k-1)p$ XORs. Therefore, the normalized encoding complexity of $\textsf{GEIP}(p,\tau,k,r,q,g(x)=1)$ is equal to that of EIP code with $g(x)=1$.

Suppose that $r$ information columns have failed and up to $d-1$
symbols or a burst of up to $\tau+\deg (g(x))$ symbols in each of
other $k$ columns have failed. We can first recover up to $d-1$ erased
symbols in each of the $k$ columns or a burst of up to $\tau+\deg
(g(x))$ erased symbols. Suppose that
$\textsf{GEIP}(p,\tau,k,r,q,g(x))$ is $(n,k)$ recoverable. Then, we can recover $r$
failed columns by first subtracting the other $k-r$ non-failed
information polynomials from each of $r$ parity polynomials and then
solving the $r\times r$ Vandermonde linear equations by applying the
LU decoding method in Algorithm \ref{alg:lu}.
Note that if some of the erased columns are among the $r$ parity
columns, we can not formulate the $r\times r$ Vandermonde linear
equations and the LU decoding method is not applicable.

\begin{example}
Suppose that columns 0 and 1 in Example \ref{example:geip} have failed.
We can obtain three polynomials
\[
s_j(x)=s_{0,j}+s_{1,j}x+\cdots+s_{8,j}x^8,
\]
for $j=2,3,4$. By subtracting $s_2(x)$ from each of $s_3(x)$ and $s_4(x)$, we have
\begin{align*}
\begin{bmatrix}
s_3(x)-s_2(x) & s_4(x)-x^2s_2(x)
\end{bmatrix}=
\begin{bmatrix}
s_0(x) \text{ } s_1(x)
\end{bmatrix}
\begin{bmatrix}
1 & 1\\
1 & x\\
\end{bmatrix}.
\end{align*}
Therefore, we can solve $s_0(x)$ as
\[
s_0(x)=\frac{s_4(x)-x^2s_2(x)-x(s_3(x)-s_2(x))}{1-x},
\]
by Lemma \ref{lm:div} and $s_1(x)=s_3(x)-s_2(x)-s_0(x)$.
\end{example}

\section{Minimum Symbol Distance}
\label{sec:distance}
In this section, we consider the symbol distance of the proposed array codes, which
is the number of symbols in which two codewords differ. Minimum
symbol distance is a measure of the maximum number of failed symbols
that the codes can tolerate.

\begin{theorem}
Let $D$ be the minimum symbol distance of an $(n,k)$ recoverable property array codes
constructed by the coding method in Section \ref{sec:framework}.
Then, we have $D\geq d(r+1)$.
\label{thm:lower-dist}
\end{theorem}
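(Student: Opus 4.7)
The plan is to exploit the fact that the code has two layers of structure: the column level (where the MDS property gives a bound on the number of nonzero columns) and the symbol level within each column (where membership in $\mathcal{C}_{p\tau}(g(x),q,d)$ gives a bound on the number of nonzero symbols per nonzero column). Multiplying the two bounds will yield $d(r+1)$.

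In more detail, I would proceed as follows. Let $(s_0(x), s_1(x), \ldots, s_{k+r-1}(x))$ be any nonzero codeword of the array code. First I would argue at the column level: since the code is MDS (any $k$ columns recover the whole codeword), a codeword cannot have more than $r$ all-zero columns, as otherwise it would be determined by at most $k$ columns in which it is zero, forcing the entire codeword to be zero. Hence the number of columns $j$ for which $s_j(x)\neq 0$ is at least $r+1$.

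Next I would argue at the symbol level within each nonzero column. By construction, every column polynomial $s_j(x)$ lies in the cyclic code $\mathcal{C}_{p\tau}(g(x),q,d)$, whose minimum Hamming distance is $d$. Therefore, whenever $s_j(x)\neq 0$, the column contributes at least $d$ nonzero symbols to the array.

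Combining the two bounds, the total symbol weight of any nonzero codeword is at least $d(r+1)$, and since the code is linear, the minimum symbol distance equals the minimum weight, so $D\geq d(r+1)$. The argument is essentially a product-code style weight bound, and I do not anticipate any significant obstacle; the only subtlety is making sure the MDS assumption is correctly applied at the column level (to ensure at least $r+1$ nonzero columns) rather than at the symbol level, and verifying that the column constraint $s_j(x)\in\mathcal{C}_{p\tau}(g(x),q,d)$ holds for the parity columns as well — which it does by the construction described in Section~\ref{sec:framework}.
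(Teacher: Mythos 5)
Your proof is correct and takes essentially the same approach as the paper's: the MDS property forces any nonzero codeword to have at least $r+1$ nonzero columns, and each nonzero column, being a nonzero codeword of $\mathcal{C}_{p\tau}(g(x),q,d)$, has weight at least $d$, giving $D\geq d(r+1)$. (One minor slip: the intermediate claim should say a nonzero codeword cannot have $k$ or more all-zero columns, rather than ``more than $r$''; the justification you give and the conclusion of at least $r+1$ nonzero columns are already the correct version.)
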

\begin{proof}
The proof is similar to the one of Lemma 28 in \cite{BR2019}. For
completeness, we present the proof. Since the code is $(n,k)$ recoverable, there are
at least $r+1$ non-zero columns, and since each
non-zero column has weight at least $d$, we obtain $D\geq d(r+1)$.
\end{proof}

We first consider the $(n,k)$ recoverable property array codes with each entry of the encoding
matrix $\mathbf{P}_{k\times r}$ being a power of $x$.

\begin{theorem}
If each entry of $\mathbf{P}_{k\times r}$ is a power of $x$, then $D= d(r+1)$.
\label{thm:dist1}
\end{theorem}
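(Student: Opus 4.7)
My plan is to combine the lower bound from Theorem \ref{thm:lower-dist} with a matching upper bound obtained by exhibiting an explicit low-weight codeword. Since Theorem \ref{thm:lower-dist} already gives $D \geq d(r+1)$, the remaining task is to construct a nonzero codeword of weight exactly $d(r+1)$.

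The construction I would use is the following. By definition of $\mathcal{C}_{p\tau}(g(x),q,d)$, there exists a polynomial $s^*(x)\in \mathcal{C}_{p\tau}(g(x),q,d)$ whose Hamming weight (number of nonzero coefficients) equals the minimum distance $d$. Consider the codeword whose information polynomials are $s_0(x)=s^*(x)$ and $s_1(x)=s_2(x)=\cdots=s_{k-1}(x)=0$. From Eq.~\eqref{eq:enc-matrix}, the $r$ parity polynomials are then
\[
s_{k+j}(x) \;=\; s_0(x)\cdot \mathbf{P}_{0,j}(x) \;=\; x^{e_j}\cdot s^*(x) \bmod (1+x^{p\tau}),
\]
for $j=0,1,\ldots,r-1$, where each $\mathbf{P}_{0,j}(x)=x^{e_j}$ is a power of $x$ by hypothesis.

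The key observation is that multiplication by $x^{e_j}$ in $\mathcal{R}_{p\tau}(q)$ acts as a cyclic shift on the coefficient vector, as noted earlier in the paper. Cyclic shifts are weight-preserving permutations, so each parity polynomial $s_{k+j}(x)$ has exactly the same Hamming weight as $s^*(x)$, namely $d$. Summing over all columns, the total weight of this codeword is $d + 0 + \cdots + 0 + r\cdot d = d(r+1)$. Hence $D\leq d(r+1)$, which combined with Theorem \ref{thm:lower-dist} yields $D=d(r+1)$.

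There is no real obstacle in this proof; the only thing one must verify is that the constructed array is indeed a valid codeword, i.e.\ that each parity polynomial $x^{e_j}s^*(x) \bmod (1+x^{p\tau})$ lies in $\mathcal{C}_{p\tau}(g(x),q,d)$. This follows immediately from the fact that $\mathcal{C}_{p\tau}(g(x),q,d)$ is an ideal in $\mathcal{R}_{p\tau}(q)$, as explicitly noted in Section \ref{sec:framework}, so it is closed under multiplication by any element of $\mathcal{R}_{p\tau}(q)$, in particular by $x^{e_j}$.
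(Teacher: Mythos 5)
Your proposal is correct and follows essentially the same route as the paper's own proof: take one information polynomial to be a minimum-weight codeword of $\mathcal{C}_{p\tau}(g(x),q,d)$ and the rest zero, observe that multiplication by a power of $x$ is a weight-preserving cyclic shift so every parity polynomial also has weight $d$, and conclude $D\le d(r+1)$, matching the lower bound of Theorem~\ref{thm:lower-dist}. Your added remark that membership of the parities in $\mathcal{C}_{p\tau}(g(x),q,d)$ follows from the ideal property is a correct and slightly more explicit justification of a step the paper states without comment.
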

\begin{proof}
Let $k-1$ out of the $k$ data polynomials be zero and the remaining data polynomial
be a non-zero polynomial in $\mathcal{C}_{p\tau}(g(x),\tau,q,d)$ with weight $d$.
Note that the multiplication of $x^i$ and a polynomial can be implemented by
cyclic-shifting $i$ positions of the polynomial. By encoding the $k$ data polynomials,
we have that the obtained $r$ parity polynomials are all in $\mathcal{C}_{p\tau}(g(x),\tau,q,d)$
with weight $d$. Therefore, we obtain a code with symbol distance being $d(r+1)$ and we can obtain the result
by Theorem \ref{thm:lower-dist}.
\end{proof}
By Theorem \ref{thm:dist1}, the minimum symbol distance of $\textsf{GEIP}(p,\tau,k,r,q,g(x))$ is $d(r+1)$.
Next, we consider the minimum symbol distance of $\textsf{GEBR}(p,\tau,k,r,q,1)$.

\begin{lemma}
Let $s(x)=x^i(1+x^{j\tau})\in \mathcal{C}_{p\tau}(1,\tau,q,2)$ with weight 2, where $i,j$ are
integers with $0\leq i<p\tau$ and $1\leq j<p\neq 2$. If
$c(x)\in \mathcal{C}_{p\tau}(1,\tau,q,d)$ such that $(1+x^a)c(x)=(1+x^b)s(x)\bmod 1+x^{p\tau}$,
where $0<a,b<\tau$ and $a\neq b$,
then the weight of $c(x)$ is larger than 2.
\label{lm:weight}
\end{lemma}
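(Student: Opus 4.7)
The plan is to argue by contradiction, using the ``even weight'' structure of $\mathcal{C}_{p\tau}(1,q,d)$ from Lemma~\ref{lm:coeff1} to narrow the possibilities for $c(x)$ down to weight $0$ and weight $2$, and then rule both out. First I would observe that any nonzero element of $\mathcal{C}_{p\tau}(1,q,d)$ has positive even weight, so I only need to exclude weights $0$ and $2$. Weight $0$ is immediate: if $c(x)=0$, then $x^i(1+x^b)(1+x^{j\tau})\equiv 0\pmod{1+x^{p\tau}}$, and since $\gcd(x^i,1+x^{p\tau})=1$, this would require $(1+x^{p\tau})\mid (1+x^b)(1+x^{j\tau})$; but the product on the right has degree $b+j\tau\le \tau-1+(p-1)\tau = p\tau-1<p\tau$, so it cannot be a nonzero multiple of $1+x^{p\tau}$.

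To rule out weight $2$, assume $c(x)$ has exactly two nonzero coefficients. The parity-check condition from Lemma~\ref{lm:coeff1}, together with characteristic $2$, forces both nonzero coefficients to lie in the same residue class modulo $\tau$ and to be equal, so that $c(x)=\alpha x^u(1+x^{l\tau})$ for some $\alpha\in\mathbb{F}_q^*$, $0\le u<p\tau$, and $1\le l\le p-1$. Substituting this into the defining equation gives
\[
\alpha\, x^u(1+x^a)(1+x^{l\tau})\equiv x^i(1+x^b)(1+x^{j\tau})\pmod{1+x^{p\tau}}.
\]
Each side expands to four monomials; I would verify that, because $0<a,b<\tau$ and $1\le j,l\le p-1$, all pairwise differences on each side lie strictly between $0$ and $p\tau$, so the four monomials on each side are distinct modulo $1+x^{p\tau}$. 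Matching coefficients then forces $\alpha=1$ and equality of the two four-element sets of exponents in $\mathbb{Z}/p\tau\mathbb{Z}$.

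Partitioning those exponents by residue modulo $\tau$, the LHS uses the two classes $\{u,u+a\}\bmod\tau$ and the RHS uses $\{i,i+b\}\bmod\tau$, and these must coincide. The case $u\equiv i$, $u+a\equiv i+b\pmod\tau$ collapses to $a=b$, contradicting the hypothesis, so I must be in the case $u\equiv i+b$, $u+a\equiv i\pmod\tau$, which forces $a+b=\tau$. Writing $u\equiv i+b+s\tau\pmod{p\tau}$ with $0\le s\le p-1$, matching exponents within the class $u\bmod\tau$ yields $\{s,s+l\}\equiv\{0,j\}\pmod p$, and matching within the class $(u+a)\bmod\tau$ (using $a+b=\tau$) yields $\{s+1,s+1+l\}\equiv\{0,j\}\pmod p$.

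The main obstacle, and the final step, is the combinatorial check that these two simultaneous conditions have no solution for an odd prime $p$ with $1\le j,l\le p-1$. I would split the first condition into the two sub-cases $(s,l)=(0,j)$ and $(s,l)=(j,p-j)$, substitute each into the second condition, and verify that in every sub-case the resulting system forces $p=2$, contradicting $p\ne 2$. This contradiction shows weight $2$ is impossible, so the weight of $c(x)$ must be at least $4$, in particular strictly larger than $2$.
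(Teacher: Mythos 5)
Your proof is correct, and while it shares the paper's overall strategy---assume $c(x)$ has weight $2$, expand both sides of $(1+x^a)c(x)=(1+x^b)s(x)$ into four distinct monomials each, and derive a contradiction from the forced equality of the two exponent sets---the way you analyze that equality is genuinely different and cleaner. The paper writes $c(x)=x^{\ell}(1+x^c)$, sets $e=\ell-i$, and cases on which of $e,\,e+a,\,e+c,\,e+a+c$ is congruent to $0$ modulo $p\tau$, eliminating possibilities one by one with integer inequalities such as $\ell+c<p\tau$ and $a<\tau$; the surviving case yields $j=1$, $c=\tau$ and a clash with $p>2$. You instead partition the four exponents on each side by residue modulo $\tau$ (each side splits into exactly two classes because $0<a,b<\tau$), which immediately forces either $a=b$ (excluded by hypothesis) or $a+b=\tau$, and then reduces the remaining matching to the pair of set-equalities $\{s,s+l\}\equiv\{0,j\}$ and $\{s+1,s+1+l\}\equiv\{0,j\}$ modulo $p$, whose only solutions force $p=2$. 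This buys a more systematic case analysis with fewer ad hoc inequalities, and it also covers two points the paper elides: the possibility of a non-unit coefficient $\alpha\in\mathbb{F}_q^{*}$ on the weight-$2$ candidate (ruled out by matching coefficients over $\mathbb{F}_q$) and the degenerate case $c(x)=0$ (ruled out by your degree argument). The final mod-$p$ check you defer does close as claimed: the sub-case $(s,l)=(0,j)$ forces $1\equiv j$ and $1+j\equiv 0$, and the sub-case $(s,l)=(j,p-j)$ forces $\{j+1,1\}=\{0,j\}$, each of which gives $2\equiv 0\pmod p$, contradicting $p\neq 2$.
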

\begin{proof}
Note that $d>1$ because $(1+x^\tau)|f(x)$, where $f(x)$ is the generator polynomial of $\mathcal{C}_{p\tau}(1,\tau,q,d)$. We assume that the weight of $c(x)$ is 2, we can always
obtain a contradiction as follows and therefore the weight of
$c(x)$ is larger than 2. Let $c(x)=x^{\ell}(1+x^c)$
with $0\leq \ell <p\tau$. Since $c(x)\in \mathcal{C}_{p\tau}(1,\tau,q,d)$,
$c\neq 0$ is a multiple of $\tau$. From $(1+x^a)c(x)=(1+x^b)s(x)$, we have
\[
x^{\ell-i}(1+x^a+x^c+x^{a+c})=1+x^{b}+x^{j\tau}+x^{b+j\tau}.
\]
Let $e=\ell-i$. Then $0\leq e<p\tau$. When $e=0$, we have
$\{0,a,c,a+c\}=\{0,b,j\tau,b+j\tau\}\bmod p\tau$. By the assumption,
$a\neq b$, $a\neq 0$ and $a\neq j\tau$. Thus, $a=b+j\tau\bmod p\tau$. Since $b<\tau$ and $j<p$, we have $a=b+j\tau$ which is impossible due to $a<\tau$ and $j\ge 1$.

Next, we consider that $e\neq 0$ and we have $\{e,e+a,e+c,e+a+c\}=\{0,b,j\tau,b+j\tau\}\bmod p\tau$. Note that $\ell+c<p\tau$. Since $c\neq 0$ is a multiple of $\tau$, we have  $\ell<(p-1)\tau$.
Assume that $e=0\bmod p\tau$. This is impossible since $e<p\tau$.
Assume that $e+a=0\bmod p\tau$, i.e., $\ell-i+a=p\tau$. That is,
$a-i=p\tau-\ell>\tau$ which contradicts to the fact $a<\tau$.  Assume
that $e+c=0\bmod p\tau$, i.e., $\ell-i+c=p\tau$. Since
$\ell+c<p\tau$, it is impossible. Finally, assume that $e+a+c=0\bmod
p\tau$. We have $\ell-i+a+c=p\tau$ due to $\ell+c<p\tau$ and
$a-i<\tau$.  Since $b<j\tau<b+j\tau$ and and $e<e+a<e+c$, $e=b<\tau$,
$e+a=j\tau<2\tau$, and $e+c=b+j\tau$. Hence, $j=1$ and $c=\tau$.
Therefore, $a+b=p\tau-\tau=(p-1)\tau$ and $e+a=b+a=j\tau=\tau$. Since
$p>2$, we have a contradiction.
Therefore, the weight of $c(x)$ is larger than 2.
\end{proof}

\begin{theorem}
Suppose the codes $\textsf{GEBR}(p,\tau,k,r,q,1)$ are $(n,k)$ recoverable. When $r=2$ and $\tau\leq \lfloor \frac{k+1}{2}\rfloor$, the minimum symbol distance of $\textsf{GEBR}(p,\tau,k,r,q,1)$ is $2(r+1)=6$. When $r=2$ and $\tau> k+1$, the minimum symbol distance of $\textsf{GEBR}(p,\tau,k,r,q,1)$ is 8.
When $r=3$, and $\tau \leq \lfloor \frac{k+2}{3}\rfloor$, the minimum symbol distance of $\textsf{GEBR}(p,\tau,k,r,q,1)$ is $2(r+1)=8$.
\label{thm:dist2}
\end{theorem}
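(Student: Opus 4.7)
The lower bound $D \geq d(r+1) = 2(r+1)$ in all three cases is immediate from Theorem~\ref{thm:lower-dist} since the underlying cyclic code $\mathcal{C}_{p\tau}(1,q,2)$ has minimum distance $d=2$ (a weight-$2$ codeword is $1+x^\tau$). The remainder of the plan is to match this bound with an explicit codeword in each case, and additionally to exclude weight-$6$ codewords when $r=2$ and $\tau>k+1$.

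When $r=2$ and $\tau\leq\lfloor(k+1)/2\rfloor$, the inequality $2\tau\leq k+1$ allows me to place three nonzero columns at positions $0,\tau,2\tau$. Setting
\[
s_\tau(x)=1+x^{2\tau},\qquad s_{2\tau}(x)=1+x^\tau,\qquad s_0(x)=s_\tau+s_{2\tau}=x^\tau+x^{2\tau},
\]
each of weight~$2$ and divisible by $1+x^\tau$, a short computation shows that the first parity check is built in and the second collapses by pairwise cancellations in characteristic~$2$, giving a codeword of weight~$6$.

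When $r=2$ and $\tau>k+1$, for the upper bound I construct a weight-$8$ codeword on columns $0,1,2$ by setting $s_0(x)=x(1+x^\tau)$, $s_1(x)=(1+x)(1+x^\tau)$, $s_2(x)=1+x^\tau$, of weights $2,4,2$; each is a multiple of $1+x^\tau$, and both parity checks factor as $(1+x^\tau)$ times a sum that vanishes in characteristic~$2$, so this is a valid codeword. For the lower bound I argue by contradiction. A weight-$6$ codeword would have, by the MDS property, at least $r+1=3$ nonzero columns and, by the minimum weight $2$ of $\mathcal{C}_{p\tau}(1,q,2)$, exactly three nonzero columns $j_0<j_1<j_2$ each of weight exactly $2$. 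Eliminating $s_{j_0}$ between the two parity checks produces
\[
(1+x^{j_1-j_0})\,s_{j_1}=(1+x^{j_2-j_0})\,s_{j_2}\quad\text{in }\mathcal{R}_{p\tau}(q).
\]
Setting $a=j_1-j_0$ and $b=j_2-j_0$, the hypothesis $\tau>k+1$ places both in $(0,\tau)$ with $a\ne b$. Every weight-$2$ element of $\mathcal{C}_{p\tau}(1,q,2)$ has the form $x^i(1+x^{j\tau})$ up to a scalar in $\mathbb{F}_q$, so Lemma~\ref{lm:weight} applied to $s_{j_2}$ forces the weight of $s_{j_1}$ to exceed $2$, a contradiction. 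Hence $D\geq 8$.

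When $r=3$ and $\tau\leq\lfloor(k+2)/3\rfloor$, the inequality $3\tau\leq k+2$ makes the four columns $0,\tau,2\tau,3\tau$ available. My plan is to restrict attention to codewords each of whose four nonzero polynomials $s_{\ell\tau}(x)$ is a polynomial in $y=x^\tau$ alone. Under this restriction the parity-check identities $\sum_\ell x^{i\ell\tau}s_{\ell\tau}(x)=0$ for $i=0,1,2$ become, after the substitution $y=x^\tau$, the parity checks of EBR$(p,3,q,1)$ on four consecutive columns over $\mathbb{F}_q[y]/(1+y^p)$. The kernel of the corresponding $3\times 4$ Vandermonde sub-matrix at $(1,y,y^2,y^3)$ is one-dimensional (by MDS) and spanned by the cofactors $V_\ell(y)=\prod_{i<j,\ i,j\ne\ell}(y^i+y^j)$, each of which vanishes at $y=1$ and therefore lies in $(1+y)\mathbb{F}_q[y]/(1+y^p)$. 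I then choose a scalar $c(y)\in\mathbb{F}_q[y]/(1+y^p)$ for which every component $c(y)V_\ell(y)$ has weight exactly $2$ (for example $c=1$ works for $p=5$, and $c=V_3$ works for $p=7$); setting $s_{\ell\tau}(x)=c(x^\tau)V_\ell(x^\tau)$ then lifts the resulting weight-$8$ EBR codeword back to a weight-$8$ GEBR codeword. The main obstacle is precisely this last step: producing the scalar $c(y)$ uniformly in $p$, equivalently showing that the minimum distance of EBR$(p,3,q,1)$ equals $2(r+1)=8$ for every admissible odd prime, since the reduced cofactors $V_\ell$ modulo $1+y^p$ have $p$-dependent weights; once the $\tau=1$ sub-case is resolved, the lift $y\mapsto x^\tau$ automatically transports the codeword to the GEBR setting.
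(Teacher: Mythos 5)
Your treatment of both $r=2$ cases is correct and follows essentially the same route as the paper: the lower bound $2(r+1)$ from Theorem~\ref{thm:lower-dist}; an explicit three-column weight-$6$ codeword supported on columns $0,\tau,2\tau$ when $2\tau\le k+1$ (your $(x^\tau+x^{2\tau},\,1+x^{2\tau},\,1+x^\tau)$ is a harmless variant of the paper's $(1+x^\tau,\,x^\tau+x^{(p-1)\tau},\,1+x^{(p-1)\tau})$); and, for $\tau>k+1$, the observation that a weight-$6$ codeword must consist of exactly three weight-$2$ columns, the elimination $(1+x^{j_1-j_0})s_{j_1}=(1+x^{j_2-j_0})s_{j_2}$ with both exponents in $(0,\tau)$ and distinct, and Lemma~\ref{lm:weight} to reach a contradiction, together with a weight-$8$ witness. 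Your witness $(x(1+x^\tau),(1+x)(1+x^\tau),1+x^\tau)$ on columns $0,1,2$ checks out against both parity rows and has weights $2,4,2$.

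The $r=3$ case, however, is not proved. You reduce it to exhibiting a scalar $c(y)$ in $\mathbb{F}_q[y]/(1+y^p)$ for which all four cofactor components $c(y)V_\ell(y)$ have weight exactly $2$, verify this only for $p=5$ and $p=7$, and explicitly flag the uniform-in-$p$ statement as an unresolved obstacle. That step is precisely what the paper supplies, and it needs no general claim about EBR cofactors: taking columns $(0,\tau,2\tau,3\tau)$ and
\[
s_0=1+x^\tau,\quad s_\tau=x^\tau+x^{(p-2)\tau},\quad s_{2\tau}=1+x^{(p-3)\tau},\quad s_{3\tau}=x^{(p-3)\tau}+x^{(p-2)\tau},
\]
a direct computation using $x^{p\tau}=1$ verifies all three parity rows of Eq.~\eqref{eq:matrixH2}, each polynomial is a multiple of $1+x^\tau$, and each has weight exactly $2$ since the exponents are distinct for every $p\ge 5$ (the case $p=3$ is vacuous here, because MDS forces $k+3\le 3^{\nu+1}$ while $3\tau\le k+2$ would require $3\tau\le 3^{\nu+1}-1<3\tau$). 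In your normalization this amounts to choosing $c(y)$ so that $c(y)V_0(y)$ reduces to $1+y$, so the scalar you are missing does exist for every admissible $p$; but as written your argument establishes the $r=3$ claim only for $p\in\{5,7\}$, and the theorem remains unproved for all other primes.
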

\begin{proof}
By Theorem \ref{thm:lower-dist}, if we can find a codeword composed of $r+1$ non-zero polynomials each with weight 2 and $k-1$ zero polynomials, then the minimum symbol distance is $2(r+1)$.

When $r=2$, by Theorem~\ref{thm:lower-dist}, each non-zero codeword contains at least three non-zero polynomials.
Without loss of generality, suppose that the three non-zero polynomials are $s_{\alpha}(x),s_{\beta}(x),s_{\gamma}(x)$ and the other $k-1$ polynomials are zero, where $0\leq \alpha<\beta<\gamma\leq k+1$. Suppose that the weight of $s_{\alpha}(x)$ is 2.
According to Eq. \eqref{eq:matrixH2},
we obtain that
\begin{align*}
\begin{bmatrix}
s_{\alpha}(x)\\
x^{\alpha}s_{\alpha}(x)\\
\end{bmatrix}=
\begin{bmatrix}
1 & 1 \\
x^{\beta} & x^{\gamma} \\
\end{bmatrix}\cdot
\begin{bmatrix}
s_{\beta}(x)\\
s_{\gamma}(x)\\
\end{bmatrix}.
\end{align*}
Therefore, we can compute that $(x^{\beta}+x^{\gamma})s_{\gamma}(x)=(x^{\alpha}+x^{\beta})s_{\alpha}(x)$ and $(x^{\beta}+x^{\gamma})s_{\beta}(x)=(x^{\alpha}+x^{\gamma})s_{\alpha}(x)$.
If $\lfloor \frac{k+1}{2}\rfloor \geq \tau$, we have $k+1\geq 2\tau$. Let $(\alpha,\beta,\gamma)=(0,\tau,2\tau)$ and $s_{0}(x)=1+x^{\tau}\in \mathcal{C}_{p\tau}(1,\tau,q,2)$ with weight $2$, then we can obtain $s_{\tau}(x)=x^{\tau}+x^{p\tau-\tau}$ and $s_{2\tau}(x)=1+x^{p\tau-\tau}$, which are both with weight $2$.
Therefore, the minimum symbol distance of $\textsf{GEBR}(p,\lfloor \frac{k+1}{2}\rfloor \geq \tau,k,r=2,q,1)$ is $2(r+1)=6$.

If $k+1 < \tau$, we have $0<\gamma-\alpha\leq k+1<\tau$ and $0<\gamma-\beta\leq k<\tau$.
Suppose that the weight of $s_{\alpha}(x)$ is 2, by Lemma \ref{lm:weight}, the weight of $s_{\beta}$
is larger than 2. Since $s_{\beta}(x)\in \mathcal{C}_{p\tau}(1,\tau,q,2)$ (the number of non-zero coefficients
of $s_{\beta}(x)$ is an even number by Lemma \ref{lm:coeff1}) and the weight of
$s_{\beta}(x)$ is larger than 2, the weight of $s_{\beta}(x)$ is no less than~4.
Consider the polynomial $s_{\gamma}(x)$. Since the weight of  $s_{\gamma}(x)$ is at least $2$,
the minimum symbol distance of $\textsf{GEBR}(p,\tau>k+1,k,r=2,q,1)$ is at least 8.
Let $(\alpha,\beta,\gamma)=(0,\beta,2\beta)$ and $s_{0}=1+x^{\tau}$, where $\beta\le (k+1)/2<\tau$. Then we obtain
$s_{2\beta}(x)=x^{p\tau-\beta}+x^{(p\tau-\beta+\tau)\bmod p\tau}$, which has  weight 2. Since $1+x^{2\beta}=(1+x^\beta)^2$, we can obtain
$s_{\beta}(x)=x^{p\tau-\beta}(1+x^{\tau})(1+x^\beta)$, i.e., $s_{\beta}(x)=1+x^\tau+x^{p\tau-\beta}+x^{(p\tau-\beta+\tau)\bmod p\tau}$, which is with weight 4. We can thus obtain
that the minimum symbol distance of $\textsf{GEBR}(p,\tau>k+1,k,r=2,q,1)$ is 8.

When $r=3$, by Theorem~\ref{thm:lower-dist}, we have at least four non-zero polynomials.
Without loss of generality, suppose that the four non-zero polynomials are $s_{\alpha}(x),s_{\beta}(x),s_{\gamma}(x),s_{\eta}(x)$ and the other $k-1$ polynomials are zero, where $0\leq \alpha<\beta<\gamma<\eta \leq k+2$. We assume that the weight of $s_{\alpha}(x)$ is 2. By Eq.~\eqref{eq:matrixH2},
we have
\begin{align*}
\begin{bmatrix}
s_{\alpha}(x)\\
x^{\alpha}s_{\alpha}(x)\\
x^{2\alpha}s_{\alpha}(x)\\
\end{bmatrix}=
\begin{bmatrix}
1 & 1 & 1 \\
x^{\beta} & x^{\gamma} & x^{\eta} \\
x^{2\beta} & x^{2\gamma} & x^{2\eta} \\
\end{bmatrix}\cdot
\begin{bmatrix}
s_{\beta}(x)\\
s_{\gamma}(x)\\
s_{\eta}(x)\\
\end{bmatrix}.
\end{align*}


{Since $\lfloor \frac{k+2}{3}\rfloor\geq \tau$, we have $k+2\geq 3\tau$. Let $(\alpha,\beta,\gamma,\eta)=(0,\tau,2\tau,3\tau)$ and $s_{0}(x)=1+x^{\tau}$, then
we can compute $s_{\tau}(x),s_{2\tau}(x),s_{3\tau}(x)$ as follows,
\begin{align*}
\begin{bmatrix}
s_{\tau}(x)\\
s_{2\tau}(x)\\
s_{3\tau}(x)\\
\end{bmatrix}=&\begin{bmatrix}
x^{\tau}+x^{p\tau -2\tau}\\
1+x^{p\tau -3\tau}\\
x^{p\tau-3\tau}+x^{p\tau-2\tau}\\
\end{bmatrix},
\end{align*}
which have all weight 2.}
Therefore, the minimum symbol distance is $2(r+1)=8$ when $\lfloor \frac{k+2}{3}\rfloor\geq \tau$ and $r=3$.
\end{proof}

By Theorem \ref{thm:dist2}, the minimum symbol distance of $\textsf{GEBR}(p,\tau>k+1,k,r=2,q,1)$ is larger than that of $\textsf{GEBR}(p,\tau=1,k,r=2,q,1)$, i.e., EBR codes with $r=2$.

\begin{theorem}
Suppose the codes $\textsf{GEBR}(p,\tau,k,r,q,1)$ are $(n,k)$ recoverable. When $r= 4$, $g(x)=1$ and $\tau\leq \lfloor \frac{k+3}{4}\rfloor$, the minimum symbol distance of $\textsf{GEBR}(p,\tau,k,4,q,1)$ is no larger than 12.
\label{thm:dist3}
\end{theorem}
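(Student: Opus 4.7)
The plan is to exhibit an explicit nonzero codeword supported on the five columns $\{0,\tau,2\tau,3\tau,4\tau\}$, extending the construction used in the proof of Theorem~\ref{thm:dist2} for $r=2,3$. The hypothesis $\tau\le\lfloor(k+3)/4\rfloor$ gives $4\tau\le k+3=k+r-1$, so all five column indices lie in the valid range. Setting $s_0(x)=1+x^{\tau}$ and every column outside $\{0,\tau,2\tau,3\tau,4\tau\}$ to zero, the four rows of $\mathbf{H}_{r\times(k+r)}$ in Eq.~\eqref{eq:matrixH2} reduce, under the substitution $y=x^{\tau}$, to a $4\times 4$ Vandermonde system for $s_\tau,s_{2\tau},s_{3\tau},s_{4\tau}$ with nodes $y,y^2,y^3,y^4$ and right-hand side $s_0\cdot[1,1,1,1]^{T}$ over $\mathcal{R}_{p\tau}(q)$, with $y^p=1$.

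The main step is to solve this system via a Lagrange-interpolation trick. I would introduce the auxiliary polynomial $S(z)=s_\tau z+s_{2\tau}z^2+s_{3\tau}z^3+s_{4\tau}z^4$, observe that the four row equations read $S(y^i)=s_0$ for $i=0,1,2,3$, and note the extra constraint $S(0)=0$. Since $S(z)-s_0$ has degree at most $4$ and vanishes at $z\in\{1,y,y^2,y^3\}$, it equals $c\cdot(z+1)(z+y)(z+y^2)(z+y^3)$ for a unique $c\in\mathcal{R}_{p\tau}(q)$; the condition $S(0)=0$ then forces $c=s_0\,y^{p-6}$ (using $y^p=1$ and characteristic~$2$). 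Expanding the product, reading off coefficients, and simplifying with $s_0=1+y$ yields the closed forms
\[
s_\tau=y+y^{p-3},\quad s_{2\tau}=1+y^{p-5}+y^{p-3}+y^{p-2},\quad s_{3\tau}=y^{p-6}+y^{p-2},\quad s_{4\tau}=y^{p-6}+y^{p-5}.
\]

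To conclude, I would verify that each of $s_0,s_\tau,s_{2\tau},s_{3\tau},s_{4\tau}$ lies in $\mathcal{C}_{p\tau}(1,q,2)$: by Lemma~\ref{lm:coeff1} this reduces to checking $\sum_{\ell=0}^{p-1}s_{\ell\tau+\mu,j}=0$ for each $\mu$, which for polynomials supported only on positions that are multiples of $\tau$ (with all nonzero coefficients equal to $1$) amounts to having an even number of terms; the formulas above have $2,2,4,2,2$ terms respectively. Counting the total number of nonzero symbols gives at most $2+2+4+2+2=12$, and since the codeword is nonzero (e.g., $s_0\ne 0$), this proves $D\le 12$.

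The main obstacle will be the careful bookkeeping in the Lagrange expansion and the subsequent weight count, in particular justifying that the four exponents $0,p-5,p-3,p-2$ appearing in $s_{2\tau}$ are pairwise distinct modulo~$p$ so that its weight is exactly $4$. For $p\ge 7$ this is immediate; for smaller admissible~$p$, collisions only reduce the weight further, which only strengthens the bound $D\le 12$, so no extra case analysis is needed.
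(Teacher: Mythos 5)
Your proposal is correct and takes essentially the same route as the paper: both exhibit the same nonzero codeword supported on columns $0,\tau,2\tau,3\tau,4\tau$ with $s_0(x)=1+x^{\tau}$, and your closed forms for $s_{\tau},s_{2\tau},s_{3\tau},s_{4\tau}$ agree exactly with the paper's (which states the multipliers of $s_0(x)$ without derivation), giving total weight $2+2+4+2+2=12$. Your Lagrange-interpolation derivation of those polynomials, and the explicit membership check via Lemma~\ref{lm:coeff1}, are just a more systematic presentation of steps the paper leaves implicit.
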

\begin{proof}
It is sufficient to find a codeword such that the symbol distance is 12 when $r=4$. As the code $\textsf{GEBR}(p,\tau,k,r,q,1)$ is $(n,k)$ recoverable, each codeword contains at least $r+1$ non-zero polynomials that are in $\mathcal{C}_{p\tau}(1,\tau,q,d)$, with the weight $d$ of each non-zero polynomial being a multiple of 2 by Lemma \ref{lm:coeff1}.

Consider $r=4$. Without loss of generality, suppose that the five non-zero polynomials are $s_{\alpha}(x),s_{\beta}(x),s_{\gamma}(x),s_{\delta}(x),s_{\eta}(x)$ and the other $k-1$ polynomials are zero, where $0\leq \alpha<\beta<\gamma<\delta<\eta\leq k+3$. Suppose that the weight of $s_{\alpha}(x)$ is 2.
According to Eq. \eqref{eq:matrixH2},
we have
\begin{align*}
&\begin{bmatrix}
s_{\alpha}(x) &
x^{\alpha}s_{\alpha}(x) &
x^{2\alpha}s_{\alpha}(x) &
x^{3\alpha}s_{\alpha}(x)\\
\end{bmatrix}\\
=&\begin{bmatrix}
s_{\beta}(x) &
s_{\gamma}(x) &
s_{\delta}(x)&
s_{\eta}(x)\\
\end{bmatrix}\cdot
\begin{bmatrix}
1 & x^{\beta} & x^{2\beta} & x^{3\beta}\\
1 & x^{\gamma} & x^{2\gamma} & x^{3\gamma}\\
1 & x^{\delta} & x^{2\delta}  & x^{3\delta}\\
1 & x^{\eta} & x^{2\eta}  & x^{3\eta}\\
\end{bmatrix}.
\end{align*}
Since $\tau\leq \lfloor \frac{k+3}{4}\rfloor$,
we have $4\tau\leq k+3$.
Let $(\alpha,\beta,\gamma,\delta,\eta)=(0,\tau,2\tau,3\tau,4\tau)$ and $s_0(x)=1+x^{\tau}$, then we can take
\begin{align*}
s_{\tau}(x)=&x^{\tau}+x^{(p-3)\tau},\\
s_{2\tau}(x)=&x^{(p-5)\tau}+x^{(p-3)\tau}+x^{(p-2)\tau}+1,\\
s_{3\tau}(x)=&x^{(p-6)\tau}+x^{(p-2)\tau},\\
s_{4\tau}(x)=&x^{(p-6)\tau}+x^{(p-5)\tau}.
\end{align*}

Therefore, the minimum symbol distance is no larger than 12 when $r=4$ and $\tau\leq \lfloor \frac{k+3}{4}\rfloor$.
\end{proof}

Lemma 30 in \cite{BR2019} is a special case of Theorem \ref{thm:dist2} with $\tau=1$ and $2\leq r\leq 3$.
To determine the minimum symbol distance of other parameters is an open problem.

\section{Recovery of Erased Lines of Slope $i$ in $\textsf{GEBR}(p,\tau,k,r,q,1)$ Codes}
\label{sec:lines}
In this section, we assume $\textsf{GEBR}(p,\tau,k,r,q,1)$  are $(n,k)$ recoverable and want to show that
$\textsf{GEBR}(p,\tau,k,r,q,1)$ can recover some erased
lines of slope $i$ with $i=0,1,\ldots,r-1$, under some constraint.
We first consider the code $\textsf{GEBR}(p,1,k,r,q,1)$ with
$\tau=1$, that is, an EBR code. Recall that the $k+r$ symbols in line $\ell$ of slope $i$ of the $m \times (k+r)$ codeword array are $s_{\ell,0}, s_{\ell-i,1},s_{\ell-2i,2},\cdots,s_{\ell-(k+r-1)i,k+r-1}$ for $\ell=0,1,\ldots,m-1$ and $i=0,1,\ldots,r-1$.

\begin{theorem}
The code $\textsf{GEBR}(p,1,k,r,q,1)$
can recover any $r$ erased lines $e_1,e_2,\ldots,e_r$ of slope $i$ for $0\leq i\leq r-1$, if and only if
the following matrix
\[
\begin{bmatrix}
1 & 1 & \cdots & 1\\
x^{e_1} & x^{e_2} & \cdots & x^{e_r}\\
x^{e_1\cdot 2^{-1}} & x^{e_2\cdot 2^{-1}} & \cdots & x^{e_r\cdot 2^{-1}}\\
\vdots & \vdots & \ddots & \vdots \\
x^{e_1\cdot(r-i-1)^{-1}} & x^{e_2\cdot(r-i-1)^{-1}} & \cdots & x^{e_r\cdot(r-i-1)^{-1}}\\
x^{e_1\cdot (p-1)^{-1}} & x^{e_2\cdot (p-1)^{-1}} & \cdots & x^{e_r\cdot (p-1)^{-1}}\\
x^{e_1\cdot (p-2)^{-1}} & x^{e_2\cdot (p-2)^{-1}} & \cdots & x^{e_r\cdot (p-2)^{-1}}\\
\vdots & \vdots & \ddots & \vdots \\
x^{e_1\cdot (p-i)^{-1}} & x^{e_2\cdot (p-i)^{-1}} & \cdots & x^{e_r\cdot (p-i)^{-1}}\\
\end{bmatrix}
\]
is invertible over $\mathbb{F}_{q}[x]/(1+x+\ldots+x^{p-1})$,
where $0\leq e_1 <e_2<\cdots <e_r\leq p-1$ and $k+r\leq p$. Note that for any integer $\ell$ with
$1\leq \ell\leq p-1$, $\ell^{-1}$ is the inverse of $\ell$ in $\mathbb{Z}_p$, i.e.,
$\ell^{-1}\ell=1\bmod p$.
\label{thm:reco-tau1}
\end{theorem}
\begin{proof}
When $\tau=1$, we have $p\geq k+r$. Suppose that $r$ lines $e_1,e_2,\ldots,e_r$ of slope 0 are erased, where $0\leq e_1 <e_2<\cdots <e_r\leq p-1$.
For $\ell=0,1,\ldots,p-1$, we represent $k+r$ symbols $s_{\ell,0},s_{\ell,1},\ldots,s_{\ell,k+r-1}$ by the polynomial
\[
\bar{s}_{\ell}(x)=s_{\ell,0}+s_{\ell,1}x+\ldots+s_{\ell,k+r-1}x^{k+r-1}
\]
over $\mathbb{F}_{q}[x]/(1+x^{p})$. As
\[
s_{\ell,0}+s_{\ell,1}+\ldots+s_{\ell,k+r-1}=0,
\]
by Lemma~\ref{lm:coeff1}, we have\footnote{We can set $s_{\ell, k+r}, s_{\ell,k+r+1}, \ldots, s_{\ell, p-1}$ all zeros.}
\[
\bar{s}_\ell(x)\in \mathcal{C}_{p}(1,1,q,d).
\]
$\mathcal{C}_{p}(1,1,q,d)$ is an ideal of $\mathbb{F}_{q}[x]/(1+x^{p})$ and is isomorphic to $\mathbb{F}_{q}[x]/(1+x+\ldots+x^{p-1})$ by Lemma~\ref{lm:isom2}. For $j=0,1,\ldots,k+r-1$, the summation of the $p$ symbols in column $j$ is zero, we can thus compute
\[
s_{e_1,j}+s_{e_2,j}+\cdots+s_{e_r,j}=\sum_{\ell=1,\ell\neq e_2-e_1,\ldots,e_r-e_1}^{p-1}s_{\ell +e_1,j},
\]
i.e.,
\begin{equation}
\bar{s}_{e_1}(x)+\bar{s}_{e_2}(x)+\cdots+\bar{s}_{e_r}(x)=\sum_{\ell=1,\ell\neq e_2-e_1,\ldots,e_r-e_1}^{p-1}\bar{s}_{\ell +e_1}(x).
\label{eq:tau1-reco}
\end{equation}
According to row $\ell$ of Eq. \eqref{eq:matrixH2}, we have
\begin{align*}
&0=s_0(x)+x^{\ell}s_1(x)+\cdots+x^{(p-1)\ell}s_{p-1}(x)\\
&=(s_{0,0}+s_{1,0}x+\cdots+s_{p-1,0}x^{p-1})+\\
&x^{\ell}(s_{0,1}+s_{1,1}x+\cdots+s_{p-1,1}x^{p-1})+\\
&\cdots+x^{(p-1)\ell}(s_{0,p-1}+s_{1,p-1}x+\cdots+s_{p-1,p-1}x^{p-1})\\
&=(s_{0,0}+s_{0,1}x^{\ell}+\cdots+s_{0,p-1}x^{(p-1)\ell})+\\
&x(s_{1,0}+s_{1,1}x^{\ell}+\cdots+s_{1,p-1}x^{(p-1)\ell})+\\
&\cdots+x^{p-1}(s_{p-1,0}+s_{p-1,1}x^{\ell}+\cdots+s_{p-1,p-1}x^{(p-1)\ell})\\
&=(s_{0,0}+s_{1,p-\ell^{-1}}+\cdots+s_{p-1,p-(p-1)\ell^{-1}})+\\
&x^{\ell}(s_{0,1}+s_{1,p-\ell^{-1}+1}+\cdots+s_{p-1,p-(p-1)\ell^{-1}+1})+\cdots+\\
&x^{(p-1)\ell}(s_{0,p-1}+s_{1,p-\ell^{-1}+p-1}+\cdots+s_{p-1,p-(p-1)\ell^{-1}+p-1}),
\end{align*}
where $\ell^{-1}\ell=1\bmod p$. Note that $\ell\cdot i\neq \ell \cdot j\bmod p$ for
$i\neq j\in \{0,1,\ldots,p-1\}$ and $\{0,\ell,\cdots,(p-1)\ell\}=\{0,1,\cdots,p-1\}\bmod p$,
we have
\[
s_{0,j}+s_{1,p-\ell^{-1}+j}+\cdots+s_{p-1,p-(p-1)\ell^{-1}+j}=0
\]
for $j=0,1,\ldots,p-1$.
Therefore, we can obtain that
\[
\bar{s}_{0}(x)+x^{\ell^{-1}}\bar{s}_{1}(x)+x^{2\ell^{-1}}\bar{s}_{2}(x)+\cdots+x^{(p-1)\ell^{-1}}\bar{s}_{p-1}(x)=0,
\]
where $\ell=1,2,\ldots,r-1$. From the above equation, we then can compute
\begin{eqnarray}
&&x^{e_1 \ell^{-1}}\bar{s}_{e_1}(x)+x^{e_2 \ell^{-1}}\bar{s}_{e_2}(x)+\cdots+x^{e_r \ell^{-1}}\bar{s}_{e_r}(x)\nonumber\\
&=&\sum_{u=1,u\neq e_2-e_1,\ldots,e_r-e_1}^{p-1}x^{(u+e_1) \ell^{-1}}\bar{s}_{u+e_1}(x).
\label{eq:tau1-reco1}
\end{eqnarray}
By Eq. \eqref{eq:tau1-reco} and Eq. \eqref{eq:tau1-reco1}, we can obtain
\begin{align}
&\begin{bmatrix}
1 & 1 & \cdots & 1\\
x^{e_1} & x^{e_2} & \cdots & x^{e_r}\\
\vdots & \vdots & \ddots & \vdots \\
x^{e_1(r-1)^{-1}} & x^{e_2(r-1)^{-1}} & \cdots & x^{e_r(r-1)^{-1}}\\
\end{bmatrix} \begin{bmatrix}
\bar{s}_{e_1}(x)\\ \bar{s}_{e_2}(x)\\ \vdots \\ \bar{s}_{e_r}(x)\\
\end{bmatrix}\nonumber\\
=&\begin{bmatrix}
\sum_{u=1,u\neq e_2-e_1,\ldots,e_r-e_1}^{p-1}\bar{s}_{u +e_1}(x)\\
\sum_{u=1,u\neq e_2-e_1,\ldots,e_r-e_1}^{p-1}x^{u+e_1}\bar{s}_{u+e_1}(x)\\
\vdots \\
\sum_{u=1,u\neq e_2-e_1,\ldots,e_r-e_1}^{p-1}x^{(u+e_1) (r-1)^{-1}}\bar{s}_{u+e_1}(x)\\
\end{bmatrix}.
\label{eq: invertible V}
\end{align}
By applying the isomorphism $\theta: \mathcal{C}_{p}(1,1,q,d) \rightarrow \mathbb{F}_{q}[x]/(1+x+\ldots+x^{p-1})$ in Lemma \ref{lm:isom2} for the above linear equations,
we can show that, if the determinant of the above $r\times r$ matrix is invertible
over $\mathbb{F}_{q}[x]/(1+x+\ldots+x^{p-1})$, then we can
compute $\bar{s}_{e_1}(x),\bar{s}_{e_2}(x),\ldots,\bar{s}_{e_r}(x)$ by first solving
the linear equations over $\mathbb{F}_{q}[x]/(1+x+\ldots+x^{p-1})$ and then
applying the inverse isomorphism $\theta^{-1}$.

Next, we consider that $r$ lines $e_1,e_2,\ldots,e_r$ of slope $i$ are erased, where $0\leq e_1 <e_2<\cdots <e_r\leq p-1$ and $i=1,2,\ldots,r-1$. For $\ell=0,1,\ldots,p-1$, we represent the $k+r$ symbols
$s_{\ell,0},s_{\ell-i,1},s_{\ell-2i,2},\ldots,s_{\ell-i(k+r-1),k+r-1}$
in the line of slope $i$ by the polynomial
\[
\bar{s}_{\ell}(x)=s_{\ell,0}+s_{\ell-i,1}x+\ldots+s_{\ell-i(k+r-1),k+r-1}x^{k+r-1},
\]
which is in $\mathcal{C}_{p}(1,1,q,d)$, as the summation of the $k+r$ symbols is zero.
{According to row $i+1$ of Eq. \eqref{eq:matrixH2},
we have that the summation of the $k+r$ symbols in the line of slope
$i+1$ is zero, i.e.,
\begin{equation}
s_{j,0}+s_{j-(i+1),1}+s_{j-2(i+1),2}+\cdots+s_{j-(k+r-1)(i+1),k+r-1}=0
\label{eq:thm-rec-sum0}
\end{equation}
for $j=0,1,\ldots,p-1$. Note that all the indices are taken modulo $p$ in the proof
and $s_{j,k+r},s_{j,k+r+1},$ $\cdots,s_{j,p-1}$ are all zero.
Then, we can obtain that
\begin{align*}
&\bar{s}_{0}(x)+x\bar{s}_{1}(x)+x^{2}\bar{s}_{2}(x)+\cdots+
x^{p-1}\bar{s}_{p-1}(x)\\
=&\sum_{j=0}^{p-1}s_{-ij,j}x^j+x(\sum_{j=0}^{p-1}s_{1-ij,j}x^j)+
\cdots+x^{p-1}(\sum_{j=0}^{p-1}s_{p-1-ij,j}x^j)\\
=&(s_{0,0}+s_{p-1-i,1}+s_{p-2-2i,2}+\cdots+s_{1-(p-1)i,p-1})+\\
&(s_{-i,1}+s_{1,0}+s_{2-(p-1)i,p-1}+\cdots+s_{p-1-2i,2})x\\
&+(s_{-2i,2}+s_{1-i,1}+s_{2,0}+\cdots+s_{p-1-3i,3})x^2+\cdots+\\
&(s_{-i(p-1),p-1}+s_{1-i(p-2),p-2}+\cdots+s_{p-1,0})x^{p-1}\\
=&0,
\end{align*}
where the last equation comes from Eq. \eqref{eq:thm-rec-sum0}.
Similarly, according to row $i+\ell$ of Eq. \eqref{eq:matrixH2}, we can compute that
\begin{align*}
\bar{s}_{0}(x)+x^{\ell^{-1}}\bar{s}_{1}(x)+\cdots+x^{(p-1)\ell^{-1}}\bar{s}_{p-1}(x)=0,
\end{align*}
where $\ell=1,2,\ldots,r-i-1, -1,-2,\ldots,-i$, then we can compute
\begin{eqnarray}
&&x^{e_1 \ell^{-1}}\bar{s}_{e_1}(x)+x^{e_2 \ell^{-1}}\bar{s}_{e_2}(x)+\cdots+x^{e_r \ell^{-1}}\bar{s}_{e_r}(x)\nonumber \\
&=&\sum_{u=1,u\neq e_2-e_1,\ldots,e_r-e_1}^{p-1}x^{(u+e_1) \ell^{-1}}\bar{s}_{u+e_1}(x).
\label{eq:tau1-reco2}
\end{eqnarray}
}

Since the summation of the $p$ symbols in each column is zero, we have
\[
\bar{s}_0(x)+\bar{s}_1(x)+\ldots+\bar{s}_{p-1}(x)=0,
\]
together with Eq. \eqref{eq:tau1-reco2}, we can obtain
\begin{small}
\begin{align}
&\begin{bmatrix}
1 & 1 & \cdots & 1\\
x^{e_1} & x^{e_2} & \cdots & x^{e_r}\\
\vdots & \vdots & \ddots & \vdots \\
x^{e_1(r-i-1)^{-1}} & x^{e_2(r-i-1)^{-1}} & \cdots & x^{e_r(r-i-1)^{-1}}\\
x^{e_1(p-1)^{-1}} & x^{e_2(p-1)^{-1}} & \cdots & x^{e_r(p-1)^{-1}}\\
\vdots & \vdots & \ddots & \vdots \\
x^{e_1(p-i)^{-1}} & x^{e_2(p-i)^{-1}} & \cdots & x^{e_r(p-i)^{-1}}\\
\end{bmatrix} \begin{bmatrix}
\bar{s}_{e_1}(x)\\ \bar{s}_{e_2}(x)\\ \vdots \\ \bar{s}_{e_r}(x)\\
\end{bmatrix}\nonumber \\
=&\begin{bmatrix}
\sum_{u=1,u\neq e_2-e_1,\ldots,e_r-e_1}^{p-1}\bar{s}_{u +e_1}(x)\\
\sum_{u=1,u\neq e_2-e_1,\ldots,e_r-e_1}^{p-1}x^{u+e_1 }\bar{s}_{u+e_1}(x)\\
\vdots \\
\sum_{u=1,u\neq e_2-e_1,\ldots,e_r-e_1}^{p-1}x^{(u+e_1) (r-i-1)^{-1}}\bar{s}_{u+e_1}(x)\\
\sum_{u=1,u\neq e_2-e_1,\ldots,e_r-e_1}^{p-1}x^{(u+e_1)(p-1)^{-1} }\bar{s}_{u+e_1}(x)\\
\vdots \\
\sum_{u=1,u\neq e_2-e_1,\ldots,e_r-e_1}^{p-1}x^{(u+e_1) (p-i)^{-1}}\bar{s}_{u+e_1}(x)\\
\end{bmatrix},
\label{eq:thm-rec1}
\end{align}
\end{small}
for $i=1,2,\ldots,r-2$ and
\begin{small}
\begin{align}
&\begin{bmatrix}
1 & 1 & \cdots & 1\\
x^{e_1(p-1)^{-1}} & x^{e_2(p-1)^{-1}} & \cdots & x^{e_r(p-1)^{-1}}\\
\vdots & \vdots & \ddots & \vdots \\
x^{e_1(p-r+1)^{-1}} & x^{e_2(p-r+1)^{-1}} & \cdots & x^{e_r(p-r+1)^{-1}}\\
\end{bmatrix} \begin{bmatrix}
\bar{s}_{e_1}(x)\\ \bar{s}_{e_2}(x)\\ \vdots \\ \bar{s}_{e_r}(x)\\
\end{bmatrix}\nonumber \\
=&\begin{bmatrix}
\sum_{u=1,u\neq e_2-e_1,\ldots,e_r-e_1}^{p-1}\bar{s}_{u +e_1}(x)\\
\sum_{u=1,u\neq e_2-e_1,\ldots,e_r-e_1}^{p-1}x^{(u+e_1)(p-1)^{-1}}\bar{s}_{u+e_1}(x)\\
\vdots \\
\sum_{u=1,u\neq e_2-e_1,\ldots,e_r-e_1}^{p-1}x^{(u+e_1) (p-r+1)^{-1}}\bar{s}_{u+e_1}(x)\\
\end{bmatrix},
\label{eq:thm-rec2}
\end{align}
\end{small}
when $i=r-1$.
If the left $r\times r$ matrices in Eq. \eqref{eq:thm-rec1} and
Eq.~\eqref{eq:thm-rec2} are invertible in $\mathbb{F}_q[x]/(1+x+\cdots+x^{p-1})$, then we can compute the erased $r$ polynomials.
Therefore, the necessary and sufficient condition for recovering any $r$ erased lines of slope $i$
is proved.
\end{proof}

In Theorem \ref{thm:reco-tau1}, we presented a necessary and sufficient condition
for recovering any $r$ erased lines of slope $i$ for $0\leq i\leq r-1$. We have
$p$ distinct slopes in the $p\times (k+r)$ array; however the method in Theorem \ref{thm:reco-tau1}
is not applicable to the slope $i$ with $r\leq i\leq p-1$. The reason is as follows.
When $0\leq i\leq r-1$, we represent the erased $k+r$ symbols in each erased line by
a polynomial which is in $\mathcal{C}_{p}(1,1,q,d)$. According to the parity-check matrix
in Eq. \eqref{eq:matrixH2}, we can formulate $r$ linear equations of the $r$ erased
polynomials in $\mathcal{C}_{p}(1,1,q,d)$. If the corresponding $r\times r$ matrix
of the $r$ linear equations is invertible in $\mathbb{F}_q[x]/(1+x+\cdots+x^{p-1})$,
then we can recover the $r$ erased polynomials, i.e., the $r$ erased lines.
When $r\leq i\leq p-1$, the erased $r$ polynomials are in $\mathbb{F}_q[x]/(1+x^{p})$,
we are not sure whether each erased polynomial is in $\mathcal{C}_{p}(1,1,q,d)$ or not.
Therefore, there are many solutions in solving the $r$ linear equations of the
$r$ erased polynomials, even if the $r\times r$ matrix
of the $r$ linear equations is invertible in $\mathbb{F}_q[x]/(1+x+\cdots+x^{p-1})$.
Finding necessary and sufficient conditions for recovering any $r$ erased lines of slope
$i$ for $r\leq i\leq p-1$ is an open problem.

By Theorem \ref{thm:reco-tau1}, when $r=1,2,3$, we can check that the codes
$\textsf{GEBR}(p,1,k,r,q,1)$ can recover any $r$ erased
lines of slope $i$ for $0\leq i\leq r-1$, which is also shown by Theorem~40 in \cite{BR2019}.
When $r\geq 4$, we need to check that the matrix given in Theorem \ref{thm:reco-tau1}
is invertible over $\mathbb{F}_{q}[x]/(1+x+\ldots+x^{p-1})$.
Note that when the $r$ erased lines $e_1,e_2,\ldots,e_r$ are consecutive integers modulo $p$, i.e.,
$e_{i+1}=e_i+1\bmod p$ for $i=1,2,\ldots,r-1$, then the $r\times r$
matrix is a Vandermonde matrix
and is invertible over $\mathbb{F}_{q}[x]/(1+x+\ldots+x^{p-1})$. We can directly obtain the following
corollary from Theorem \ref{thm:reco-tau1}.
\begin{corollary}
The codes $\textsf{GEBR}(p,\tau,k,r,q,1)$ can recover any $r$ erased lines $e_1,e_2,\ldots,e_r$
with $e_{i+1}=e_i+1\bmod p$ of slope $i$ for $0\leq i\leq r-1$,
where $k+r\leq p$.
\label{thm:reco-tau1-2}
\end{corollary}

\begin{example}
Consider the code $\textsf{GEBR}(p=11,\tau=1,k=7,r=4,q,1)$. We have $k+r=11$ polynomials
\[
s_j(x)=s_{0,j}+s_{1,j}x+s_{2,j}x^2+\ldots+s_{10,j}x^{10},
\]
where $j=0,1,\ldots,10$ and
\[
s_{10,j}=s_{0,j}+s_{1,j}+s_{2,j}+\ldots+s_{9,j}.
\]
The parity-check matrix of the code is
\[
\begin{bmatrix}
1 & 1 &1 & 1 & 1 & 1 &1 & 1 & 1 & 1 & 1 \\
1 & x &x^2 &x^3 &x^4 &x^5 &x^6 &x^7 & x^{8} & x^{9} & x^{10}\\
1 & x^{2} &x^{4} &x^{6} &x^{8} &x^{10} &x^{12} &x^{14} & x^{16} & x^{18} & x^{20}\\
1 & x^{3} &x^{6} & x^{9} &x^{12} & x^{15} &x^{18} & x^{21} &x^{24} & x^{27} & x^{30}\\
\end{bmatrix}.
\]
According to row $\ell$ of the parity-check matrix, we have
\begin{equation}
s_{i,0}+s_{i-\ell,1}+s_{i-2\ell,2}+\cdots+s_{i-10\ell,10}=0,
\label{eq:gebr-exa-rec1}
\end{equation}
where $\ell=0,1,2,3$ and $i=0,1,\ldots,10$. Note that the indices are operated modulo $p=11$ in the example. Suppose that the following 44 symbols in four lines of slope 1
\begin{align*}
&s_{0,0},s_{10,1},s_{9,2},s_{8,3},s_{7,4},s_{6,5},s_{5,6},s_{4,7},s_{3,8},s_{2,9},s_{1,10},\\
&s_{1,0},s_{0,1},s_{10,2},s_{9,3},s_{8,4},s_{7,5},s_{6,6},s_{5,7},s_{4,8},s_{3,9},s_{2,10},\\
&s_{2,0},s_{1,1},s_{0,2},s_{10,3},s_{9,4},s_{8,5},s_{7,6},s_{6,7},s_{5,8},s_{4,9},s_{3,10},\\
&s_{3,0},s_{2,1},s_{1,2},s_{0,3},s_{10,4},s_{9,5},s_{8,6},s_{7,7},s_{6,8},s_{5,9},s_{4,10},
\end{align*}
are erased. For $i=0,1,\ldots,10$, we represent the following 11 symbols
\begin{align*}
&s_{i,0},s_{i-1,1},s_{i-2,2},s_{i-3,3},s_{i-4,4},s_{i-5,5},\\
&s_{i-6,6},s_{i-7,7},s_{i-8,8},s_{i-9,9},s_{i-10,10},
\end{align*}
in the line of slope 1 by the polynomial
\begin{equation}
\bar{s}_i(x)=s_{i,0}+s_{i-1,1}x+s_{i-2,2}x^2+\cdots+s_{i-10,10}x^{10}.
\label{eq:gebr-exm-rec2}
\end{equation}
By row $\ell=1$ (the second row) of the parity-check matrix, we have that the summation of all the coefficients of $\bar{s}_i(x)$ is zero. Therefore, $\bar{s}_i(x)\in \mathcal{C}_{p}(1,1,q,d)$.
We need to recover four polynomials $\bar{s}_0(x),\bar{s}_1(x),\bar{s}_2(x),\bar{s}_3(x)$ from the other 7 polynomials. By Eq. \eqref{eq:gebr-exa-rec1} with $\ell=0$, we have
\[
s_{i,0}+s_{i,1}+s_{i,2}+\cdots+s_{i,10}=0,
\]
where $i=0,1,\ldots,10$. Recall that the polynomial $\bar{s}_i(x)$ is given in Eq. \eqref{eq:gebr-exm-rec2}, we have
\[
\bar{s}_{0}(x)+x^{10}\bar{s}_{1}(x)+x^{9}\bar{s}_{2}(x)+\cdots+x\bar{s}_{10}(x)=0.
\]
When $\ell=2,3$, with the same argument, we have
\begin{align*}
&\bar{s}_{0}(x)+x\bar{s}_{1}(x)+x^{2}\bar{s}_{2}(x)+\cdots+x^{10}\bar{s}_{10}(x)=0,\\
&\bar{s}_{0}(x)+x^6\bar{s}_{1}(x)+x^{12}\bar{s}_{2}(x)+\cdots+x^{60}\bar{s}_{10}(x)=0.
\end{align*}
Since the summation of the 11 symbols in each column is zero, we have
\[
\bar{s}_{0}(x)+\bar{s}_{1}(x)+\bar{s}_{2}(x)+\cdots+\bar{s}_{10}(x)=0.
\]
Therefore, we obtain
\begin{align*}
\begin{bmatrix}
1 & x^{10} & x^9 & x^{8}\\
1 & 1 & 1 & 1\\
1 & x & x^2 & x^{3}\\
1 & x^6 & x^{12} & x^{18}\\
\end{bmatrix}\cdot \begin{bmatrix}
\bar{s}_{0}(x)\\\bar{s}_{1}(x)\\ \bar{s}_{2}(x) \\ \bar{s}_{3}(x)\\
\end{bmatrix}
=\begin{bmatrix}
\sum_{i=4}^{10}x^{11-i}\bar{s}_{i}(x)\\
\sum_{i=4}^{10}\bar{s}_{i}(x)\\
\sum_{i=4}^{10}x^{i}\bar{s}_{i}(x)\\
\sum_{i=4}^{10}x^{6i}\bar{s}_{i}(x)\\
\end{bmatrix}.
\end{align*}
Since
\begin{align*}
&\det\begin{bmatrix}
1 & x^{10} & x^9 & x^{8}\\
1 & 1 & 1 & 1\\
1 & x & x^2 & x^{3}\\
1 & x^6 & x^{12} & x^{18}\\
\end{bmatrix}\bmod (1+x^{11})\\
=&(1+x)(1+x^6)(1+x^{10})(x+x^6)(x+x^{10})(x^6+x^{10}),
\end{align*}
which is relatively prime to $1+x+\ldots+x^{10}$. Therefore, we can solve $\bar{s}_{0}(x),\bar{s}_{1}(x), \bar{s}_{2}(x), \bar{s}_{3}(x)$. Specifically, we can compute the four polynomials by Algorithm \ref{alg:lu}.
\end{example}

We have checked that the codes $\textsf{GEBR}(p,\tau,k,r=4,q,1)$ can recover
any $r=4$ erased lines $e_1,e_2,e_3,e_4$ with $0\leq e_1<e_2<e_3<e_4\leq p-1$ of slope
$i$ for $0\leq i\leq 3$, when $p=7,11,13,19$.

We can not employ the above method for the case of $\tau\geq 2$ in general, as the polynomial representing the $k+r$ erased symbols in the line of a slope is not in $\mathcal{C}_{p\tau}(1,\tau,q,d)$. We show in the next theorem that the code $\textsf{GEBR}(p,\tau,k,r,q,1)$ can recover up to $\tau$ erased lines of a slope, when $\tau\geq 2$. Note that the erased $\tau$ lines are not arbitrary.
\begin{theorem}
The code $\textsf{GEBR}(p,\tau,k,r,q,1)$ can recover any $\tau$ erased lines $e_1,e_2,\ldots,e_\tau$ of slope $i$ for $i=0,1,\ldots,r-1$, where $\tau \nmid (e_\alpha -e_\beta)$ for $\alpha\neq \beta\in \{1,2,\ldots,\tau\}$.
\label{thm:reco-taup-sloper}
\end{theorem}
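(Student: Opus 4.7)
The plan is to reconstruct every erased symbol in isolation, using only the intra-column cyclic-code constraint supplied by Lemma~\ref{lm:coeff1}; no Vandermonde system and no slope-$i$ parity check are needed, in contrast to the proof of Theorem~\ref{thm:reco-tau1}. Recall that for every column $j$, the polynomial $s_j(x)\in \mathcal{C}_{p\tau}(1,q,d)$ satisfies
\[
\sum_{\ell=0}^{p-1} s_{\ell\tau+\mu,\,j} \;=\; 0, \qquad \mu=0,1,\ldots,\tau-1.
\]
So each column is equipped with $\tau$ parity equations, one per residue class of row indices modulo $\tau$.

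First I would unpack the hypothesis $\tau\nmid(e_\alpha-e_\beta)$: it says the $\tau$ integers $e_1\bmod\tau,\ldots,e_\tau\bmod\tau$ are pairwise distinct, and since there are $\tau$ of them in a set of size $\tau$, they form a complete residue system modulo $\tau$. Fix a column $j\in\{0,1,\ldots,k+r-1\}$. The $\tau$ erased symbols in column $j$ coming from the $\tau$ erased lines of slope $i$ are $s_{(e_\alpha-ji)\bmod p\tau,\,j}$ for $\alpha=1,\ldots,\tau$. Translating the row-indices by the fixed integer $-ji\bmod\tau$ preserves distinctness of residues, so the erased row-indices in column $j$ also cover all $\tau$ residue classes modulo $\tau$, each exactly once.

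Consequently, for every pair $(j,\mu)$ the parity-check equation $\sum_{\ell=0}^{p-1} s_{\ell\tau+\mu,j}=0$ contains exactly one erased symbol, namely the one indexed by the unique $\alpha^\star$ with $(e_{\alpha^\star}-ji)\equiv\mu\pmod{\tau}$; the remaining $p-1$ symbols in that equation are known. The unknown symbol is then recovered as the XOR of those $p-1$ known symbols. Iterating over all $j\in\{0,\ldots,k+r-1\}$ and all $\mu\in\{0,\ldots,\tau-1\}$ produces all $\tau(k+r)$ erased symbols, and this works uniformly for every slope $i\in\{0,1,\ldots,r-1\}$ since the slope enters only through the shift $-ji$, which does not disturb the distinctness-of-residues argument.

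There is really no hard technical obstacle once the pigeonhole step is in place; the condition $\tau\nmid(e_\alpha-e_\beta)$ is precisely what isolates a single unknown inside each intra-column constraint. The only thing to watch is the bookkeeping: making sure the row-indices are read modulo $p\tau$ while the residue argument is taken modulo $\tau$, and noting that the MDS-ness of $\textsf{GEBR}(p,\tau,k,r,q,1)$ is not actually invoked — the $\tau$-periodicity built into $\mathcal{C}_{p\tau}(1,q,d)$ already supplies enough independent linear relations column-by-column whenever the erased lines meet every residue class modulo $\tau$ exactly once.
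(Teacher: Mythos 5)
Your proposal is correct and follows essentially the same route as the paper's own proof: both recover each erased symbol individually from the intra-column relation $\sum_{\ell=0}^{p-1}s_{\ell\tau+\mu,j}=0$, using the hypothesis $\tau\nmid(e_\alpha-e_\beta)$ to guarantee that each residue class modulo $\tau$ within a column contains at most one erased symbol. Your explicit pigeonhole/complete-residue-system justification and your observation that the MDS property is never invoked are slightly more detailed than the paper's write-up, but the argument is the same.
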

\begin{proof}
Suppose that $\tau$ lines $e_1,e_2,\ldots,e_\tau$ of slope $i$ are erased, where $\tau \nmid (e_\alpha -e_\beta)$ for $\alpha\neq \beta\in \{1,2,\ldots,\tau\}$.
For $\ell=e_1,e_2,\ldots,e_\tau$, the erased $k+r$ symbols in the line of slope $i$ are
$s_{\ell,0},s_{\ell-i,1},s_{\ell-2i,2},\ldots,s_{\ell-(k+r-1)i,k+r-1}$.
By Lemma \ref{eq:coeff}, we have
\[
s_{\ell,j}=\sum_{\mu=1}^{p-1}s_{\mu\tau +\ell,j},
\]
where $\ell=0,1,\ldots,p\tau-1$ and $j=0,1,\ldots,k+r-1$.
For any $\ell=e_1,e_2,\ldots,e_\tau$ and $j=0,1,\ldots,k+r-1$, the symbol
$s_{\ell-ji,j}$ is erased and the other $p-1$ symbols
$s_{\ell-ji+\tau,j},s_{\ell-ji+2\tau,j},\ldots,$ $s_{\ell-ji+(p-1)\tau,j}$
are not erased, as $\tau \nmid (e_\alpha -e_\beta)$ for $\alpha\neq \beta\in \{1,2,\ldots,\tau\}$.
Therefore, we can recover the erased symbol $s_{\ell-ji,j}$ by
\[
s_{\ell-ji,j}=s_{\ell-ji+\tau,j}+s_{\ell-ji+2\tau,j}+\ldots + s_{\ell-ji+(p-1)\tau,j}.
\]
\end{proof}

By Theorem \ref{thm:reco-taup-sloper}, $\textsf{GEBR}(p,\tau,k,r,q,1)$ can recover up to $\tau$ specified erased lines of slope $i$, for general parameter $r$.
In the following, we consider the code with $r=2$ and $\tau\geq 2$.

\begin{theorem}
If $\tau\geq 2$ and $p\tau>2(k+r-1)$, then the code $\textsf{GEBR}(p,\tau,k,r=2,q,1)$ can recover any two erased lines of slope $i$ for $i=0,1$.
\label{thm:reco-tau-slope01}
\end{theorem}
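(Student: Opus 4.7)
The plan is to reduce the recovery of any two erased lines $e_1,e_2$ of slope $i\in\{0,1\}$ to a $2\times 2$ linear system over the ring $R=\mathbb{F}_q[x]/(1+x^{p\tau})$ whose unknowns are line polynomials of degree at most $k+1$, and then to establish uniqueness by a simple degree argument that leverages the hypothesis $p\tau>2(k+1)$. Concretely, for each slope $i\in\{0,1\}$ and each $\ell\in\{0,1,\ldots,p\tau-1\}$, I encode the $k+r$ symbols of line $\ell$ as
\[
\bar{s}_\ell^{(i)}(x)=\sum_{j=0}^{k+1}s_{\ell-ij,\,j}\,x^j\in\mathbb{F}_q[x]
\]
(indices modulo $p\tau$), which has degree at most $k+1$. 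Because $s_j(x)\in\mathcal{C}_{p\tau}(1,q,2)$ forces $(1+x^\tau)\mid s_j(x)$ and hence $s_j(1)=0$ in characteristic~$2$, reindexing yields $\sum_\ell\bar{s}_\ell^{(i)}(x)=\sum_j x^j\sum_u s_{u,j}=0$ for both slopes. A similar reindexing gives $\sum_\ell x^\ell\bar{s}_\ell^{(0)}(x)=\sum_j x^j s_j(x)$, which vanishes in $R$ by row~$1$ of the parity-check matrix, and $\sum_\ell x^{-\ell}\bar{s}_\ell^{(1)}(x)=\sum_u x^{-u}\sum_j s_{u,j}$, which vanishes in $R$ by row~$0$.

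Restricting these two identities to the erased indices produces the $2\times 2$ system
\[
\bar{s}_{e_1}^{(i)}(x)+\bar{s}_{e_2}^{(i)}(x)=A(x),\qquad x^{\varepsilon e_1}\bar{s}_{e_1}^{(i)}(x)+x^{\varepsilon e_2}\bar{s}_{e_2}^{(i)}(x)\equiv B(x)\pmod{1+x^{p\tau}},
\]
with $\varepsilon=+1$ for $i=0$ and $\varepsilon=-1$ for $i=1$, and with $A(x),B(x)$ computable from the non-erased lines. Eliminating $\bar{s}_{e_2}^{(i)}$ and using that $x^{\varepsilon e_1}$ is a unit in $R$ reduces the system to the single congruence
\[
(1+x^{d})\,\bar{s}_{e_1}^{(i)}(x)\equiv C(x)\pmod{1+x^{p\tau}},
\]
where $d=\varepsilon(e_2-e_1)\bmod p\tau\in\{1,\ldots,p\tau-1\}$ and $C(x)$ is known. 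The true line polynomial furnishes a solution, so the remaining issue is uniqueness among polynomials of degree at most $k+1$.

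The main obstacle is this uniqueness step, and it is where the hypothesis $p\tau>2(k+1)$ enters. If $w(x)\in\mathbb{F}_q[x]$ satisfies $\deg w\leq k+1$ and $(1+x^d)w(x)\equiv 0\pmod{1+x^{p\tau}}$, then, using the standard characteristic-$2$ identity $\gcd(1+x^a,1+x^b)=1+x^{\gcd(a,b)}$, I set $e=\gcd(d,p\tau)$ and factor $1+x^d=(1+x^e)A_1(x)$, $1+x^{p\tau}=(1+x^e)B_1(x)$ with $\gcd(A_1,B_1)=1$. Cancelling $1+x^e$ in $\mathbb{F}_q[x]$, $A_1(x)w(x)=B_1(x)f(x)$ for some $f$, which forces $B_1(x)\mid w(x)$. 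Since $d\not\equiv 0\pmod{p\tau}$, $e$ is a proper divisor of $p\tau$, so $2e\leq p\tau$ and $\deg B_1=p\tau-e\geq p\tau/2>k+1$ by hypothesis, contradicting $\deg w\leq k+1$ unless $w=0$. Thus $\bar{s}_{e_1}^{(i)}(x)$ is uniquely determined, $\bar{s}_{e_2}^{(i)}(x)$ follows from the first equation, and the same argument handles both $i=0$ and $i=1$, completing the recovery.
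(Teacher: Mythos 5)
Your proof is correct, but it takes a genuinely different route from the paper's. The paper argues symbol-by-symbol with a case split on whether $\tau$ divides the difference of the two erased row indices: if not, every erased symbol is recovered purely locally from its column via $\sum_{\ell=0}^{p-1}s_{\ell\tau+\mu,j}=0$; if so, the hypothesis $p\tau>2(k+r-1)$ guarantees some line of the complementary slope meets the erased pair in only one symbol, which seeds a peeling chain alternating between line-parity and column-parity recoveries. You instead package each line as a polynomial of degree at most $k+r-1=k+1$, extract two global identities (the sum of all line polynomials vanishes by the column-local parity, and a twisted sum $\sum_\ell x^{\pm\ell}\bar{s}_\ell^{(i)}$ vanishes by the appropriate row of the parity-check matrix), eliminate one unknown, and reduce everything to showing $(1+x^d)w\equiv 0\pmod{1+x^{p\tau}}$ has no nonzero solution of degree at most $k+1$; the gcd identity $\gcd(1+x^d,1+x^{p\tau})=1+x^{\gcd(d,p\tau)}$ plus the bound $\deg B_1=p\tau-e\geq p\tau/2>k+1$ finishes it. I checked the identities (the slope-$1$ case correctly uses row $0$ with the twist $x^{-\ell}$, and $\gcd(A_1,B_1)=1$ does follow from the gcd identity), and the degree argument is sound. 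Your version is shorter, handles both divisibility cases uniformly, and makes transparent that the sole role of $p\tau>2(k+r-1)$ is the degree count (you never even need $\tau\geq 2$); what it gives up is the paper's explicit low-complexity peeling decoder, since you establish unique determination rather than an XOR-only recovery schedule, though the final congruence can still be solved by back-substitution under the degree constraint.
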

\begin{proof}
Recall that $s_{i,j}$ is the entry in row $i$ and column $j$ of the array in $\textsf{GEBR}(p,\tau,k,2,q,1)$, where $i=0,1,\ldots,p\tau-1$ and $j=0,1,\ldots,k+r-1$. Suppose that two lines of slope 0 are erased, i.e., rows $\alpha$ and $\beta$ of the array are erased, where $0\leq \alpha <\beta\leq p\tau-1$.
We need to recover $s_{\alpha,j}$ and $s_{\beta,j}$ for $j=0,1,\ldots,k+r-1$ from the other symbols.

As $r=2$, according to Eq. \eqref{eq:matrixH2}, we have that
\begin{equation}
\sum_{j=0}^{k+r-1}s_{i,j}=0 \text{ for } i=0,1,\ldots,p\tau-1,
\label{eq:slope0}
\end{equation}
and
\begin{equation}
\sum_{j=0}^{k+r-1}s_{i-j,j}=0 \text{ for } i=0,1,\ldots,p\tau-1.
\label{eq:slope1}
\end{equation}
If $\tau \nmid (\beta-\alpha)$, then we can recover $s_{\alpha,j}$ and $s_{\beta,j}$ by
\begin{equation}
\begin{array}{ll}
s_{\alpha,j}=&\sum_{\ell=1}^{p-1}s_{\ell \tau+\alpha,j},\\
s_{\beta,j}=&\sum_{\ell=1}^{p-1}s_{\ell \tau+\beta,j},
\end{array}
\label{eq:local-repair}
\end{equation}
according to Eq. \eqref{eq:coeff}.

Next, we assume that $\tau \mid (\beta-\alpha)$.
By Eq. \eqref{eq:slope1} with $i=\alpha$ and $i=\beta$, we have
\begin{align*}
&s_{\alpha,0}+s_{\alpha-1,1}+\ldots+s_{\alpha-(k+r-1),k+r-1}=0,\\
&s_{\beta,0}+s_{\beta-1,1}+\ldots+s_{\beta-(k+r-1),k+r-1}=0.
\end{align*}
If $0\leq \alpha-(k+r-1)$, or $0\geq \alpha-(k+r-1)$ and $\beta< p\tau+\alpha-(k+r-1)$, then we can recover $s_{\alpha,0}$ by
\[
s_{\alpha,0}=s_{\alpha-1,1}+s_{\alpha-2,2}+\ldots+s_{\alpha-(k+r-1),k+r-1}.
\]
Otherwise, we have $\beta\geq p\tau+\alpha-(k+r-1)$, then $\beta-(k+r-1)\geq p\tau+\alpha-2(k+r-1)>\alpha$ by assumption and we can recover $s_{\beta,0}$ by
\[
s_{\beta,0}=s_{\beta-1,1}+s_{\beta-2,2}+\ldots+s_{\beta-(k+r-1),k+r-1}.
\]
Once $s_{\alpha,0}$ or $s_{\beta,0}$ is known, we can recover  $s_{\beta,0}$ or  $s_{\alpha,0}$ by Eq. \eqref{eq:local-repair} with $j=0$. By repeating the above procedure for $i=\alpha+1,\alpha+2,\ldots,\alpha+k+r-1$ and $i=\beta+1,\beta+2,\ldots,\beta+k+r-1$, we can recover all $2n$ symbols $s_{\alpha,j}$ and $s_{\beta,j}$ for $j=0,1,\ldots,k+r-1$.

Next, we assume that two lines of slope 1 are erased, i.e., $s_{\alpha,0},s_{\alpha-1,1},\ldots,s_{\alpha-(k+r-1),k+r-1}$ and $s_{\beta,0},s_{\beta-1,1},\ldots,s_{\beta-(k+r-1),k+r-1}$ are erased. If $\tau \nmid (\beta-\alpha)$, we can recover $s_{\alpha-j,j}$ and $s_{\beta-j,j}$ by
\begin{equation}
\begin{array}{ll}
s_{\alpha-j,j}=&\sum_{\ell=1}^{p-1}s_{\ell \tau+\alpha-j,j},\\
s_{\beta-j,j}=&\sum_{\ell=1}^{p-1}s_{\ell \tau+\beta-j,j},
\end{array}
\label{eq:local-repair1}
\end{equation}
within the column. When $\tau \mid (\beta-\alpha)$, we can recover the symbol $s_{\alpha-(k+r-1),k+r-1}$ by
\[
s_{\alpha-(k+r-1),k+r-1}=\sum_{j=0}^{k+r-2}s_{\alpha-(k+r-1),j}
\]
if $\alpha-(k+r-1)\geq 0$ or $\alpha-(k+r-1)<0$ and $p\tau+\alpha-(k+r-1)>\beta$, or recover the symbol $s_{\beta,0}$ by
\[
s_{\beta,0}=s_{\beta,1}+s_{\beta,2}+\ldots+s_{\beta,k+r-1}
\]
if $\beta>p\tau+\alpha-(k+r-1)$.
Once $s_{\alpha-(k+r-1),k+r-1}$ or $s_{\beta,0}$ is known, we can recover $s_{\beta-(k+r-1),k+r-1}$ or $s_{\alpha,0}$ by Eq.~\eqref{eq:local-repair} with $j=k+r-1$ or $j=0$. Similarly, we can recover all $2n$ symbols in the erased two lines of slope 1.
\end{proof}

The next theorem shows that $\textsf{GEBR}(p,\tau,k,r=2,q,1)$ can recover more than two erased lines of slope $i$, if $\tau$ is large.
\begin{theorem}
If $\tau>k+r-1$, then the code $\textsf{GEBR}(p,\tau,k,r=2,q,1)$ can recover any three erased lines of slope $i$ for $i=0,1$.
\label{thm:reco-tau-slope013}
\end{theorem}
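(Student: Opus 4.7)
The plan is to mimic the case-splitting of Theorem \ref{thm:reco-tau-slope01} according to residues modulo $\tau$, but now for three erased lines instead of two. Consider first slope $i=0$, and let $\alpha<\beta<\gamma$ denote the indices of the erased rows. Classify the triple by how many of the pairwise differences $\beta-\alpha$, $\gamma-\beta$, $\gamma-\alpha$ are divisible by $\tau$; by transitivity this count must be $0$, $1$, or $3$, giving three sub-cases that I will handle separately.

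In the first sub-case all three rows have distinct residues modulo $\tau$, so in every column $j$ the three erased symbols $s_{\alpha,j},s_{\beta,j},s_{\gamma,j}$ lie in three different length-$p$ local groups, and each is recovered via the local-repair identity \eqref{eq:coeff}. In the second sub-case exactly one pair shares a residue, so one of the three rows has a residue different from the other two; that row is recovered column-wise by \eqref{eq:coeff}, leaving two erased rows whose repair is handled by Theorem~\ref{thm:reco-tau-slope01}, whose hypothesis $p\tau>2(k+r-1)$ follows from $p\geq 3$ and $\tau>k+r-1$.

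The third sub-case, where all three rows share a residue, is the main obstacle and requires the diagonal parity \eqref{eq:slope1}. Here $\beta-\alpha$, $\gamma-\beta$, and $\gamma-\alpha$ are positive multiples of $\tau$ with $\gamma-\alpha\le(p-1)\tau$. For any $j_0\in\{0,1,\ldots,k+r-1\}$, apply \eqref{eq:slope1} at $i=\alpha+j_0$. The potential erased column indices in this equation are $j_0$, $j_0+\alpha-\beta\bmod p\tau$, and $j_0+\alpha-\gamma\bmod p\tau$. Only the first lies in $[0,k+r-1]$: the other two equal $p\tau-(\beta-\alpha)+j_0$ and $p\tau-(\gamma-\alpha)+j_0$, each strictly greater than $k+r-1$ because $\gamma-\alpha\le(p-1)\tau$ together with $\tau>k+r-1$ gives $p\tau-(\gamma-\alpha)\ge\tau>k+r-1$. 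Thus $s_{\alpha,j_0}$ is recovered from this diagonal sum, and the same argument at $i=\beta+j_0$ and $i=\gamma+j_0$ recovers $s_{\beta,j_0}$ and $s_{\gamma,j_0}$.

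For slope $i=1$ the argument is symmetric: let $e_1<e_2<e_3$ index the erased diagonals and split on the residues of the $e_t$ modulo $\tau$. The zero-collision and one-collision cases are again handled by local repair via \eqref{eq:coeff} (plus Theorem~\ref{thm:reco-tau-slope01} when two residues collide), while the three-collision case now uses the row parity \eqref{eq:slope0} in place of the diagonal parity: the cyclic intervals $[e_t-(k+r-1),e_t]\bmod p\tau$ listing the rows touched by each erased diagonal are pairwise disjoint, since both $e_{t'}-e_t>k+r-1$ (from $e_{t'}-e_t\ge\tau$) and $p\tau-(e_{t'}-e_t)>k+r-1$ (from $e_{t'}-e_t\le(p-1)\tau$) hold under $\tau>k+r-1$. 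Hence each row contains at most one erased symbol, repaired via \eqref{eq:slope0}. The quantitative heart of the whole argument is the inequality $p\tau-(\gamma-\alpha)>k+r-1$ and its slope-$1$ analogue, which isolates a single erasure in each parity equation whenever the three erased indices share a residue.
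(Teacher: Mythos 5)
Your proof is correct and follows essentially the same strategy as the paper's: classify the three erased lines by their residues modulo $\tau$, recover by column-local repair via Eq.~\eqref{eq:coeff} when residues differ, fall back to Theorem~\ref{thm:reco-tau-slope01} when exactly one pair collides, and in the all-collide case use the orthogonal parity (slope-$1$ sums for erased rows, slope-$0$ sums for erased diagonals) together with $\tau>k+r-1$ to isolate a single erasure in each parity equation. Your bookkeeping that the number of colliding pairs is $0$, $1$, or $3$ is in fact slightly more careful than the paper's first case, which nominally overlooks the possibility $\tau\mid(\gamma-\alpha)$ while $\tau\nmid(\beta-\alpha)$ and $\tau\nmid(\gamma-\beta)$.
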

\begin{proof}
Suppose that rows $\alpha$, $\beta$ and $\gamma$ of the array are erased, where $0\leq \alpha <\beta<\gamma\leq p\tau-1$.
We want to recover $s_{\alpha,j}$, $s_{\beta,j}$ and $s_{\gamma,j}$ for $j=0,1,\ldots,k+r-1$ from the other symbols.

According to Eq. \eqref{eq:matrixH2}, we have that
\begin{equation}
\begin{array}{ll}
&\sum_{j=0}^{k+r-1}s_{i,j}=0 \text{ for } i=0,1,\ldots,p\tau-1,\\
&\sum_{j=0}^{k+r-1}s_{i-j,j}=0 \text{ for } i=0,1,\ldots,p\tau-1.
\end{array}
\label{eq:slope012}
\end{equation}
If $\tau \nmid (\beta-\alpha)$ and $\tau \nmid (\gamma-\beta)$, then we can recover $s_{\alpha,j}$, $s_{\beta,j}$ and $s_{\gamma,j}$ by
\begin{equation}
\begin{array}{ll}
s_{\alpha,j}=&\sum_{\ell=1}^{p-1}s_{\ell \tau+\alpha,j},\\
s_{\beta,j}=&\sum_{\ell=1}^{p-1}s_{\ell \tau+\beta,j},\\
s_{\gamma,j}=&\sum_{\ell=1}^{p-1}s_{\ell \tau+\gamma,j},
\end{array}
\label{eq:local-repair012}
\end{equation}
according to Eq. \eqref{eq:coeff}. If $\tau \mid (\beta-\alpha)$ and $\tau \nmid (\gamma-\beta)$, then we can first recover $s_{\gamma,j}$ by Eq. \eqref{eq:local-repair012} and then recover $s_{\alpha,j}$ and $s_{\beta,j}$ by Theorem \ref{thm:reco-tau-slope01}. Similarly, we can recover the erased symbols if $\tau \nmid (\beta-\alpha)$ and $\tau \mid (\gamma-\beta)$.

In the following, we assume that $\tau \mid (\beta-\alpha)$ and $\tau \mid (\gamma-\beta)$.
By Eq. \eqref{eq:slope012}, we have
\begin{align*}
&s_{\alpha+j,0}+s_{\alpha+j-1,1}+\ldots+s_{\alpha-(k+r-1)+j,k+r-1}=0,\\
&s_{\beta+j,0}+s_{\beta+j-1,1}+\ldots+s_{\beta-(k+r-1)+j,k+r-1}=0,\\
&s_{\gamma+j,0}+s_{\gamma+j-1,1}+\ldots+s_{\gamma-(k+r-1)+j,k+r-1}=0,
\end{align*}
where $j=0,1,\ldots,k+r-1$.
As $\tau>k+r-1$, we have
\begin{align*}
&\alpha +j<\beta, 0<\alpha-(k+r-1)+j \text{ or } \\
&0>\alpha-(k+r-1)+j \text{ and } p\tau+\alpha-(k+r-1)+j>\gamma,\\
&\alpha<\beta-(k+r-1)+j \text{ and } \gamma>\beta+j,\\
&\beta<\gamma-(k+r-1)+j \text{ and } p\tau+\alpha>\gamma+j,
\end{align*}
for $j=0,1,\ldots,k+r-1$
and we can recover $s_{\alpha,j},s_{\beta,j},s_{\gamma,j}$ by
\begin{align*}
s_{\alpha,j}=&\sum_{i=0,i\neq j}^{k+r-1}s_{\alpha+j-i,i},\\
s_{\beta,j}=&\sum_{i=0,i\neq j}^{k+r-1}s_{\beta+j-i,i},\\
s_{\gamma,j}=&\sum_{i=0,i\neq j}^{k+r-1}s_{\gamma+j-i,i}.
\end{align*}
Similarly, we can recover all $2n$ symbols in any erased three lines of slope 1.
\end{proof}
The recovery of erased lines of slope $i$ in
$\textsf{GEBR}(p,\tau,k,r,q,1)$ for general parameters $\tau$ and $r$
is an open problem and is a subject of future work.

\section{Comparison with LRC and Product Codes}
\label{sec:com}
LRCs \cite{2014A} and product codes \cite{gopalan2017maximally} are two
families of existing codes that can locally repair
any single-symbol failure. The differences between our GEBR codes and the existing two codes
are as follows.

Given $k\alpha$ information symbols, an $(m(k+r),\alpha k,k+r)$ LRC \cite{2014A} creates $r\alpha$ global
parity symbols by encoding all the information symbols, divides
all $(k+r)\alpha$ symbols (including $k\alpha$ information symbols and $r\alpha$
global parity symbols) into $k+r$ groups which are placed into $k+r$ columns and obtains $m-\alpha$ local
parity symbols for each group. Compared with LRC, our GEBR codes have two advantages.
First, each symbol in our GEBR codes can be repaired
by either some symbols in the same column or the symbols along each of $r$ lines of slope,
while a symbol in LRC can only be locally repaired within the group. Second, our codes
have much lower decoding complexity. When there are $r$ column failures, we can formulate
$r$ linear equations for the erased $r$ columns with encoding matrix being Vandermonde matrix
for GEBR codes and solve the erased $r$ columns by the proposed fast LU decoding algorithm.
While the $r$ linear equations corresponding to the $r$ erased columns (groups) for the LRC
are not Vandermonde linear equations, there is no fast decoding algorithm designed for the LRC
when $r$ columns have failed. Moreover, although we can obtain some well-designed LRC that can
recover some $r$ erased lines of a slope, the underlying field size should be large enough.

Note that LRCs with availability \cite{2013Repair,2015Bounds} are special LRCs such that each
symbol can be repaired with multiple disjoint repair groups. However, the construction of
LRCs with availability \cite{2013Repair,2015Bounds} to achieve the known bound on symbol distance require a sufficiently large
field, and therefore incur much more decoding complexity than the proposed codes when some lines
are erased.

A product code with parameters $k,r,\alpha,m$ organizes the $k\alpha$ information symbols
into an $\alpha \times k$ information array, first creates $r$ local parity symbols for each
row and then obtains $m-\alpha$ local parity symbols for each of the $k+r$ columns.
Therefore, any symbol can be recovered by accessing some symbols in the same row or in the
same column, but not in a line of a non-zero slope. Second, product code can only recover
 at most any $m-\alpha$ row failures but our GEBR codes can recover  at most any $\max\{m-\alpha,r\}$ row
failures. Finally, the minimum symbol distance of product code is at most
$(m-\alpha+1)(r+1)$, while the minimum symbol distance of GEBR codes is strictly larger
than $(m-\alpha+1)(r+1)$ for some parameters.

\begin{table*}[tbh]
\caption{Comparison with $(p^2,(p-1)(p-r),p)$ LRCs and product codes with $m=p$, $\alpha=p-1$.}
\begin{center}
\begin{tabular}{|c|c|c|c|c|c|} \hline
Codes  & single-symbol & $r$-column failures & $r$ consecutive-row  & $r$ consecutive-row &minimum symbol  \\
  & failures &  decoding& failures  & failures decoding & distance  \\ \hline
GEBR  &each of $r+1$ lines &fast decoding & yes&fast decoding &$\geq 2r+2$ \\ \hline
LRC  &one line (the same group) &no fast decoding & maybe over large field&no &$(p-1)r+2$ \\ \hline
Product   &each of two lines &fast decoding & no&N/A &$2r+2$ \\ \hline
\end{tabular}
\end{center}
\label{table:com}
\end{table*}

Table \ref{table:com} shows the comparison of our GEBR codes, LRCs and product codes, when
$m=p$ and $\alpha=p-1$. It is easy to check that the three codes have the same storage overhead.
When any single-symbol fails, GEBR codes have $r+1$ disjoint repair groups, while LRCs and product codes
have one and two repair groups, respectively. In decoding $r$-column failures, both GEBR codes and
product codes have fast decoding algorithm, there is no fast decoding algorithm for LRCs.
We can decode any $r$ consecutive-row failures for GEBR codes by the fast LU decoding algorithm.
Although it is possible to design LRCs over a large finite field to recover any $r$
consecutive-row failures, there is no fast decoding algorithm for general parameters.
LRCs have the largest minimum symbol distance among the three codes. Compared with LRCs,
GEBR codes can be viewed as codes with larger recoverability for single-symbol failures,
multi-column failures and multi-row failures, at a cost of minimum symbol distance reduction.
When compared with product codes, GEBR codes not only have larger recoverability, but also
possible have larger minimum symbol distance for some parameters.

Consider the code $\textsf{GEBR}(p=11,\tau=1,k=7,r=4,q,1)$ in Example \ref{alg:lu},
we have $k(p-1)=70$ data symbols and $p^2-k(p-1)=51$ local parity symbols. Each symbol has
$r+1=5$ disjoint repair groups. We can recover
any $r=4$ erased lines $e_1,e_2,e_3,e_4$ with $0\leq e_1<e_2<e_3<e_4\leq 10$ of slope
$i$ for $0\leq i\leq 3$. We can also recover any four column failures by the fast LU
decoding algorithm. While for the product code with the same parameters, we can only
recover any symbol by two disjoint repair groups, and we can not recover any four erased
lines.

\section{Conclusion}
\label{sec:con}
In this paper, we propose a coding method of array codes that has local repair property.
We present the constructions of GEBR codes and GEIP codes based on the proposed coding method that can support much more parameters, compared with EBR codes and EIP codes, respectively.
We propose an efficient LU decoding method for GEBR codes and GEIP codes based on the LU factorization of Vandermonde matrix.
When $\tau$ is large, we show that GEBR codes have both larger minimum symbol distance and larger recovery ability of erased lines for some parameters, compared with EBR codes.
The $(n,k)$ recoverable condition of GEBR codes for general $g(x)$ is one of our future work.
How to propose a coding framework to unify GEBR codes, LRCs, and product codes is another future work.
It is also interesting to explore some good properties by replacing each column
of the proposed codes with regenerating codes.

\appendices
\section{Proof of Lemma \ref{lm:div}}
\label{pr:lm-div}
We first show that $r_0=\sum_{u=1}^{\frac{p-1}{2}}\sum_{\ell=1}^{\tau}f_{(2u-1)\tau b+\ell b}$.
According to Eq. \eqref{eq:div}, we have
\begin{equation}
r_{\tau b+\ell b }=r_{\tau b+(\ell-1)b}+f_{\tau b+\ell b},
\label{eq:lm-div1}
\end{equation}
where $\ell=0,1,\ldots,p\tau-1$. Summing both sides of Eq. \eqref{eq:lm-div1} from $\ell=0$ to $\ell=(i-1)\tau$, we have
\begin{equation}
r_{i b\tau}=r_{\tau b-b}+\sum_{\ell=0}^{(i-1)\tau}f_{\tau b+\ell b},
\label{eq:lm-div2}
\end{equation}
where $i=1,2,\ldots,p-1$. Summing both sides of Eq. \eqref{eq:lm-div2} from $i=1$ to $i=p-1$, we have
\begin{equation}
\sum_{i=1}^{p-1}r_{i b\tau}=\sum_{i=1}^{p-1}r_{i\tau}=(p-1)r_{\tau b-b}+\sum_{i=1}^{p-1}\sum_{\ell=0}^{(i-1)\tau}f_{\tau b+\ell b},
\label{eq:lm-div3}
\end{equation}
where the first equation above comes from that $\gcd (b,p)=1$.
By Eq. \eqref{lm:coeff} in Lemma \ref{lm:coeff}, we have $\sum_{i=1}^{p-1}r_{i\tau}=r_0$.
Since $p$ is an odd prime number, we have $(p-1)r_{\tau b-b}=0$. We can compute $\sum_{i=1}^{p-1}\sum_{\ell=0}^{(i-1)\tau}f_{\tau b+\ell b}$ as
\begin{align*}
&\sum_{i=1}^{p-1}\sum_{\ell=0}^{(i-1)\tau}f_{\tau b+\ell b}=(p-1)f_{\tau b}+(p-2)\sum_{\ell=1}^{\tau}f_{\tau b+\ell b}+\\
&(p-3)\sum_{\ell=1}^{\tau}f_{2\tau b+\ell b}+\cdots+2\sum_{\ell=1}^{\tau}f_{(p-3)\tau b+\ell b}+\sum_{\ell=1}^{\tau}f_{(p-2)\tau b+\ell b}\\
=&\sum_{u=1}^{\frac{p-1}{2}}\sum_{\ell=1}^{\tau}f_{(2u-1)\tau b+\ell b}.
\end{align*}
Therefore, we obtain that $r_0=\sum_{u=1}^{\frac{p-1}{2}}\sum_{\ell=1}^{\tau}f_{(2u-1)\tau b+\ell b}$. Similarly, we can show that Eq. \eqref{eq:div1} holds for $j=0,1,\ldots,a-1$.
Once $r_0$ is known, we can compute other $\frac{p\tau}{a}-1$ coefficients recursively by Eq. \eqref{eq:div2} with $j=0$ and $\ell=1,2,\ldots,\frac{p\tau}{a}-1$. Similarly, we can compute $\frac{p\tau}{a}-1$ coefficients recursively by Eq. \eqref{eq:div2} with $\ell=1,2,\ldots,\frac{p\tau}{a}-1$ for $j=0,1,\ldots,a-1$, after solving $r_j$.

Next, we need to show that the solved $r(x)$ is in $\mathcal{C}_{p\tau}(g(x),\tau,q,d)$, i.e., $g(x)(1+x^{\tau})$ divides $r(x)$. First, $(1+x^{\tau})$ divides $r(x)$, as we can show that $\sum_{\ell=0}^{p-1}r_{\ell\tau+\mu}=0$ for $\mu=0,1,\ldots,\tau-1$. Second, since $g(x)$ divides $f(x)$, if $\gcd (1+x^b,g(x))=1$, then $g(x)$ divides $r(x)$. As $\gcd (g(x),1+x)=1$ and $\gcd (p,b)=1$, we have that $\gcd (1+x^b,g(x))=1$ by Lemma 19 in \cite{BR2019}. Therefore, $g(x)(1+x^{\tau})$ divides $r(x)$ and the lemma is proved.

\section{Proof of Lemma \ref{lm:div2}}
\label{pr:lm-div2}
Since $\gcd (b,m)=\gcd (up^s,p^{\nu+1})=p^s$ and $\gcd (u,p)=1$,
we have that $\gcd (u,p^{\nu+1})=1$.
In the following, we show that
\begin{eqnarray}
&&\{0,up^s,2up^s,\cdots,u(p^{\nu+1}-2p^s)\}\bmod p^{\nu+1}\nonumber\\
&=&\{0,p^s,2p^s,\cdots,p^{\nu+1}-2p^s\}.
\label{eq:lm-div4}
\end{eqnarray}
First, we prove that if $i\neq j\in\{0,p^s,2p^s,\cdots,p^{\nu+1}-2p^s\}$,
then $u\cdot i\neq u\cdot j \bmod p^{\nu+1}$. Suppose that
$u\cdot i= u\cdot j \bmod p^{\nu+1}$, then there exists an
integer $\ell$ such that
\[
u\cdot i= u\cdot j +\ell p^{\nu+1},
\]
and we can further obtain that
\[
u\cdot (i-j)= \ell p^{\nu+1}.
\]
Since $\gcd (u,p^{\nu+1})=1$, we have $p^{\nu+1}\mid (i-j)$,
which contradicts to that $i\neq j\in\{0,p^s,2p^s,\cdots,p^{\nu+1}-2p^s\}$.
Similarly, we can show that
\[
u\cdot i\neq  p^{\nu+1}-p^s.
\]
Therefore, Eq. \eqref{eq:lm-div4} holds.
According to Eq. \eqref{eq:div}, we have
\begin{equation}
f_{2iup^s+up^s}=r_{2iup^s+up^s}+r_{2iup^s},
\label{eq:lm-div21}
\end{equation}
where $i=0,1,\ldots,p^{\nu-s+1}-1$.
Summing both sides of Eq.~\eqref{eq:lm-div21} from $i=0$ to $i=\frac{p^{\nu-s+1}-3}{2}$, we have
\begin{eqnarray}
\sum_{i=0}^{\frac{p^{\nu-s+1}-3}{2}}f_{2iup^s+up^s}
&=&\sum_{i=0}^{\frac{p^{\nu-s+1}-3}{2}}(r_{2iup^s+up^s}+
r_{2iup^s})\nonumber \\
&=&\sum_{i=0}^{p^{\nu-s+1}-2}r_{iup^s}\nonumber \\
&=&\sum_{i=0}^{p^{\nu-s+1}-2}r_{ip^s}\label{eq:lm-div-p1} \\
&=&r_{(p^{\nu-s+1}-1)p^s}=r_{p^{\nu+1}-p^s},
\label{eq:lm-div22}
\end{eqnarray}
where Eq. \eqref{eq:lm-div-p1} comes from Eq. \eqref{eq:lm-div4},
Eq. \eqref{eq:lm-div22} comes from that
\begin{align*}
&\{0,p^s,2p^s,\cdots,p^{\nu+1}-2p^s\}=\\
&\left\{\!\!
\begin{array}{l}
\{0,p^{\nu},\cdots,(p-1)p^{\nu}\}\cup
\{p^s,p^{\nu}+p^s,\cdots,(p-1)p^{\nu}+p^s\}\\
\cup \cdots \cup \{2p^{\nu}-p^s,3p^{\nu}-p^s,\cdots,(p-1)p^{\nu}-p^s\}, \text{ if } \nu>s,\\
\{0,p^{\nu},2p^{\nu},\cdots,(p-2)p^{\nu}\},  \text{ if } \nu=s.
\end{array}
\right.
\end{align*}
Similarly, we can show that Eq. \eqref{eq:div21} holds for
$j=0,1,\ldots,m-1$.
Once $r_{p^{\nu+1}-p^s+j}$ for $j=0,1,\ldots,p^s-1$ are known, we can compute the other coefficients recursively.

Recall that $\sum_{i=0}^{\frac{p^{\nu-s+1}-3}{2}}f_{2iup^s+up^s+j}
=r_{p^{\nu+1}-p^s+j}$
for $j=0,1,\ldots,m-1$ by Eq. \eqref{eq:lm-div22}, we have
\[
r_{i}=f_{up^s+p^s+i}+f_{3up^s+p^s+i}+\cdots+f_{(p^{\nu-s+1}-2)up^{s}+p^s+i}
\]
for $i=0,1,\ldots,m$. Recall that the indices are taken modulo $m=p^{\nu+1}$.
We have
\[
r(x)=(x^{p^{\nu+1}-up^s-p^s}+x^{p^{\nu+1}-3up^s-p^s}+\cdots+x^{2up^s-p^s})f(x).
\]
Since $f(x)\in\mathcal{C}_{p\tau}(g(x),\tau,q,d)$, we have that
$r(x)\in\mathcal{C}_{p\tau}(g(x),\tau,q,d)$ and the lemma is proved.

\bibliographystyle{IEEEtran}

\begin{thebibliography}{10}
\providecommand{\url}[1]{#1}
\csname url@samestyle\endcsname
\providecommand{\newblock}{\relax}
\providecommand{\bibinfo}[2]{#2}
\providecommand{\BIBentrySTDinterwordspacing}{\spaceskip=0pt\relax}
\providecommand{\BIBentryALTinterwordstretchfactor}{4}
\providecommand{\BIBentryALTinterwordspacing}{\spaceskip=\fontdimen2\font plus
\BIBentryALTinterwordstretchfactor\fontdimen3\font minus
  \fontdimen4\font\relax}
\providecommand{\BIBforeignlanguage}[2]{{%
\expandafter\ifx\csname l@#1\endcsname\relax
\typeout{** WARNING: IEEEtran.bst: No hyphenation pattern has been}%
\typeout{** loaded for the language `#1'. Using the pattern for}%
\typeout{** the default language instead.}%
\else
\language=\csname l@#1\endcsname
\fi
#2}}
\providecommand{\BIBdecl}{\relax}
\BIBdecl

\bibitem{wu2020}
W.~You, H.~Hou, Y.~S. Han, P.~P.~C. Lee, and G.~Han, ``{Generalized
  Expanded-Blaum-Roth Codes and Their Efficient Encoding/Decoding},'' in
  \emph{Proc. IEEE GLOBECOM}, 2020, pp. 1--6.

\bibitem{BR2019}
M.~{Blaum} and S.~R. {Hetzler}, ``{Array Codes with Local Properties},''
  \emph{IEEE Trans. Information Theory}, vol.~66, no.~6, pp. 3675--3690, 2020.

\bibitem{Patterson1989}
D.~A. {Patterson}, P.~{Chen}, G.~{Gibson}, and R.~H. {Katz}, ``{Introduction to
  Redundant Arrays of Inexpensive Disks (RAID)},'' in \emph{{Digest of Papers.
  COMPCON Spring 89. Thirty-Fourth IEEE Computer Society International
  Conference: Intellectual Leverage}}, 1989, pp. 112--117.

\bibitem{Blaum1995}
M.~{Blaum}, J.~{Brady}, J.~{Bruck}, and {Jai Menon}, ``{EVENODD: An Efficient
  Scheme for Tolerating Double Disk Failures in RAID Architectures},''
  \emph{IEEE Trans. on Computers}, vol.~44, no.~2, pp. 192--202, 1995.

\bibitem{Hou2018}
H.~{Hou} and P.~P.~C. {Lee}, ``{A New Construction of EVENODD Codes With Lower
  Computational Complexity},'' \emph{IEEE Communications Letters}, vol.~22,
  no.~6, pp. 1120--1123, 2018.

\bibitem{Corbett2004Ro}
P.~Corbett, B.~English, A.~Goel, T.~Grcanac, S.~Kleiman, J.~Leong, and
  S.~Sankar, ``{Row-Diagonal Parity for Double Disk Failure Correction},'' in
  \emph{Proceedings of the 3rd USENIX Conference on File and Storage
  Technologies}.\hskip 1em plus 0.5em minus 0.4em\relax San Francisco, CA,
  2004, pp. 1--14.

\bibitem{huang2008star}
C.~{Huang} and L.~{Xu}, ``{STAR: An Efficient Coding Scheme for Correcting
  Triple Storage Node Failures},'' \emph{IEEE Trans. on Computers}, vol.~57,
  no.~7, pp. 889--901, 2008.

\bibitem{hou2021star}
H.~{Hou} and P.~P.~C. {Lee}, ``{STAR+ Codes: Triple-Fault-Tolerant Codes with
  Asymptotically Optimal Updates and Efficient Encoding/Decoding},'' in
  \emph{Proceedings of the 2021 IEEE Information Theory Workshop (ITW 2021)},
  2021.

\bibitem{Hou2017}
H.~{Hou}, P.~P.~C. {Lee}, Y.~S. {Han}, and Y.~{Hu}, ``{Triple-Fault-Tolerant
  Binary MDS Array Codes with Asymptotically Optimal Repair},'' in \emph{2017
  IEEE International Symposium on Information Theory (ISIT)}, 2017, pp.
  839--843.

\bibitem{Blaum2006A}
M.~{Blaum}, ``{A Family of MDS Array Codes with Minimal Number of Encoding
  Operations},'' in \emph{IEEE International Symposium on Information Theory},
  2006.

\bibitem{blaum2002evenodd}
M.~{Blaum}, J.~{Brady}, J.~{Bruck}, J.~{Jai Menon}, and A.~{Vardy}, ``{The
  {EVENODD} Code and its Generalization: An Effcient Scheme for Tolerating
  Multiple Disk Failures in {RAID} Architectures},'' in \emph{High Performance
  Mass Storage and Parallel I/O}.\hskip 1em plus 0.5em minus 0.4em\relax
  Wiley-IEEE Press, 2002, ch.~8, pp. 187--208.

\bibitem{Blaum1993}
{M. {Blaum} and R. M. {Roth}}, ``{New Array Codes for Multiple Phased Burst
  Correction},'' \emph{IEEE Trans. Information Theory}, vol.~39, no.~1, pp.
  66--77, 1993.

\bibitem{hou2014}
H.~Hou, K.~W. Shum, M.~Chen, and H.~Li, ``{New MDS Array Code Correcting
  Multiple Disk Failures},'' in \emph{Proc. IEEE GLOBECOM}, 2014, pp.
  2369--2374.

\bibitem{feng2005}
G.~L. {Feng}, R.~H. Deng, F.~Bao, and J.-C. Shen, ``{New Efficient MDS Array
  Codes for RAID. Part II. Rabin-Like Codes for Tolerating Multiple ($\ge$ 4)
  Disk Failures},'' \emph{IEEE Trans. on Computers}, vol.~54, no.~12, pp.
  1473--1483, 2005.

\bibitem{hou2018a}
H.~{Hou} and Y.~S. {Han}, ``{A New Construction and an Efficient Decoding
  Method for Rabin-Like Codes},'' \emph{IEEE Trans. Communications}, vol.~66,
  no.~2, pp. 521--533, 2018.

\bibitem{Blaum2019}
M.~{Blaum}, V.~{Deenadhayalan}, and S.~{Hetzler}, ``{Expanded Blaum-Roth Codes
  With Efficient Encoding and Decoding Algorithms},'' \emph{IEEE Communications
  Letters}, vol.~23, no.~6, pp. 954--957, 2019.

\bibitem{2014A}
I.~Tamo and A.~Barg, ``{A Family of Optimal Locally Recoverable Codes},''
  \emph{IEEE Trans. Information Theory}, vol.~60, no.~8, pp. 4661--4676, 2014.

\bibitem{2012On}
P.~Gopalan, C.~Huang, H.~Simitci, and S.~Yekhanin, ``{On the Locality of
  Codeword Symbols},'' \emph{IEEE Trans. Information Theory}, vol.~58, no.~11,
  pp. 6925--6934, 2012.

\bibitem{huang2013pyramid}
C.~Huang, M.~Chen, and J.~Li, ``{Pyramid Codes: Flexible Schemes to Trade Space
  for Access Efficiency in Reliable Data Storage Systems},'' \emph{ACM
  Transactions on Storage (TOS)}, vol.~9, no.~1, pp. 1--28, 2013.

\bibitem{sathiamoorthy2013}
M.~Sathiamoorthy, M.~Asteris, D.~Papailiopoulos, A.~G. Dimakis, R.~Vadali,
  S.~Chen, and D.~Borthakur, ``{XORing Elephants: Novel Erasure Codes for Big
  Data},'' in \emph{Proceedings of the 39th international conference on Very
  Large Data Bases}.\hskip 1em plus 0.5em minus 0.4em\relax VLDB Endowment,
  2013, pp. 325--336.

\bibitem{gopalan2017maximally}
P.~Gopalan, G.~Hu, S.~Kopparty, S.~Saraf, C.~Wang, and S.~Yekhanin,
  ``{Maximally Recoverable Codes for Grid-like Topologies},'' in
  \emph{Proceedings of the Twenty-Eighth Annual ACM-SIAM Symposium on Discrete
  Algorithms}, 2017, pp. 2092--2108.

\bibitem{rawat2016locality}
A.~S. Rawat, D.~S. Papailiopoulos, A.~G. Dimakis, and S.~Vishwanath,
  ``{Locality and Availability in Distributed Storage},'' \emph{IEEE Trans.
  Information Theory}, vol.~62, no.~8, pp. 4481--4493, 2016.

\bibitem{barg2017locally}
A.~Barg, I.~Tamo, and S.~Vl{\u{a}}du{\c{t}}, ``{Locally Recoverable Codes on
  Algebraic Curves},'' \emph{IEEE Trans. Information Theory}, vol.~63, no.~8,
  pp. 4928--4939, 2017.

\bibitem{kong2021new}
X.~Kong, X.~Wang, and G.~Ge, ``{New Constructions of Optimal Locally Repairable
  Codes with Super-Linear Length},'' \emph{IEEE Trans. Information Theory},
  vol.~67, no.~10, pp. 6491--6506, 2021.

\bibitem{HOU2019}
H.~{Hou}, Y.~S. {Han}, P.~P.~C. {Lee}, Y.~{Hu}, and H.~{Li}, ``{A New Design of
  Binary MDS Array Codes with Asymptotically Weak-Optimal Repair},'' \emph{IEEE
  Trans. Information Theory}, vol.~65, no.~11, pp. 7095¨C--7113, 2019.

\bibitem{hou2019new}
H.~{Hou}, Y.~S. {Han}, P.~P.~C. {Lee}, and Q.~Zhou, ``{New Regenerating Codes
  over Binary Cyclic Codes},'' in \emph{2019 IEEE International Symposium on
  Information Theory (ISIT)}, 2019, pp. 216--220.

\bibitem{blaum1996mds}
M.~Blaum, J.~Bruck, and A.~Vardy, ``{MDS Array Codes with Independent Parity
  Symbols},'' \emph{IEEE Trans. Information Theory}, vol.~42, no.~2, pp.
  529--542, 1996.

\bibitem{hou2016}
H.~{Hou}, K.~W. {Shum}., M.~Chen, and H.~{Li}, ``{BASIC Codes: Low-Complexity
  Regenerating Codes for Distributed Storage Systems},'' \emph{IEEE Trans.
  Information Theory}, vol.~62, no.~6, pp. 3053--3069, 2016.

\bibitem{Hou2018form}
H.~{Hou}, Y.~S. {Han}, K.~W. {Shum}, and H.~{Li}, ``{A Unified Form of EVENODD
  and RDP Codes and Their Efficient Decoding},'' \emph{IEEE Trans.
  Communications}, vol.~66, no.~11, pp. 5053--5066, 2018.

\bibitem{Itoh1991Characterization}
T.~Itoh, ``{Characterization for a Family of Infinitely Many Irreducible
  Equally Spaced Polynomials},'' \emph{Information Processing Letters},
  vol.~37, no.~5, pp. 273--277, 1991.

\bibitem{YANG2005On}
S.-L. Yang, ``{On The LU factorization of The Vandermonde Matrix},''
  \emph{Discrete Applied Mathematics}, vol. 146, no.~1, pp. 102--105, 2005.

\bibitem{Hou2016On}
H.~Hou, K.~W. Shum, and H.~Li, ``{On the MDS Condition of Blaum-Bruck-Vardy
  Codes With Large Number Parity Columns},'' \emph{IEEE Communications
  Letters}, vol.~20, no.~4, pp. 644--647, 2016.

\bibitem{macwilliams1977}
F.~J. MacWilliams and N.~J.~A. Sloane, \emph{{The Theory of Error-Correcting
  Codes}}.\hskip 1em plus 0.5em minus 0.4em\relax Elsevier, 1977, vol.~16.

\bibitem{2013Repair}
A.~Wang and Z.~Zhang, ``{Repair Locality with Multiple Erasure Tolerance},''
  \emph{IEEE Trans. Information Theory}, vol.~60, no.~11, pp. 6979--6987, 2013.

\bibitem{2015Bounds}
I.~Tamo, A.~Barg, and A.~Frolov, ``{Bounds on the Parameters of Locally
  Recoverable Codes},'' \emph{IEEE Trans. Information Theory}, vol.~62, no.~6,
  pp. 3070--3083, 2016.

\end{thebibliography}

\vspace{-7mm}
\begin{IEEEbiographynophoto}{Hanxu Hou} received the B.Eng. degree in
Information Security from Xidian University, Xian, China, in 2010, and Ph.D. degrees
in the Dept. of Information Engineering from the Chinese University of Hong Kong
in 2015 and in the School of Electronic and Computer Engineering from
Peking University in 2016. He is now an Associate Professor with Dongguan University of
Technology. He was a recipient of the 2020 Chinese Information Theory Young Rising Star
Award by China Information Theory Society. He was recognized as an Exemplary Reviewer
2020 in IEEE Transactions on Communications. His research interests include
erasure coding and coding for distributed storage systems.
\end{IEEEbiographynophoto}

\begin{IEEEbiographynophoto}{Yunghsiang S. Han}
(S'90-M'93-SM'08-F'11) was born in Taipei, Taiwan, 1962. He received B.Sc. and M.Sc. degrees in electrical engineering from the National Tsing Hua University, Hsinchu, Taiwan, in 1984 and 1986, respectively, and a Ph.D. degree from the School of Computer and Information Science, Syracuse University, Syracuse, NY, in 1993. He was from 1986 to 1988 a lecturer at Ming-Hsin Engineering College, Hsinchu, Taiwan. He was a teaching assistant from 1989 to 1992, and a research associate in the School of Computer and Information Science, Syracuse University from 1992 to 1993. He was, from 1993 to 1997, an Associate Professor in the Department of Electronic Engineering at Hua Fan College of Humanities and Technology, Taipei Hsien, Taiwan. He was with the Department of Computer Science and Information Engineering at National Chi Nan University, Nantou, Taiwan from 1997 to 2004. He was promoted to Professor in 1998. He was a visiting scholar in the Department of Electrical Engineering at University of Hawaii at Manoa, HI from June to October 2001, the SUPRIA visiting research scholar in the Department of Electrical Engineering and Computer Science and CASE center at Syracuse University, NY from September 2002 to January 2004 and July 2012 to June 2013, and the visiting scholar in the Department of Electrical and Computer Engineering at University of Texas at Austin, TX from August 2008 to June 2009. He was with the Graduate Institute of Communication Engineering at National Taipei University, Taipei, Taiwan from August 2004 to July 2010. From August 2010 to January 2017, he was with the Department of Electrical Engineering at National Taiwan University of Science and Technology as Chair Professor. From February 2017 to February 2021, he was with School of Electrical Engineering \& Intelligentization at Dongguan University of Technology, China. Now he is with the Shenzhen Institute for Advanced Study, University of Electronic Science and Technology of China. He is also a Chair Professor at National Taipei University from February 2015. His research interests are in error-control coding, wireless networks, and security.

Dr. Han was a winner of the 1994 Syracuse University Doctoral Prize and a Fellow of IEEE. One of his papers won the prestigious 2013 ACM CCS Test-of-Time Award in cybersecurity.
\end{IEEEbiographynophoto}

\begin{IEEEbiographynophoto}{Patrick P. C. Lee} received the B.Eng. degree (first class honors) in Information Engineering from the Chinese University of Hong Kong in 2001, the M.Phil.
degree in Computer Science and Engineering from the Chinese University
of Hong Kong in 2003, and the Ph.D. degree in Computer Science from
Columbia University in 2008. He is now a Professor of the
Department of Computer Science and Engineering at the Chinese University
of Hong Kong. His research interests are in various applied/systems topics
including storage systems, distributed systems and networks, operating systems, dependability, and security.
\end{IEEEbiographynophoto}

\begin{IEEEbiographynophoto}{You Wu}
received the M.Phil. degree from Guangdong University of Technology in 2021. She is now
in Beijing Didi Infinity Technology and Development Co., Ltd. Her research interests
include the coding for distributed storage systems.
\end{IEEEbiographynophoto}

\begin{IEEEbiographynophoto}{Guojun Han}
received the M.E. degree from South China University of Technology, Guangzhou, China, and
the Ph.D. degree from Sun Yatsen University, Guangzhou, China. From March 2011 to August 2013,
he was a Research Fellow at the School of Electrical and Electronic Engineering,
Nanyang Technological University, Singapore. From October 2013 to April 2014, he was a
Research Associate at the Department of Electrical and Electronic Engineering, Hong Kong
University of Science and Technology. He is now a Full Professor and Executive Dean at
the School of Information Engineering, Guangdong University of Technology, Guangzhou, China.

He has been a Senior Member of IEEE since 2014. His research interests are in the areas of wireless communications, signal processing, coding and information theory. He has more than 15 years¡¯ experience on research and development of advanced channel coding and signal processing algorithms and techniques for various data storage and communication systems.
\end{IEEEbiographynophoto}

\begin{IEEEbiographynophoto}{Mario Blaum}
(Life Fellow, IEEE) was born in Buenos Aires, Argentina.
He received the Licenciado degree from the University of Buenos Aires in
1977, the M.Sc. degree from the Technion¡ªIsrael Institute of Technology
in 1981, and the Ph.D. degree from the California Institute of Technology
(Caltech) in 1984, all in mathematics.

In 1985, he was a Research Fellow at the Department of Electrical
Engineering, Caltech. In 1985, he joined the IBM Research Division, Almaden
Research Center. In 2003, his division was transferred to Hitachi Global
Storage Technologies, where he was a Research Staff Member until 2009,
in which he rejoined the IBM Almaden Research Center. Since 2001, he has
been an Academic Advisor at the Universidad Complutense of Madrid, Spain.
He retired in 2021. His research interest includes all aspects of coding for
storage technology.
\end{IEEEbiographynophoto}

\end{document}